\renewcommand{\vec}[1]{\ensuremath{\textbf{#1}}\xspace}
\newcommand{\ALCN}{\ensuremath{\mathcal{ALCN}}\xspace}
\newcommand{\SQ}{\ensuremath{\mathcal{SQ}}\xspace}
\newcommand{\SOQ}{\ensuremath{\mathcal{SOQ}}\xspace}
\newcommand{\nb}[1]{\textcolor{red}{$\blacktriangleright$}\footnote{\textcolor{blue}{#1}}}
\newcommand{\nbmute}[1]{\textcolor{red}{(!)}}
\newcommand{\roots}{\ensuremath{F_r}\xspace}
\newcommand{\Rone}{\ensuremath{\mathbf{R_1}}\xspace}
\newcommand{\Rtwo}{\ensuremath{\mathbf{R_2}}\xspace}
\newcommand{\Rthree}{\ensuremath{\mathbf{R_3}}\xspace}
\newcommand{\qnrleq}[3]{\ensuremath{(\leqslant #1 \ #2 . #3)}}
\newcommand{\qnrgeq}[3]{\ensuremath{(\geqslant #1 \ #2 . #3)}}
\newcommand{\qnrsim}[3]{\ensuremath{(\sim#1 \ #2 . #3)}}
\newcommand{\qnreq}[3]{\ensuremath{(= #1 \ #2 . #3)}}
\begin{document}

\title{Answering Regular  Path Queries over \SQ Ontologies}
%
\author{Paper ID 2844}
 \author{V\'ictor Guti\'errez-Basulto \\ Cardiff University, UK\\ gutierrezbasultov@cardiff.ac.uk \\
  \And Yazm\'in Ib\'a\~nez-Garc\'ia\\TU Wien, Austria\\ yazmin.garcia@tuwien.ac.at\\
  \And Jean Christoph Jung\\ Universit\"at Bremen, Germany\\ jeanjung@uni-bremen.de}
 \setcounter{secnumdepth}{2}

\newtheorem{example}{Example}
\newtheorem{definition}{Definition}
\newtheorem{theorem}{Theorem}
\newtheorem{lemma}{Lemma}
\newtheorem{proposition}{Proposition}
\newtheorem{claim}{Claim}
\newtheorem{corollary}{Corollary}
\newtheorem{remark}{Remark}
\newtheorem{conjecture}{Conjecture}
\newtheorem{notation}{Notation}

\maketitle

\begin{abstract}
  We study query answering in the description logic \SQ supporting
  qualified number restrictions on both transitive and non-transitive
  roles. Our main contributions are a tree-like model property for \SQ
  knowledge bases and, building upon this, an optimal automata-based
  algorithm  for answering positive existential regular path
  queries in \TwoExpTime.
%
\end{abstract}


\section{Introduction}\label{sec:intro} The use of ontologies to access data has gained
a lot of popularity  in  various research fields such as knowledge
representation and reasoning, and databases. 
In the ontology-based data access (OBDA)  scenario, ontologies are
often encoded using description logic languages (DLs); as a
consequence, a large amount of research on  the query answering
problem (QA) over DL ontologies  has been conducted. 
In particular,  several
efforts have been put into the study of the query answering problem in
DLs featuring \emph{transitive roles} and \emph{number restrictions}
\cite{GlimmHS08,GlimmLHS08,EiterLOS09,CalvaneseEO09,CalvaneseEO14}.
However, in all these works  
the application of number restrictions to transitive
roles is forbidden. 
This is also reflected in the fact that the W3C ontology language OWL 2
does not allow for this
interaction.\footnote{https://www.w3.org/TR/webont-req/} Unfortunately, this
comes as a shortcoming in crucial DL application areas like medicine
and biology in which many terms   are defined and
classified according to the number of components they \mn{contain} or
 \mn{have} as a  \mn{part}, in a transitive
sense~\cite{Wolstencroft2005,RectorR06,StevensAWSDHR07}. For instance,
 the ontology  $\Tmc$ below  describes that the human heart  has as a part (\mn{hPt}) exactly one mitral valve (\mn{MV}), a left atrium (\mn{LA})
 and a left ventricle (\mn{LV}); and the latter two (enforced to be
 distinct)  also have as a part a  mitral valve. Thus, the left atrium and left ventricle have to share the mitral valve. 
\begin{align*} \Tmc=\{ \ \mn{Heart} \sqsubseteq (= 1  \ \mn{hPt} .  \mn{MV})    \sqcap \exists \mn{hPt}. \mn{LA}
    \sqcap   \exists \mn{hPt.} \mn{LV},& \\
  \mn{LV}\sqcap \mn{LA} \sqsubseteq \bot, \quad \mn{LV} \sqsubseteq
  \exists \mn{hPt}. \mn{MV}, \quad  \mn{LA} \sqsubseteq \exists
  \mn{hPt}. \mn{MV} \ \}.&
  \end{align*}
The lack of investigations of query answering in DLs of this kind
is partly because  $(i)$~the interaction of these features with other 
traditional constructors often
leads to undecidability of the standard reasoning tasks (e.g.,
satisfiability)~\cite{HorrocksST00};  and
$(ii)$~for those DLs known to be decidable, such as \SQ and \SOQ
~\cite{KazakovSZ07,KaminskiS10}, only recently  tight complexity
bounds were obtained~\cite{GuIbJu-AAAI17}. Moreover, 
these features, even with restricted interaction, pose additional challenges for devising
decision procedures 
since they
lead to the loss of properties, such as the tree model property,
which make the design of algorithms for QA simpler. 
Clearly, these issues
are exacerbated if number restrictions are imposed on transitive
roles. 
 
Traditionally, most of the research in OBDA has focused on answering
conjunctive queries. However, \emph{navigational} queries have
recently gained a lot of attention~\cite{StefanoniMKR14,BienvenuOS15,BagetBMT17}
since they are key in various applications. For instance, in
biomedicine they are used to retrieve specific paths from protein,
cellular and disease networks~\cite{Dogrusoz2009,Lysenko2016}. A
prominent class of navigational queries is that of \emph{regular path
queries}~\cite{FlorescuLS98}, where paths are specified by a
regular expression. Indeed, motivated by applications in the semantic
web,  the latest W3C standard SPARQL 1.1   includes property paths, related
to regular expressions.  

The objective of this paper is to start the research on query
answering in DLs supporting qualified number restrictions over
transitive roles. We study the entailment problem of \emph{positive
existential two-way regular path queries}~\cite{CalvaneseGLV00} over
\SQ ontologies, thus generalizing both conjunctive and regular path
queries.  To this end, we pursue an \emph{automata-based approach for
query answering} using two-way  alternating  tree automata
(2ATA)~\cite{Vardi98}. This roughly consists of three
steps~\cite{CalvaneseEO14}: ($i$) show that, if a query $\varphi$ is
not entailed by the knowledge base \Kmc, there is a \emph{tree-like}
interpretation witnessing this, ($ii$) devise an automaton $\Amf_\Kmc$
which accepts precisely the tree-like interpretations of $\Kmc$,
($iii$) devise an automaton $\Amf_\varphi$ which accepts a tree-like
interpretation iff it satisfies $\varphi$.  Query entailment is then
reduced to the question whether $\Amf_{\Kmc}$ accepts a tree that is
not accepted by $\Amf_{\vp}$. In this paper, we significantly adapt
and extend each step to \SQ, resulting in an algorithm running in
\TwoExpTime, even for \emph{binary} coding of numbers. A matching
lower bound follows from positive existential QA in
\ALC~\cite{CalvaneseEO14}.  More precisely, for step~($i$) we develop
the notion of \emph{canonical tree decompositions} which intuitively
are tree decompositions tailored to handle the interaction of
transitivity and number restrictions. We then show via a novel
unraveling operation for \SQ that, if the query is not entailed, there
is a witness interpretation which has a canonical tree decomposition
of width bounded exponentially in the size of $\Kmc$, cf.\ Section~3.
These canonical tree decompositions are crucial in order to construct
a small 2ATA $\Amf_\Kmc$ in step~($ii$), which is done in Section~4.1.
For step~($iii$), we propose in Section~4.2 a novel technique for
answering regular path queries directly using a 2ATA $\Amf_{\vp}$
since a naive application of the techniques from~\cite{CalvaneseEO14}
does not lead to optimal complexity, because of the large width of the
decompositions.

\newcommand{\wit}{\ensuremath{\mn{Wit}}\xspace}
\newcommand{\witgt}{\ensuremath{Q}\xspace}

\section{Preliminaries}\label{sec:preli}

{\bf Syntax.} We consider a vocabulary consisting of
countably infinite disjoint sets of \emph{concept names} $\mn{N_C}$,
\emph{role names} $\mn{N_R}$, and \emph{individual names} $\mn{N_I}$,
and assume that $\mn{N_R}$ is partitioned into two countably infinite
sets of \emph{non-transitive role names} $\mn{N}_\mn{R}^{nt}$ and
\emph{transitive role names} $\mn{N}_\mn{R}^{t}$. The syntax of
\emph{\SQ-concepts $C,D$} is given by the rule 
 $$C,D ::= A\mid \neg C \mid C \sqcap D \mid \qnrleq n r C $$ 
where $A \in \mn{N_C}$,
$r \in \mn{N_R}$, and $n$ is a number given in binary.  We use $\qnrgeq
n r C$ as an abbreviation for $\neg \qnrleq {n-1} r C$, and other
standard abbreviations like $\bot$, $\top$, $C\sqcup D$, $\exists
r.C$, $\forall r.C$. Concepts of the form $\qnrleq n  r  C $ and
$\qnrgeq n  r  C$ are called \emph{at-most restrictions} and \emph{at-least restrictions}, respectively.

An  \emph{\SQ-TBox (ontology) \Tmc} is a finite set of \emph{concept inclusions} $C\sqsubseteq D$  where  $C,D$ are \SQ-concepts.  An
\emph{ABox} is a finite set of \emph{concept} and \emph{role
assertions} of the form $A(a)$, $r(a,b)$ where $A \in \mn{N_C}$, $r
\in \mn{N_R}$ and $\{a,b\} \subseteq \mn{N_I}$;
$\mn{ind}(\Amc)$ denotes the set of individual names occurring in
\Amc. A \emph{knowledge base (KB)} is a pair $\Kmc=(\Tmc, \Amc)$. 

\smallskip \noindent {\bf Semantics.}  An \emph{interpretation} $\Imc
= (\Delta^\Imc, \cdot^\Imc)$ consists of a non-empty \emph{domain}
$\Delta^\Imc$ and an \emph{interpretation function} $\cdot^\Imc$
mapping concept names to subsets of the domain and role names to
binary relations over the domain such that  transitive role names are
mapped to transitive relations.  The interpretation function is
extended to complex concepts by defining $(\neg
C)^\Imc=\Delta^\Imc\setminus C^\Imc$, $(C\sqcap D)^\Imc=C^\Imc\cap
D^\Imc$, and 
\vspace{-1mm} $$ \qnrleq n  r  C^\Imc = \{d\in \Delta^\Imc \mid \
  |\{e\in C^\Imc\mid (d,e)\in r^\Imc\}| \leq n \}.$$
For ABoxes $\Amc$ we adopt the \emph{standard name assumption (SNA)},
that is, $a^\Imc =a$, for all $a \in \mn{ind}(\Amc)$, but we strongly
conjecture that our results hold without it.
The satisfaction relation $\models$ is defined as usual by taking
$\Imc\models C\sqsubseteq D$ iff $C^\Imc\subseteq D^\Imc$, $\Imc
\models A(a)$  iff $a \in A^\Imc$, and $\Imc \models r(a,b)$ iff
$(a,b) \in r^\Imc$.  An interpretation $\Imc$ is a \emph{model} of a
TBox \Tmc, denoted $\Imc\models\Tmc$, if $\Imc\models \alpha$ for all
$\alpha\in\Tmc$; it is a model of an ABox \Amc, written $\Imc \models
\Amc$, if $\Imc\models \alpha$ for all $\alpha\in\Amc$; it is a model
of a KB \Kmc if $\Imc \models \Tmc$ and $\Imc \models \Amc$.

\smallskip \noindent\textbf{Query Language.} A \emph{positive
existential regular path query (PRPQ)} is a formula
$\varphi=\exists\vec{x}\, \psi(\vec{x})$ where $\psi$ is constructed
using $\wedge$ and $\vee$ over atoms of the form $\Emc(t,t')$ where
$t,t'$ are variable or constant names, $\Emc$ is a regular
expression over $\{r,r^-\mid r\in \mn{N}_\mn R \}\cup \{A? \mid A \in
\mn{N_C}\}$, and the tuple $\vec{x}$ denotes precisely the free
variables in $\psi$. Note that atoms $A(t)$ are captured using
$A?(t,t)$. 

We denote with $I_{\varphi}$ the set of constant names in $\varphi$. A
\emph{match for $\vp$ in \Imc} is a function $\pi:\xbf\cup
I_{\varphi}\to \Delta^\Imc$ such that $\pi(a)=a$, for all $a\in
I_{\varphi}$ and $\Imc,\pi\models\psi(\xbf)$ under the standard
semantics of first-order logic extended with the following rule for
atoms of the form $\Emc(t,t')$: $\Imc,\pi\models \Emc(t,t')$ if there
is a word $\nu_1\cdots\nu_n\in L(\Emc)$ and a sequence
$d_0,\ldots,d_n\in\Delta^\Imc$ such that $d_0 = \pi(t), d_n =
\pi(t')$, and for all $i \in [1,n]$ we have that $(i)$ if $\nu_i =
A?$, then $d_{i-1}=d_i \in A^\Imc$, and $(ii)$ if $\nu_i = r$ (resp.,
$\nu_i=r^-$), then $(d_{i-1}, d_i) \in r^\Imc$ (resp.,
$(d_i,d_{i-1})\in r^\Imc$).  A query $\vp$ is \emph{entailed by a KB
\Kmc}, denoted as $\Kmc\models \varphi$, if there is a match for $\vp$
in every model \Imc of \Kmc.  The \emph{query entailment problem} asks
whether a KB \Kmc entails a PRPQ~$\vp$. It is well-known that the
\emph{query answering problem} can be reduced to query entailment, and
that PRPQs are \emph{preserved under homomorphisms}, that is, if
$\Imc\models\vp$ and there is a homomorphism from $\Imc$ to \Jmc, then
also $\Jmc\models\vp$.

\smallskip\noindent \textbf{Additional Notation for Transitive Roles.}
Given some interpretation $\Imc$,  $\Imc|_\Delta$ denotes  the
restriction of $\Imc$ to domain $\Delta\subseteq \Delta^\Imc$. For
$d\in \Delta^\Imc$ and $r \in \mn{N}_\mn{R}^{t}$, the
\emph{$r$-cluster of $d$ in \Imc}, denoted by $Q_{\Imc,r}(d)$, is the
set  containing  $d$ and all elements $e\in\Delta^\Imc$ such that both
$(d,e)\in r^\Imc$ and $(e,d)\in r^\Imc$.  We call a set $\abf\subseteq
\Delta^{\Imc}$ an \emph{$r$-cluster in $\Imc$} if $\abf=Q_{\Imc,r}(d)$
for some $d\in \Delta^\Imc$, and an \emph{$r$-root cluster} if
additionally $(d,e)\in r^\Imc$ for all $d\in\abf$ and
$e\in\Delta^\Imc\setminus\abf$. Note that both a single element
without an $r$-loop and a single element with an $r$-loop are
$r$-clusters of size~1; otherwise $r$-clusters can be viewed as
$r$-cliques.

\newcommand{\bagz}{\ensuremath{\Imf}\xspace}
\newcommand{\bago}{\ensuremath{\mathfrak{r}}\xspace}

\section{Tree Decompositions} \label{sec:tableaux}
Existing algorithms for QA in expressive DLs, e.g., \SHIQ (without
number restrictions on transitive roles), exploit the fact that for
answering queries it suffices to consider \emph{canonical models} that
are forest-like, roughly consisting of an interpretation of the ABox
and a collection of tree-shaped interpretations whose roots are elements of
the ABox. We start with showing that for \SQ this tree-model property
is lost.

\begin{example}\label{ex:nontree} 
  %
  The number restrictions in  $\Tmc$, cf.\ Section~\ref{sec:intro}, 
  force that every model of $\Tmc$ satisfying $\mn{Heart}$ contains the
  structure in Fig.~\ref{fig:diamond}(a).
  Moreover, in \SQ clusters can be enforced.  Let $\Tmc'$ be the
  following TBox, where $r \in \mn N_\mn R ^ t$:
  $$\{ A
  \sqsubseteq \qnreq 3 r  B, B \sqsubseteq \qnreq 3  r  B, A\sqsubseteq
\neg B \}.$$ Then, in every model of
$\Tmc'$, an element satisfying $A$ roots the structure depicted in
Fig.~\ref{fig:diamond}(b), where the elements satisfying $B$ form an
$r$-cluster.
\end{example}
Nevertheless, we will establish a \emph{tree-like} model property for
\SQ, showing that it suffices to consider such models for query
entailment. We first introduce  a basic form of tree
decompositions suited for transitive roles. A \emph{tree} is a prefix-closed subset $T\subseteq
(\mathbbm{N}\setminus\{0\})^*$. A node $w\in T$ is a successor of
$v\in T$ and $v$ is a predecessor of $w$ if $w=v\cdot i$ for some
$i\in \mathbbm{N}$. We denote with $w\cdot -1$ the predecessor of
$w$, if it exists.

\begin{definition}\label{def:treedecomp}
  A \emph{tree decomposition of an interpretation~\Imc} is pair
  $(T,\Imf)$ where $T$ is a tree and $\Imf$ is a function that assigns
  an interpretation $\Imf(w)=(\Delta_w,\cdot^{\Imf(w)})$ to every
  $w\in T$, and  the following conditions are satisfied:
  \begin{enumerate}

    \item\label{it:tree1} $\Delta^\Imc=\bigcup_{w\in T}\Delta_w$;

    \item\label{it:tree2} for every $w\in T$, we have
      $\Imf(w)=\Imc|_{\Delta_w}$;

    \item\label{it:tree4} $r^\Imc = \chi_r$ for $r\in \mn{N}_{\mn{R}}^{nt}$ and
      $r^\Imc=\chi_r^+$ for $r\in \mn{N}_\mn{R}^t$,
      where\\\vspace{-3mm}
      $$\chi_r = \textstyle \bigcup_{w\in T} r^{\Imf(w)};$$

    \item\label{it:tree3} for every $d\in \Delta^\Imc$, the set
      $\{w\in T\mid d\in \Delta_w\}$ is connected in $T$.

\end{enumerate}
The \emph{width} of $(T,\Imf)$ is the maximum domain size of
interpretations that occur in the range of $\Imf$ minus 1,
that is, $\sup_{w\in T}|\Delta_w| -1$.  
Its \emph{outdegree} is the outdegree of $T$.  
\end{definition}
\begin{figure}[t]
\centerline{
  \begin{tikzpicture}[>=latex,thin,point/.style={circle,draw=black,minimum
    size=1.4mm,inner sep=0pt},scale=1.0, bend angle=45]\footnotesize 
    \node[point, label=left:{\scriptsize $\mn{Heart}$}] (x) at (0,1) {}; 
    \node[point, label=below:{\scriptsize $\mn{LV}$}] (y) at (1, 0.6) {}; 
    \node[point, label=above:{\scriptsize $\mn{LA}$}] (z) at (1, 1.6) {};
    \node[point, label=right:{\scriptsize $\mn{MV}$}] (d) at (2,1) {};
\draw[->] (x) -- node[midway, below]{} (y); 
\draw[->] (x) -- node[midway, above]{} (d); 
\draw[->] (x) -- node[midway, above]{} (z); 
\draw[->] (y) -- node[midway, below] {} (d); 
\draw[->] (z) -- node[midway, above]{}(d);
%
%
 \node[point,label=left:{\scriptsize $A$}] (x) at (3.5, 1) {};
     \node[point,label=below:{\scriptsize $B\ $}] (y) at (4.3, 1) {};
      \node[point,label=above:{\scriptsize $B$}] (z) at (4.9,2) {};
       \node[point,label=below:{\scriptsize $B$}] (z') at (4.9, 0) {};
\draw[->] (x) -- node[midway, above]{} (y);
 \draw[->](x) -- node[midway, above]{} (z);
 \draw[->] (x) -- node[midway, below]{} (z');
  \path
    (y) edge [loop right] node {} (y)
    edge[<->] node[right] {} (z)
    (z) edge [loop right] node {} (z)
    edge[<->] node[right] {} (z')
    (z') edge [loop right] node {} (z')
    edge[<->] node[right] {} (y)
    ;
    \node at (1,-0.5) {(a)};
    \node at (4.3,-0.5) {(b)};
\end{tikzpicture}}\vspace{-1em} \caption{Example~\ref{ex:nontree}}\label{fig:diamond}
\end{figure}
\noindent Unfortunately, this basic tree decomposition
does not yet enable tree automata to count over transitive roles (with
a small number of states) since the $r$-successors of an element, say
$d \in \Delta^\Imc$, are scattered in the decomposition;  see  Section~\ref{sec:KBA} for further details.
To address this, we extend tree decompositions with a third component
$\bago$ which assigns to every node $w\in T\setminus\{\varepsilon\}$ a
role name $\bago(w)$ and $\bot$ to the root $\varepsilon$.
Intuitively, a node labeled with $r=\bago(w)$ is responsible
for capturing $r$-successors of some element(s) in the predecessor of~$w$.

We need some additional notation. Let $(T,\bagz,\bago)$ be such an
extended tree decomposition, and let $w\in T$ and $r\in\mn{N}_\mn{R}$.
We say that $d\in \Delta_w$ \emph{is fresh in $w$} if $w=\varepsilon$
or $d\notin\Delta_{w\cdot -1}$, and \emph{$r$-fresh in $w$} if 
$r=\bago(w)$ and it is
either fresh or $r\neq \bago(w\cdot -1)$.  We denote with
$F(w)$ and $F_r(w)$ the set of all fresh and $r$-fresh elements in~$w$, respectively. Intuitively, $F_r(w)$ contains all elements
which are allowed to have fresh $r$-successors in the successor nodes
of $w$. Indeed, the following stronger form of tree decompositions
implies (among other things) that, for all $d$ and $r$, there is a
unique $w$ with $d\in F_r(w)$.
\begin{definition}\label{def:candec}
  An extended tree decomposition $\Tmf=(T,\Imf,\bago)$ of an
  interpretation \Imc is \emph{canonical} if the following conditions
  are satisfied for every $w \in T$ with $r=\bago(w)$ and every
  successor $v$ of $w$ with $s=\bago(v)$:
  \begin{enumerate}[label=(C$_\arabic*$),leftmargin=*,align=left]

    \item\label{it:can1} if $(d,e)\in s_1^{\Imf(v)}$, then $s_1=s$, or
      $d=e$ and $s_1\!\in\!\mn{N}_\mn{R}^{t}$;

    \item\label{it:can2} if $s\in \mn{N}_\mn{R}^{nt}$, then $\Delta_{v}=\{d,e\}$, for some
      $d\in F(w)$, $e\in F(v)$, and $s^{\Imf(v)}=\{(d,e)\}$;

    \item \label{it:can3} if $s\in\mn{N}_\mn{R}^t$ and $r\notin\{\bot, s\}$,
      there are $d\in F(w)$ and an $r$-root cluster \abf in $\Imf(v)$
      such that $\Delta_w\cap \Delta_{v}=\{d\}$ and $d\in\abf$;
      moreover, there is no successor $v'\neq v$ of $w$ satisfying
      this for $d$ and $\bago(v') = s$;

    \item \label{it:can4}if $s\in\mn{N}_\mn{R}^{t}$ and $r\in\{\bot,s\}$, then there
      is an $s$-root cluster $\abf$ in $\Imf(v)$ with: 

    \begin{enumerate}

      \item $\abf\subseteq F_{s}(w)$;

      \item \abf is an $s$-cluster in $\Imf(w)$;

      \item for all $d\in \abf$ and $(d,e)\in s^{\Imf(w)}$, we have
	$e\in\Delta_v$; 

      \item for all $(d,e)\in s^{\Imf(v)}$, $d\in \abf\cup F(v)$ or
	$e\notin F(v)$.

    \end{enumerate}

\end{enumerate}

\end{definition}
Definition~\ref{def:candec} imposes restrictions on the structural
relation between interpretations at neighboring nodes.
Condition~\ref{it:can1} expresses that the interpretation at a node
labeled with $\bago(w)=r$ interprets essentially only $r$ non-empty
(among role names).
Condition~\ref{it:can2} is in analogy with standard unravelling over
non-transitive roles~\cite{handbookDL}. Condition~\ref{it:can3} reflects that
interpretations at neighboring nodes with different $\bago$-components do only interact via single elements.  Most
interestingly, Condition~\ref{it:can4} plays the role of~\ref{it:can2}, but for
transitive roles. 
Note that~\ref{it:can4} is based on
$r$-clusters  since they can be enforced, see
Example~\ref{ex:nontree} above. 

\subsection{Tree-like Model Property  for \SQ}

As our first main result, we show a \emph{tree-like model property},
in particular, that every model can be \emph{unraveled} into a
canonical decomposition of small width.  
The proof is via a novel
unraveling operation tailored for the logic \SQ and canonical
decompositions.
\begin{theorem} \label{thm:sq-canmod}
  Let $\Kmc=(\Tmc, \Amc)$ be an \SQ KB and $\vp$ a PRPQ with
  $\Kmc\not\models\vp$. There is a model $\Jmc$ of $\Kmc$ and a
  canonical tree decomposition $(T,\Imf ,\bago)$ of $\Jmc$ with
  (i)~$\Jmc\not\models\vp$, (ii) $\Imf(\varepsilon) \models \Amc$, and
  (iii) width and outdegree of $(T,\Imf)$ are bounded by
  $O(|\Amc|\cdot 2^{p(|\Tmc|)})$, for some polynomial $p$.
%
%
%
%
%
\end{theorem}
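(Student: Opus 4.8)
The plan is to start from an arbitrary model $\Imc\models\Kmc$ with $\Imc\not\models\vp$ (which exists since $\Kmc\not\models\vp$) and to build $\Jmc$ together with its decomposition by a cluster-aware unraveling of $\Imc$. Since PRPQs are preserved under homomorphisms, it suffices to guarantee a homomorphism $h\colon\Jmc\to\Imc$: then $\Jmc\models\vp$ would force $\Imc\models\vp$, contradicting the choice of $\Imc$, so~(i) comes for free. To bound the branching I would first designate, for every $d\in\Delta^\Imc$ and every at-least restriction $\qnrgeq n r C$ in the closure of $\Tmc$ that $d$ satisfies, a set of exactly $n$ witnessing $r$-successors of $d$ in $C^\Imc$ (choosing whole clusters when $r$ is transitive); only these designated successors will ever be used when spawning fresh elements, which caps the per-element branching at $O(2^{p(|\Tmc|)})$ since numbers are binary-coded.

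Second, I would construct $(T,\Imf,\bago)$ top-down, letting every element of $\Delta^\Jmc$ carry an origin in $\Delta^\Imc$ that defines $h$. I set $\Imf(\varepsilon)=\Imc|_{\mn{ind}(\Amc)}$ and $\bago(\varepsilon)=\bot$, which yields~(ii) and accounts for the $|\Amc|$ factor in the width. Processing a node $w$, whenever an element is $r$-fresh, i.e.\ lies in $F_r(w)$, I would create successor bags realizing its as-yet-unrealized designated witnesses: for a non-transitive role $s$ a two-element bag carrying the single edge as demanded by~\ref{it:can2}; for a transitive role $s$ a fresh copy of a root cluster through which $d$'s transitive $s$-successors are reached, attached through the single shared element $d$ when $\bago(w)\neq s$ as in~\ref{it:can3}, and fused with $d$'s current $s$-cluster when $\bago(w)\in\{\bot,s\}$ as in~\ref{it:can4}. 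Crucially, the connectedness condition of Definition~\ref{def:treedecomp} lets me \emph{reuse} a shared successor across several bags rather than duplicate it; this is exactly what is needed to respect at-most restrictions in the presence of the forced merging exhibited by the diamond of Example~\ref{ex:nontree}, where the single $\mn{MV}$ must be kept common to the bags introducing it from $\mn{Heart}$, $\mn{LA}$ and $\mn{LV}$.

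Finally, I would check the three claims. Conditions~\ref{it:can1}--\ref{it:can4} hold by construction, so the decomposition is canonical, and setting $r^\Jmc=\chi_r^+$ on transitive roles restores transitivity. The heart of the argument, and the main obstacle, is to show that $\Jmc$ is still a model of $\Kmc$ despite this transitive closure: an element's $s$-successors in $\Jmc$ are scattered over a connected family of bags (its own cluster together with the clusters hung below it), yet the number restrictions count over all of $\chi_s^+$. I would argue that $h$ restricts to a bijection between these $\chi_s^+$-successors and the $s$-successors of the origin in $\Imc$, so that both at-least and at-most restrictions transfer verbatim; the bookkeeping via $F_s(w)$ together with the uniqueness clause of~\ref{it:can3} (at most one $s$-child per element and cluster) is precisely what prevents double-counting while keeping branching under control. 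The width is bounded by the maximal size of a copied cluster-star, $O(2^{p(|\Tmc|)})$, plus the $|\Amc|$ coming from the root bag, i.e.\ $O(|\Amc|\cdot 2^{p(|\Tmc|)})$; multiplying by the per-element branching gives the same bound on the outdegree. I expect the transitive-role counting across bags, not the non-transitive tree part, to carry essentially all the difficulty.
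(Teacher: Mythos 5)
Your overall skeleton does match the paper's: unravel a countermodel $\Imc$ of $\Kmc$ with $\Imc\not\models\vp$, get~(i) from homomorphism-preservation of PRPQs, put the ABox in the root for~(ii), and drive the construction by rules mirroring Conditions~\ref{it:can2}--\ref{it:can4}. But there is a genuine gap at exactly the step you yourself flag as the heart of the argument: you have no working mechanism for preserving at-most restrictions over transitive roles, and your width bound silently depends on one. Your claim that $h$ restricts to a bijection between the $\chi_s^+$-successors of an element of $\Jmc$ and the $s$-successors of its origin cannot be enforced by ``bookkeeping via $F_s(w)$ plus the uniqueness clause of~\ref{it:can3}'': that clause only forbids duplicating the attachment point when the role component changes, and says nothing about the \emph{vertical} direction, where the danger lies. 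If $d\in{\qnrleq{1}{r}{B}}^\Imc$ with unique witness $e$, then every bag along every downward $r$-path below a copy of $d$ may spawn a fresh copy of $e$, and after taking $\chi_r^+$ the copy of $d$ acquires two $r$-successors satisfying $B$. The paper's mechanism is the $\rightsquigarrow_{\Imc,r}^*$-closed witness sets $\mn{Wit}_{\Imc,r}(d)$ (witnesses of at-most restrictions, witnesses of \emph{those} witnesses, and their full clusters): rule~\ref{it:r3} threads the inherited witness set through every bag and creates fresh copies only of $\mn{Wit}_{\Imc,r}(e)\setminus\mn{Wit}_{\Imc,r}(\tau(\delta))$, which is what Condition~\ref{it:can4}(c)/(d) encodes and what makes the injectivity statement (points \textit{(x)}/\textit{(xx)} in the appendix) provable. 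Nothing in your sketch plays this role; ``reusing a shared successor across bags,'' as in the diamond example, handles horizontal merging within one unraveling step, not inheritance along unboundedly long transitive paths.

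Consequently your width bound is unsubstantiated as stated. Since~\ref{it:can4}(c) forces each child bag to carry all current $s$-successors of the chosen cluster, bags accumulate inherited witnesses, and the bound requires two ingredients you do not supply: first, $|\mn{Wit}_{\Imc,r}(d)|\leq|\Amc|\cdot 2^{q(|\Tmc|)}$ (Lemma~\ref{lem:wit}), which itself presupposes that $\Imc$ has been pruned to bounded width \emph{and} breadth (Lemma~\ref{lem:outdegree-width}) --- your designated-witness device caps branching for at-least restrictions but does not bound cluster sizes, which can be infinite in an arbitrary $\Imc$; second, the stabilization argument that along a downward branch of \ref{it:r3}-applications fresh witnesses are added at most $|\Tmc|$ times, which follows from monotonicity of the satisfied at-most concepts along transitive $r$-edges. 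Relatedly, your root bag $\Imc|_{\mn{ind}(\Amc)}$ is too small: the paper initializes the root with $\mn{ind}(\Amc)$ together with copies of $\mn{Wit}_{\Imc,r}(a)$ for all individuals $a$ and transitive $r$, precisely so that witnesses shared among several ABox individuals are introduced exactly once; with your root, the uniqueness property needed for at-most preservation already fails at level zero.
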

%
%
%
%
%
%
%
%
Before outlining  the proof of Theorem~\ref{thm:sq-canmod}, we
introduce some additional notation.  The \emph{width of an interpretation~\Imc} is the
minimum $k$ such that $|Q_{\Imc,r}(d)|\leq k$ for all
$d\in\Delta^\Imc$, $r\in\mn{N}_\mn{R}^t$.  Moreover, for a transitive
role $r$, we say that $e$ is a \emph{direct $r$-successor of $d$} if
$(d,e)\in r^\Imc$ but $e\notin Q_{\Imc,r}(d)$, and for each $f$ with
$(d,f),(f,e)\in r^\Imc$, we have $f\in Q_{\Imc,r}(d)$ or $f\in
Q_{\Imc,r}(e)$; if $r$ is non-transitive, then $e$ is a \emph{direct
$r$-successor of $d$} if $(d,e)\in r^\Imc$.  The \emph{breadth of
\Imc} is the maximum $k$ such that there are $d,d_1,\ldots,d_k$ and a
role name $r$, all $d_i$ are direct $r$-successors of $d$, and
\begin{itemize}

  \item[--] if $r$ is non-transitive, then $d_i\neq d_j$ for all
    $i\neq j$; 

  \item[--] if $r$ is transitive, then $Q_{\Imc,r}(d_i)\neq
    Q_{\Imc,r}(d_j)$, for $i\neq j$.

\end{itemize}

Let now be $\Imc\models \Kmc$ and $\Imc\not\models \varphi$. As PRPQs
are preserved under homomorphisms, the following lemma  implies that we
can assume without loss of generality that $\Imc$ is of bounded width
and breadth. The proof of this lemma adapts a result in~\cite{KazakovP09}.
\begin{lemma}~\label{lem:outdegree-width} 
  For each $\Imc\models \Kmc$, there is a sub-interpretation $\Imc'$
  of \Imc with $\Imc'\models\Kmc$ and width and breadth of $\Imc'$ are
  bounded by $O(|\Amc|+2^{p({|\Tmc|})})$.
\end{lemma}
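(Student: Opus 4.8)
The plan is to construct $\Imc'$ as a sub-interpretation of $\Imc$ by pruning, for each element, the excess of its $r$-cluster (to bound the width) and the excess of its direct $r$-successors (to bound the breadth), while carefully retaining enough witnesses to satisfy every at-least restriction entailed by $\Tmc$. Since a sub-interpretation of a model is related to the original by an inclusion, which is trivially a homomorphism, this will preserve $\Imc'\not\models\varphi$ by homomorphism-preservation of PRPQs; so the only real task is to ensure $\Imc'\models\Kmc$.

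First I would fix the relevant numeric parameter. Let $N$ be the largest number occurring in a number restriction in $\Tmc$, so each at-least restriction $\qnrgeq{n}{r}{C}$ in the closure of $\Tmc$ has $n\le N$; note $N\le 2^{p(|\Tmc|)}$ because numbers are coded in binary and there are at most $|\Tmc|$ of them. The intuition is that no element needs more than $N$ distinct $C$-witnesses of any given kind to satisfy all at-most-restriction-compatible at-least requirements, so both the cluster sizes and the number of distinct direct successors can be capped at roughly $N$ times the number of concepts in $\Tmc$. For the \textbf{width} bound, I would bound each $r$-cluster $Q_{\Imc,r}(d)$: keep $d$ itself together with, for each relevant concept $C$, enough cluster-members witnessing the at-least restrictions on the intra-cluster role, and delete the rest of the cluster. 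For the \textbf{breadth} bound, I would, for each $d$ and each role $r$, group the direct $r$-successors by their ``type'' (the set of relevant concepts they satisfy), and keep only up to $N$ witnesses per type that are needed to satisfy the at-least restrictions of that kind, discarding the surplus. The $\Imc'$ arises by taking the union over all elements of the retained elements/edges. Because the number of relevant concept types and role names is bounded by $2^{p(|\Tmc|)}$ and each contributes at most $N\le 2^{p(|\Tmc|)}$ kept witnesses (plus the $|\Amc|$ ABox individuals, which must all be retained), both width and breadth come out as $O(|\Amc|+2^{p(|\Tmc|)})$.

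The delicate point, and what I expect to be the \textbf{main obstacle}, is showing $\Imc'\models\Tmc$ after pruning — specifically that deletions do not break either kind of number restriction. Deleting elements can only \emph{decrease} the count of $r$-successors satisfying some $C$, which is harmless for at-\emph{most} restrictions $\qnrleq{n}{r}{C}$ (these remain satisfied) but threatens at-\emph{least} restrictions $\qnrgeq{n}{r}{C}$; conversely, the retention policy must guarantee that every such at-least restriction that $d$ must satisfy still has its $n$ witnesses surviving. The subtlety is the interaction between transitivity and counting: when I cap a cluster or remove a direct $r$-successor cluster for a transitive $r$, I must argue that the transitive closure of the pruned edge set still supplies the required $r$-successors for every element that routed through the deleted material, and that no \emph{new} spurious $r$-edges are created (they are not, since restricting to a subset and re-closing transitively yields a subrelation of $r^\Imc$). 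Concretely, the retention must be coordinated across elements so that a witness kept for one element's at-least restriction is not independently deleted as surplus for another; I would make this precise by processing a ``type-saturated'' selection — choosing, globally, a bounded set of witnesses per (element, role, concept-type) triple that simultaneously satisfies all requirements. This coordinated witness-selection argument, adapting the construction from \cite{KazakovP09} to the presence of clusters and number restrictions on transitive roles, is where the technical care is needed; the remaining verifications that each at-least restriction in the closure of $\Tmc$ is met are then routine counting.
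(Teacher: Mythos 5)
Your proposal has two genuine gaps, both at the point you yourself identify as delicate. First, the correctness invariant you plan to verify is too weak. You argue that pruning is ``harmless for at-most restrictions'' and that therefore only at-least restrictions need surviving witnesses, so your retention policy keeps witnesses only where an at-least restriction demands them. But an \SQ TBox consists of arbitrary inclusions $C\sqsubseteq D$, so an at-most concept can occur on a \emph{left-hand side} (or nested inside other constructors, as $\mn{cl}(\Tmc)$ is closed under single negation), and pruning can make it \emph{newly true}: with $\qnrleq{0}{r}{C}\sqsubseteq B$, if $d$ has $r$-successors in $C^\Imc$ but no at-least restriction requires any, your policy discards them all as surplus, $d$ enters $(\qnrleq{0}{r}{C})^{\Imc'}$, and the inclusion fails at $d$ if $d\notin B^{\Imc'}$. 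What must be shown --- and what the paper's proof establishes as Claims~1 and~2 --- is \emph{exact} preservation $C^{\Imc'}=C^{\Imc}$ for every $C\in\mn{cl}(\Tmc)$; accordingly, retention is keyed not to at-least needs but to every closure concept: per element (resp.\ per cluster) and per $C\in\mn{cl}(\Tmc)$, keep $\min(\hat m,|W_r(d,C)|)$ many $C$-successors, where $\hat m$ is the maximal number occurring in $\Tmc$, so that all counts up to every threshold are unchanged in both directions.

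Second, your construction deletes domain elements (``delete the rest of the cluster'', ``discarding the surplus''), which is what creates the global coordination problem you flag and then defer to an unspecified ``type-saturated selection'': a surplus element for $d$ may be a mandatory witness for $d'$, and its own witnesses dangle, forcing a recursive global selection that your sketch does not carry out. The paper dissolves this problem rather than solving it: the domain is kept intact, $\Delta^{\Imc'}=\Delta^{\Imc}$, and only the role relations shrink. Stage~1 (breadth) sets $r^{\Imc'}=(S_r^1\cup S_r^2\cup S_r^3)^+$, where $S_r^1$ keeps intra-cluster edges, $S_r^2$ the ABox edges, and $S_r^3$ the edges to the selected strict successors; Stage~2 (width) sets $r^{\Imc'}=r^{\Imc}\cap(\Delta^{\Imc}\times\Delta_r)$, with $\Delta_r$ collecting $\mn{ind}(\Amc)$ and the selected cluster members. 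Since width and breadth are defined through the role edges (cluster size and number of direct successors), edge pruning alone achieves the bounds; witness selection becomes purely local per element, the only coordination being the trivial requirement that all members of one cluster use the same witness sets ($W'_r(d_1)=W'_r(d_2)$ when $d_1\in Q_{\Imc,r}(d_2)$). Your observation that re-closing transitively a subrelation of the transitive $r^\Imc$ stays inside $r^\Imc$ is correct and is indeed used in Stage~1, but it does not repair the two issues above.
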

Let $\mn{cl}(\Tmc)$ be the set of all subconcepts occurring in \Tmc,
closed under single negation.  For each transitive role $r$, define a
binary relation $\rightsquigarrow_{\Imc,r}$ on $\Delta^\Imc$, by
taking $d\rightsquigarrow_{\Imc,r}e$ if there is some $\qnrleq n r 
C\in \Tmc$ such that $d\in \qnrleq n r  C^\Imc$, $e\in
C^\Imc$, and $(d,e)\in r^\Imc$. Based on the transitive, reflexive
closure $\rightsquigarrow_{\Imc,r}^*$ of $\rightsquigarrow_{\Imc,r}$,
we define, for every $d\in \Delta^\Imc$, the set
$\mn{Wit}_{\Imc,r}(d)$ of {\em $r$-witnesses for $d$} by:
$$\mn{Wit}_{\Imc,r}(d) = \textstyle\bigcup_{e\mid
  d\rightsquigarrow_{\Imc,r}^* e} Q_{\Imc,r}(e).$$
Intuitively, $\mn{Wit}_{\Imc,r}(d)$ contains all $r$-witnesses of
at-most restrictions of some element $d$, and due to using
$\rightsquigarrow_{\Imc,r}^*$, also all witnesses of at-most
restrictions of those witnesses and so on. For the stated bounds, it
is important that  the size of $\mn{Wit}_{\Imc,r}(d)$ is bounded as
follows:
\begin{lemma} \label{lem:wit}
  For every $d\in \Delta^\Imc$ and transitive $r$, we have
  $|\mn{Wit}_{\Imc,r}(d)|\leq |\Amc|\cdot 2^{q(|\Tmc|)}$, for some
  polynomial $q$.
\end{lemma}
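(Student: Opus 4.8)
The plan is to reduce the statement to a bound on the number of distinct $r$-clusters that can be collected in $\mn{Wit}_{\Imc,r}(d)$, and then multiply by the maximal cluster size. By Lemma~\ref{lem:outdegree-width} we may assume that $\Imc$ has width bounded by $O(|\Amc| + 2^{p(|\Tmc|)})$, so every $r$-cluster $Q_{\Imc,r}(e)$ has at most that many elements. Since $\mn{Wit}_{\Imc,r}(d)$ is a union of such clusters, it suffices to show that the number of distinct clusters $Q_{\Imc,r}(e)$ with $d \rightsquigarrow_{\Imc,r}^* e$ is bounded by $2^{q'(|\Tmc|)}$ for some polynomial $q'$; the claimed bound $|\Amc| \cdot 2^{q(|\Tmc|)}$ then follows immediately.

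The first key observation is a monotonicity property of at-most restrictions along $r$: if $(d,e) \in r^\Imc$ for transitive $r$, then by transitivity every $r$-successor of $e$ is an $r$-successor of $d$, hence for each $\qnrleq n r C$ we have $|\{f \in C^\Imc \mid (e,f) \in r^\Imc\}| \leq |\{f \in C^\Imc \mid (d,f)\in r^\Imc\}|$. Writing $M(d)$ for the set of at-most restrictions $\qnrleq n r C \in \Tmc$ satisfied by $d$, this gives $M(d) \subseteq M(e)$ whenever $(d,e) \in r^\Imc$. In particular $M$ is constant on each $r$-cluster, is non-decreasing along $\rightsquigarrow_{\Imc,r}$ (as each such step is an $r$-edge), and $\rightsquigarrow_{\Imc,r}$ moves strictly outward in $r$ (a step leaving a cluster cannot return by transitivity), so modulo clusters $\rightsquigarrow_{\Imc,r}$ is acyclic.

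I would then stratify the reachable clusters by the value of $M$. Fix a cluster $\abf = Q_{\Imc,r}(d')$ that is entered with common value $M(\abf)=M$, and consider all elements reachable from $\abf$ by $\rightsquigarrow_{\Imc,r}$-paths along which $M$ stays equal to $M$. By transitivity each such element $f$ is an $r$-successor of $d'$, and the last step reaching it uses some $\qnrleq n r C \in M$ with $f \in C^\Imc$; hence $f$ is a witness counted towards $\qnrleq n r C$ at $d'$. As $d' \in \qnrleq n r C^\Imc$, there are at most $n \leq 2^{|\Tmc|}$ such witnesses per restriction, so the stratum together with its exit targets spans at most $B := |\mn{cl}(\Tmc)| \cdot 2^{|\Tmc|} = 2^{O(|\Tmc|)}$ clusters. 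Every exit strictly enlarges $M$, and $M$ ranges over subsets of the at-most restrictions in $\Tmc$, of which there are at most $|\mn{cl}(\Tmc)| = O(|\Tmc|)$; thus the stratification has depth $O(|\Tmc|)$ and branching $B$, giving at most $B^{O(|\Tmc|)} = 2^{O(|\Tmc|^2)}$ reachable clusters. Multiplying by the cluster-size bound yields $|\mn{Wit}_{\Imc,r}(d)| \leq |\Amc| \cdot 2^{q(|\Tmc|)}$ with $q(|\Tmc|) = O(|\Tmc|^2)$.

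The main obstacle is controlling the transitive closure $\rightsquigarrow_{\Imc,r}^*$: a naive bound on the out-degree of $\rightsquigarrow_{\Imc,r}$ (polynomially many at-most restrictions, each with up to $2^{|\Tmc|}$ binary-coded witnesses) would blow up geometrically with the depth of reachability, which is not a priori bounded. The monotonicity of $M$ is what tames this, since it both guarantees acyclicity and, crucially together with transitivity of $r$, lets us charge all witnesses produced within one stratum to the at-most restrictions of a single entry cluster, so that depth is effectively measured by how often $M$ can strictly grow, i.e.\ polynomially. Making the charging argument precise (in particular, that intermediate witnesses remain $r$-successors of the entry cluster and that exits can only increase $M$) is the delicate part; the remaining estimates are routine counting using the binary bound $n \leq 2^{|\Tmc|}$ and the width bound from Lemma~\ref{lem:outdegree-width}.
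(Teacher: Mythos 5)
Your proposal is correct and follows essentially the same route as the paper's proof: your monotone labelling $M$ (the at-most restrictions over $r$ satisfied by an element) is exactly the paper's labelling $\ell$, whose strict growth along $\rightsquigarrow_{\Imc,r}$ bounds the depth by $O(|\Tmc|)$ while the binary bound $n\leq 2^{|\Tmc|}$ on witnesses bounds the branching, and the concluding multiplication by the cluster-size bound from Lemma~\ref{lem:outdegree-width} is likewise how the paper finishes. The only cosmetic difference is that you organize the count into cluster-level strata of constant $M$ charged to a single entry element, whereas the paper builds an element-level tree of $\rightsquigarrow_{\Imc,r}$-successors and derives strictness of $\ell$ at internal edges directly from its exhaustive construction.
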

 
We describe now the construction of the
interpretation $\Jmc$ and its tree decomposition via a possibly
infinite unraveling process. Elements of $\Delta^\Jmc$ will be either
of the form $a$ with $a\in\mn{ind}(\Amc)$ or of the form $d_x$ with
$d\in \Delta^\Imc$ and some index $x$. We usually use $\delta$ to
refer to domain elements in \Jmc (in either form), and define a
function $\tau:\Delta^\Jmc\to\Delta^\Imc$ by setting
$\tau(\delta)=\delta$, for all $\delta\in \mn{ind}(\Amc)$, and
$\tau(\delta)=d$, for all $\delta$ of the form $d_x$.

To start the construction of \Jmc and $(T,\Imf,\bago)$, initialize
the domain $\Delta^{\Jmc}$ with
$\mn{ind}(\Amc)\cup\bigcup_{r\in\mn{N}_\mn{R}^t}\Delta^r$,
where 
the sets $\Delta^r$ are defined as
\begin{align*}
\Delta^r &= \{d_r\mid d\in
  \textstyle\bigcup_{a\in\mn{ind}(\Amc)}\mn{Wit}_{\Imc,r}(a)\setminus\mn{ind}(\Amc)\}.
\end{align*}
Concept and role names are interpreted in a way such that
$\Jmc|_{\mn{ind}(\Amc)}=\Imc|_{\mn{ind}(\Amc)}$, and
for all $r\in\mn{N}_\mn{R}^t$ and all $\delta,\delta'\in
\mn{ind}(\Amc)\cup\Delta^r$, we
have
\begin{align}
  \begin{split}
    \delta\in A^{\Jmc} &\Leftrightarrow \tau(\delta)\in A^{\Imc}\text{, for
    all $A\in\mn{N_C}$,  \quad and  \notag } \\
    (\delta,\delta')\in r^{\Jmc}&
    \Leftrightarrow (\tau(\delta),\tau(\delta'))\in
    r^{\Imc}.
  \end{split}\tag{$\dagger$}\label{eq:ij}
\end{align}
Now, initialize $(T,\Imf,\bago)$ with $T=\{\varepsilon\}$,
$\Delta_\varepsilon = \Delta^{\Jmc}$, and $\bago(\varepsilon)=\bot$.
This first step ensures that all witnesses of ABox
individuals appear in the root. 

\smallskip In the inductive step, we extend \Jmc and $(T,\Imf,\bago)$
by applying the following rules exhaustively in a fair way.
\begin{enumerate}[label=\textbf{R$_\arabic*$},leftmargin=*,align=left]

  \item\label{it:r1} Let $r$ be non-transitive, $w\in T$, $\delta \in F(w)$,
    and $d$ a direct $r$-successor of $\tau(\delta)$ in \Imc with
    $\{\delta,d\}\not\subseteq\mn{ind}(\Amc)$.  Then, add a fresh
    successor $v$ of $w$ to $T$, add the fresh element $d_v$ to
    $\Delta^{\Jmc}$, extend $\Jmc$ by adding $(\delta,d_v)\in
    r^{\Jmc}$ and $d_v\in A^{\Jmc}$ iff $d\in A^\Imc$, for all $A\in \mn{N_C}$, and
    set $\Delta_v= \{\delta,d_v\}$ and $\bago(v)=r$.

  \smallskip \item\label{it:r2} Let $r$ be transitive, $w\in T$, and
   $\delta_0\in F(w)$ such that:

   \begin{enumerate}[label=(\alph*)]

     \item $w=\varepsilon$ and $\delta_0\in\Delta^s$, for some
       transitive $s\neq
       r$, or

     \item $w\neq \varepsilon$ and $\bago(w)\neq r$.

   \end{enumerate}
   Then add a fresh successor $v$ of $w$ to $T$, and define
    %
      %
      $$\Delta = \{e_v\mid e\in
	\mn{Wit}_{\Imc,r}(\tau(\delta_0))\setminus\{\tau(\delta_0)\}\}.$$
      %
    %
    Extend the domain of \Jmc with $\Delta$ and the
    interpretation of concept and role names such that~\eqref{eq:ij}
    is satisfied for all $\delta,\delta'\in\Delta\cup\{\delta_0\}$. Finally, set
    $\Delta_v=\Delta\cup\{\delta_0\}$ and $\bago(v)=r$.

  \smallskip \item\label{it:r3} Let $r$ be transitive, $w\in T$,
    $\abf\subseteq F_r(w)$ an $r$-cluster in $\bagz(w)$ such
    that:
    \begin{enumerate}[label=(\alph*)]

      \item $w=\varepsilon$ and $\abf\subseteq
	\Delta^r\cup\mn{ind}(\Amc)$, or

      \item $w\neq \varepsilon$ and $\bago(w)=r$.

    \end{enumerate}
    If there is a direct $r$-successor $e$ of $\tau(\delta)$ in \Imc
    for some $\delta\in\abf$ such that $(\delta,\delta')\notin r^\Jmc$
    for any $\delta'$ with $\tau(\delta')=e$, then add a fresh
    successor $v$ of $w$ to $T$, and define 
    \begin{align*}
      \Delta &= \{f_v\mid f\in \mn{Wit}_{\Imc,r}(e)\setminus
      \mn{Wit}_{\Imc,r}(\tau(\delta))\} \quad\text{and}\\
      \Delta_v &=\Delta\cup \abf \cup \{ \delta''\mid
	r(\delta',\delta'')\in \bagz(w)\text{ for some
	}\delta'\in\abf\}.
    \end{align*}
   Then extend the domain of \Jmc with $\Delta$ and the interpretation
   of concept names such that~\eqref{eq:ij} is satisfied for all 
   $\delta\in\abf\cup\Delta$ and $\delta'\in\Delta_v$. Finally, set $\bago(v)=r$.

\end{enumerate}
To finish the construction, let \Jmc be the interpretation obtained in
the limit, and set $\Imf(w)=\Jmc|_{\Delta_{w}}$, for all $w\in T$. It
is verified in the appendix that $(T,\bagz,\bago)$ and \Jmc satisfy
the conditions from Theorem~\ref{thm:sq-canmod}. Notably, $\tau$ is a
homomorphism from $\Jmc$ to \Imc, thus $\Jmc\not\models\vp$, due to
preservation under homomorphisms.

Rules~\ref{it:r1}--\ref{it:r3} are, respectively, in one-to-one correspondence
with Conditions \ref{it:can2}--\ref{it:can4} in Definition~\ref{def:candec}. In
particular, \ref{it:r1} implements the well-known unraveling procedure for
non-transitive roles. \ref{it:r2} is used to change the `role
component' for transitive roles by creating a fresh node whose
interpretation contains all
witnesses of the chosen element $\delta$. Finally, \ref{it:r3} describes how
to unravel direct $r$-successors in case of transitive roles~$r$.
In the definition of $\Delta$ it is taken care that
witnesses which are `inherited' from predecessors are not introduced
again, in order to preserve at-most restrictions.

We finish the section
with an illustrating example. 
%

\begin{example}\label{exam:unra}
 Let $\Kmc$ be the following KB, where $r \in \mn{N}_\mn{R}^{t}$:   $$(\{A_1 \sqsubseteq \qnrleq 1  r  B, A_2 \sqsubseteq \qnrleq
 1  r  C \},  \{A_1(a)\}).$$ Figure
 \ref{fig:candec} shows  a model $\Imc$ of \Kmc and  a canonical
 decomposition \Tmf of its unraveling (transitivity connections are
 omitted).  In the initialization phase, the interpretation $\Imf(\varepsilon)$  is constructed
 starting from individual $a$. Since $a \rightsquigarrow_{\Imc,r} e$ and $e
 \rightsquigarrow_{\Imc,r} f$, we have 
 $\mn{Wit}_{\Imc,r}(a)=\{e,f\}$, thus $e_r$ and $f_r$ are added in
 this phase. The interpretations $\Imf(v_i)$ are introduced using \ref{it:r3}: In all cases 
 $\Delta_\varepsilon$ is the cluster $\abf$ and $\delta=a$; and, e.g., $\Delta =\{ c_{v_1}\}$ for $\Imf(v_1)$. 
 

\begin{figure}[t!]
\centering
 \includegraphics[trim = 0.60cm 0 0.6cm 0, scale=0.50]{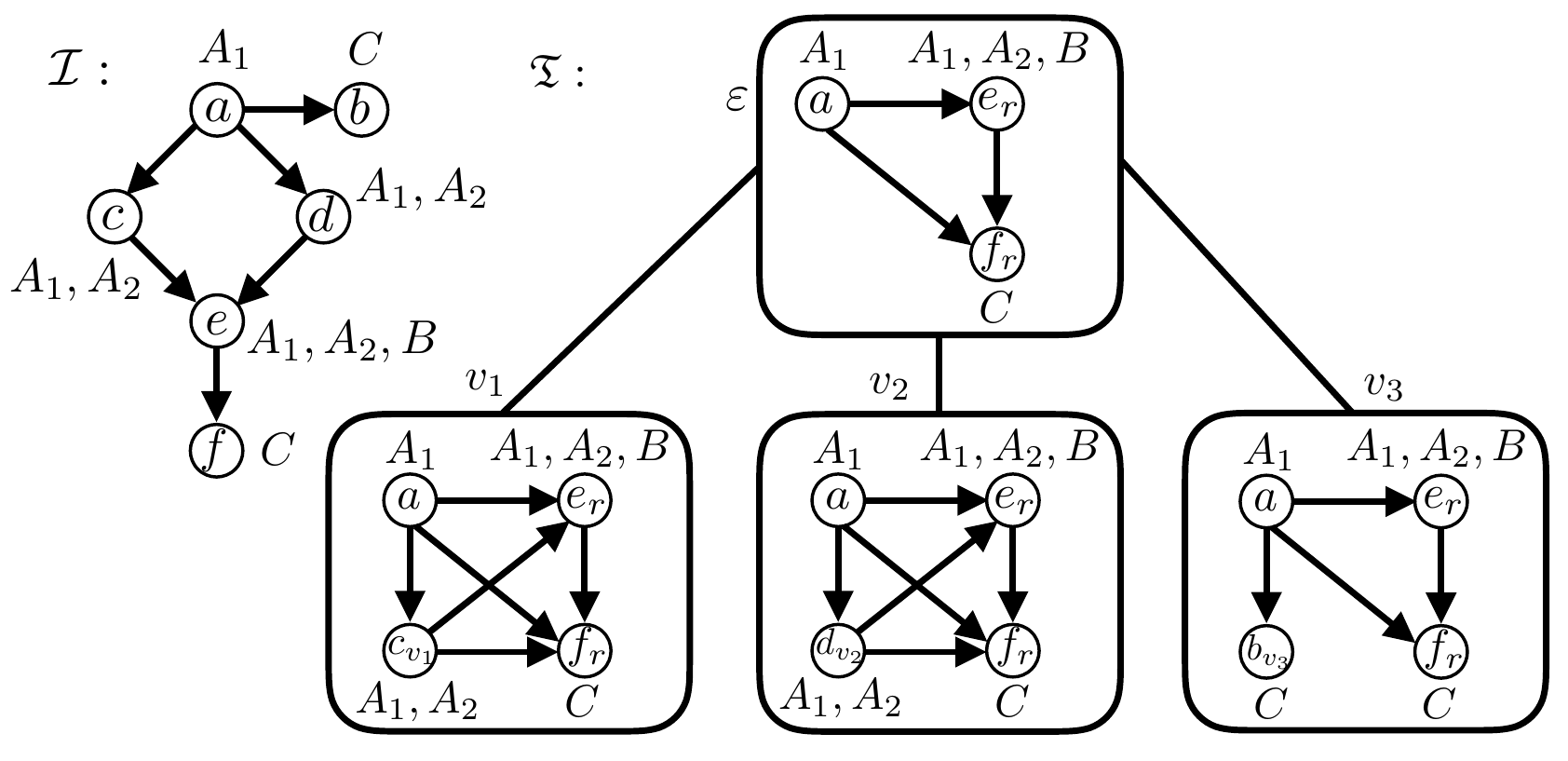}
 \vspace{-3mm}\caption{Example~\ref{exam:unra}}\label{fig:candec}
 \end{figure}
\end{example}

\section{Automata-Based Query Entailment}\label{sec:complexityqa}

In this section, we devise an automata-based decision procedure for
query entailment  in \SQ. We start with the necessary background about
the used automata model.

\smallskip\textbf{Alternating Tree Automata.} A tree is {\em $k$-ary}
if each node has {\em exactly} $k$ successors. For brevity, we set
$[k]= \{-1,0, \ldots, k\}$. Let $\Sigma$ be a finite alphabet. A
\emph{$\Sigma$-labeled tree} is a pair  $(T,\tau)$ with  $T$ a tree
and $\tau: T\rightarrow \Sigma$ assigns a letter from $\Sigma$ to each
node.
%
A \emph{two-way alternating  tree automaton (2ATA)} over
$\Sigma$-labeled $k$-ary trees is a tuple $\Amf=(Q, \Sigma, q_0,
\delta, F)$ where $Q$ is a finite set of \emph{states}, $q_0 \in Q$ is
an \emph{initial state}, $\delta$ is the \emph{transition function},
and $F$ is the \emph{(parity) acceptance condition}~\cite{Vardi98}.
The transition function maps a state $q$ and an input letter $a\in
\Sigma$ to a positive Boolean formula over the constants $\mn{true}$
and $\mn{false}$, and variables from $[k]\times Q$. The semantics is
given in terms of \emph{runs}, see  appendix. As usual,
$L(\Amf)$ denotes the set of trees accepted by \Amf.  Emptiness of
$L(\Amf)$ can be checked in exponential time in the number of states
of \Amf~\cite{Vardi98}.

\smallskip\textbf{General Picture.} The leading thought is as follows.
If $\Kmc\not\models \vp$, then, by Theorem~\ref{thm:sq-canmod}, there
is a model $\Jmc$ of $\Kmc$ and a canonical tree decomposition thereof
with small width and outdegree such that $\Jmc\not\models \vp$. The
idea is to design 2ATAs $\Amf_{\mn{can}}$, $\Amf_\Kmc$, and $\Amf_\vp$
which accept canonical tree decompositions, (tree-like) models of the KB $\Kmc$,
and (tree-like) models of the query $\vp$, respectively. Query answering is then
reduced to the question whether some tree is accepted by
$\Amf_\mn{can}$ and $\Amf_\Kmc$, but not by $\Amf_\vp$.  As we shall
see, these automata have size exponential in \Kmc and can be
constructed in double exponential time.  Since 2ATAs can be
complemented and intersected in polynomial time, the automaton
$\Amf_\mn{can}\wedge \Amf_\Kmc\wedge \neg \Amf_\vp$ is of exponential
size, and can be constructed in double exponential time. Checking it
for non-emptiness can thus be done in double exponential time. A matching
lower bound is inherited from positive existential query answering
in~\ALC~\cite{CalvaneseEO14}. We thus obtain our main result.
\begin{theorem}\label{thm:main}
  PRPQ entailment over \SQ-knowledge bases is \TwoExpTime-complete. 
\end{theorem}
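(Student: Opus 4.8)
The plan is to establish the two matching bounds separately. The lower bound is essentially free: as the excerpt notes, positive existential query answering is already \TwoExpTime-hard for the logic \ALC~\cite{CalvaneseEO14}, and since \ALC is a syntactic fragment of \SQ (every \ALC-KB is an \SQ-KB, and \ALC-queries are PRPQs), \TwoExpTime-hardness transfers immediately. So the entire work lies in the matching \TwoExpTime upper bound, which is obtained by carrying out the three-step automata construction outlined in the introduction and assembling the pieces.

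For the upper bound, I would first fix an \SQ-KB \Kmc and a PRPQ \vp, and reduce query entailment to an emptiness check. By Theorem~\ref{thm:sq-canmod}, if $\Kmc\not\models\vp$ then there is a countermodel \Jmc admitting a canonical tree decomposition of width and outdegree bounded by $O(|\Amc|\cdot 2^{p(|\Tmc|)})$; conversely any such tree-like countermodel witnesses non-entailment. Hence it suffices to decide whether there exists a canonical tree decomposition that encodes a model of \Kmc but admits no match for \vp. I would encode such decompositions as $\Sigma$-labeled $k$-ary trees, where $k$ is the outdegree bound from the theorem and $\Sigma$ records, for each node $w$, the (exponentially large but finite) data describing $\Imf(w)$ together with the role label $\bago(w)$. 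The three automata $\Amf_\mn{can}$, $\Amf_\Kmc$, $\Amf_\vp$ then check, respectively, that the tree is a genuine canonical decomposition (verifying conditions~\ref{it:can1}--\ref{it:can4} locally), that the encoded interpretation models \Kmc (including the global transitivity and number-restriction constraints), and that the encoded interpretation admits a match for \vp.

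The crucial bookkeeping is the size of the automata. Each node label lives in an alphabet of size doubly exponential in \Kmc, but the \emph{number of states} of each 2ATA must stay singly exponential in \Kmc so that, after complementing $\Amf_\vp$ and intersecting, the product automaton $\Amf_\mn{can}\wedge\Amf_\Kmc\wedge\neg\Amf_\vp$ still has exponentially many states. Since 2ATAs are closed under complementation and intersection in polynomial time (in the number of states), and emptiness is decidable in exponential time in the number of states~\cite{Vardi98}, an exponential state count yields the \TwoExpTime\ bound overall. I would therefore verify that $\Amf_\mn{can}$ and $\Amf_\Kmc$ can be built with exponentially many states using the canonical decomposition's $r$-fresh structure to count $r$-successors locally (Section~4.1), and that $\Amf_\vp$, built by the technique of Section~4.2, likewise uses only exponentially many states.

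The main obstacle is the construction of $\Amf_\vp$ with only exponentially many states. The difficulty, flagged in the introduction, is that the width of the decomposition is itself exponential, so a naive adaptation of the standard query automaton of~\cite{CalvaneseEO14}—which would track partial matches across the large bags—blows up to doubly exponentially many states and breaks the complexity budget. The key step will be the novel direct technique of Section~4.2 for evaluating the regular expressions of \vp along the tree-like model without materializing partial matches bag-by-bag; establishing that this automaton correctly recognizes exactly the decompositions whose encoded interpretation satisfies \vp, while keeping the state count singly exponential, is where the real work and the risk of error lie. Once all three automata are shown to have exponentially many states and the required semantic correctness, assembling them via the closure and emptiness results of~\cite{Vardi98} gives the upper bound, and together with the \ALC\ lower bound completes the proof.
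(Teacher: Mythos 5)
Your proposal is correct and follows essentially the same route as the paper's own proof: the lower bound is inherited from \ALC, and the upper bound combines Theorem~\ref{thm:sq-canmod} with the automata $\Amf_\mn{can}$, $\Amf_\Kmc$, $\Amf_\vp$, relying (as the paper does via Lemmas~\ref{lem:kba-correct} and~\ref{lem:query-automaton}) on each having only exponentially many states so that polynomial-time complementation/intersection of 2ATAs plus the emptiness test, exponential in the number of states, stay within \TwoExpTime. You also correctly isolate the genuine crux deferred to Section~4.2, namely building $\Amf_\vp$ with singly exponentially many states despite the exponential width of the decompositions.
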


\smallskip\textbf{Encoding Tree Decompositions.}
As the underlying interpretation might be infinite, 
2ATAs cannot directly work over tree decompositions.
Thus, for the desired approach to work, it is crucial to
\emph{encode} tree decompositions using a finite alphabet. 
To this aim, we use an approach similar to~\cite{GradelW99}.
 
Throughout this section, fix a knowledge base $\Kmc$ and
let $K$ and $k$ be the bounds on width and outdegree, respectively, obtained in Theorem~\ref{thm:sq-canmod}. Then, fix a finite set
$\Delta$ having 
$2K$ elements with
$\mn{ind}(\Amc)\subseteq \Delta$, 
and define $\Sigma=\{\bullet\}\cup\Sigma'$, where
$\Sigma'$ is the set of all pairs $(\Imc,x)$ such that $\Imc$ is an
interpretation where only symbols from \Kmc are interpreted non-empty,
$\Delta^\Imc\subseteq \Delta$, $|\Delta^\Imc|\leq K$, and $x$ is
either a role name from \Kmc or $\bot$. The symbol $\bullet\in\Sigma$
is used to encode non-existing branches  (tree decompositions are not necessarily uniformly branching).

Let $(T,\tau)$ be a $\Sigma$-labeled tree with $\Sigma$ as above. For convenience, we use
$\Imc_w$ and $r_w$ to refer to the single components of $\tau$ in a node $w$ with
$\tau(w)\neq \bullet$, that
is, $\tau(w)=(\Imc_w,r_w)$. Given an element $d\in \Delta$, we say that
$v,w\in T$ are \emph{$d$-connected} iff $d\in\Delta^{\Imc_u}$ for all
$u$ on the unique shortest path from $v$ to~$w$. In case $d\in
\Delta^{\Imc_w}$, we use $[w]_d$ to denote the set of all $v$ which
are $d$-connected to~$w$. We call $(T,\tau)$ \emph{consistent} if
$\varepsilon$ is the only node with $r_{\varepsilon}=\bot$ and
$(\Imc_w)|_{D}=(\Imc_v)|_D$ for all neighbors $v,w\in T$ and
$D=\Delta^{\Imc_w}\cap\Delta^{\Imc_v}$.
A consistent $\Sigma$-labeled tree $(T,\tau)$ \emph{represents}
a triple $(T,\bagz,\bago)$ of width at most $K$
as follows. The
domain underlying $(T,\bagz,\bago)$ is the set of all elements  $[w]_d$ with $w\in T$ and $d\in\Delta^{\Imc_w}$, and for every
$w\in T$, the interpretation $\bagz(w)$ is defined as:
\begin{align*}
  \Delta_w = \{[w]_d\mid d\in \Delta^{\Imc_w}\}, \quad
  A^{\Imf(w)} = \{[w]_d\mid d\in A^{\Imc_w}\},\\ 
  r^{\Imf(w)} = \{ ([w]_{d},[w]_{e})\mid (d,e)\in r^{\Imc_w}\},
  \hspace{1cm}
\end{align*} 
for all concept names $A$ and role names $r$ occurring in \Kmc; and
$\bago(w)$ is just $r_w$. 
We denote with $\Imc_{(T,\tau)}$ the interpretation
$\bigcup_{w\in T}\Imc_w$; clearly, $(T,\bagz,\bago)$ is a tree
decomposition of $\Imc_{(T,\tau)}$.  As a convention,
we use $[\varepsilon]_a$ to represent each ABox
individual  $a\in\mn{ind}(\Amc)$ in the encoding.
Based on the size $2K$ of $\Delta$, it is not hard to verify that,
conversely, for every width $K$ tree decomposition of some \Imc,
there is a consistent $(T,\tau)$ such that $\Imc_{(T,\tau)}$ is isomorphic to \Imc.


It is easy to
devise a 2ATA $\Amf_\mn{can}$ which accepts an input $(T,\tau)$
iff it is consistent and  the represented tree decomposition
$(T,\bagz,\bago)$ is canonical. We thus concentrate on the most challenging automata $\Amf_\Kmc$ and
$\Amf_\vp$.

\subsection{	Knowledge Base Automaton $\Amf_\Kmc$}\label{sec:KBA} The
automaton $\Amf_\Kmc$ is the intersection of two automata $\Amf_\Amc$
and $\Amf_\Tmc$ verifying that the input satisfies the ABox and the
TBox, respectively.  Note that, by Point~($ii$) of
Theorem~\ref{thm:sq-canmod}, we can assume that the ABox is satisfied
in the root; thus, an automaton $\Amf_\Amc$ checking whether
$\Imc_{(T,\tau)}\models\Amc$ just has to check the
label~$\tau(\varepsilon)$, see the appendix.

\smallskip For the design of the automaton $\Amf_\Tmc$, assume
w.l.o.g.\ that $\Tmc$ is of the form $\{\top\sqsubseteq C_\Tmc\}$ and
$C_\Tmc$ is in negation normal form. We present the main ideas of the
construction of $\Amf_\Tmc$, see the appendix for further details.  In
its `outer loop', the automaton visits every domain element $d$ in
state $C_\Tmc(d)$. This is realized using the initial state $q_0$, and
states of the form $D(d)$, $D$ a sub-concept of $C_\Tmc$ and $d\in
\Delta$ via the following transitions for every $(\Imc,x)\in \Sigma$:
\begin{align*}
  \delta(q_0,(\Imc,x)) & = \textstyle\bigwedge_{1\leq i\leq K}(i,q_0)
  \wedge \bigwedge_{d\in\Delta^\Imc} (0,C_\Tmc(d)) \\
  \delta(q_0,\bullet) & = \mn{true} 
\end{align*}
If $\Amf_\Tmc$ visits $w$ in a state $D(d)$ this presents the
obligation to verify that, in the represented model, $[w]_d$ satisfies
$D$. The Boolean operations are dealt with using the following
transitions, for every $(\Imc,x)\in \Sigma$:
\begin{align*}
  \delta( A(d),(\Imc,x)) & = \text{if $d\in A^\Imc$, then \mn{true}\
  else \mn{false}} \\
  \delta(\neg A(d),(\Imc,x)) & = \text{if $d\notin A^\Imc$, then
    \mn{true}\ else \mn{false}} \\
  \delta( (C_1\sqcup C_2)(d), (\Imc,x)) & = (0,C_1(d))\vee (0,C_2(d))
  \\
  \delta( (C_1\sqcap C_2)(d), (\Imc,x)) & = (0,C_1(d))\wedge
  (0,C_2(d))
\end{align*}
For states of the form $\qnrsim{n}{r}{D}(d)$ we have to be more careful.
The naive approach for counting the number of $r$-successors of $d$
satisfying $D$ would be to count the number of $r$-successors
satisfying $D$ in the interpretation associated to the current node,
and then move to all other nodes where $d$ appears.  Since
interpretations associated to neighboring nodes might overlap, to
avoid double counting, we have to store (in the states) all elements
that have already been counted in the current node before changing the
node. However, since the domain in each node has size exponential in
$|\Tmc|$, we need doubly exponentially many states for this task.
Since this naive approach does not result in optimal complexity, we pursue an
alternative approach, based on canonicity, leading to only
exponentially many states.

Our approach is based on characterizing how $r$-successors
of an element can be uniquely identified in canonical tree
decompositions.  Assume some $(T,\tau)\in L(\Amf_\mn{can})$ and let $r$
be a role name. In what follows, we assume that the notions of `fresh'
and `$r$-fresh' are lifted to the encoding in the straightforward way. 
An \emph{$r$-path from $[w]_d$ to $[v]_e$ in $(T,\tau)$} is a sequence
$d_0,w_0,d_1,\ldots,w_{n-1},d_n$ such that $d=d_0$, $e=d_n$, $w_0\in
[w]_d$, $w_{n-1}\in [v]_e$, and $(d_i,d_{i+1})\in r^{\Imc_{w_i}}$, for
all $0\leq i<n$. It is \emph{downward} if, for all $0< i<n$, $w_{i}$
is a successor of $w_{i-1}$ and $d_{i}$ is contained in an $r$-root
cluster of $w_{i}$. We then have:
%
%
%
%
\begin{lemma} \label{lem:can-paths-encoding}
  %
  For $(T,\tau)\in L(\Amf_{\mn{can}})$,
  we have $([w]_d,[v]_e)\in r^{\Imc_{(T,\tau)}}$ iff one of the
  following is true:
  \begin{itemize}

    \item[--] $r$ is non-transitive and $(d,e)\in
      r^{\Imc_\varepsilon}$ or $(d,e)\in r^{\Imc_v}$,
      $d$ is fresh in $w$, and $v$ is a successor of $w$, or

    \item[--] $r$ is transitive, and there is an $r$-path
      $d_0,w_0,\ldots,d_n$ from $[w]_d$ to $[v]_e$ such that one of
      the following holds: 
      \begin{enumerate}[label=\textbf{\Alph*},leftmargin=*,align=left]


	\item\label{it:p1} $d_0\in\roots(w_0)\cup\roots(w_0\cdot
	  -1)$, $d_1\in\roots(w_0)$, and  
	  $d_0,\ldots,d_n$ is downward, or

	\item\label{it:p2} $d_0\in\roots(w_0)$,
	  $d_1\notin\roots(w_0)$, and if $n>1$, then $d_1,\ldots,d_n$
	  is downward and $w_1{\cdot}{-}1 \in [w]_{d_1}$ is an
	  ancestor of $w_0$ such that $d_1 \in F_r(w_1{\cdot}{-}1)$. 

      \end{enumerate}

    \end{itemize}
  
\end{lemma}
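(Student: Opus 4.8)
The plan is to read off both relations from the representation and then prove a normal-form statement for reachability in the transitive case. By the definition of the represented triple together with Definition~\ref{def:treedecomp}\ref{it:tree4}, every local pair $(d,e)\in r^{\Imc_w}$ induces an edge $([w]_d,[w]_e)\in r^{\Imf(w)}$, and $r^{\Imc_{(T,\tau)}}$ equals $\chi_r=\bigcup_{w}r^{\Imf(w)}$ if $r$ is non-transitive and its transitive closure $\chi_r^{+}$ if $r$ is transitive. Hence, for transitive $r$, membership $([w]_d,[v]_e)\in r^{\Imc_{(T,\tau)}}$ is equivalent to the existence of an $r$-path, and the ``if'' direction of the lemma is immediate, since any path in form \ref{it:p1} or \ref{it:p2} is in particular a $\chi_r$-path. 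For non-transitive $r$ a single local edge already witnesses membership; by consistency and canonicity that edge either lies in $\Imf(\varepsilon)$ (first disjunct) or is the unique edge $s^{\Imf(v)}=\{(d,e)\}$ produced at a successor node $v$ as in \ref{it:can2}, in which case $w=v\cdot -1$, $d$ is fresh in $w$ and $e$ is fresh in $v$ (second disjunct). The entire difficulty therefore reduces to the ``only if'' direction for transitive $r$: turning an arbitrary $\chi_r$-path into one of the two prescribed forms.

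Before the rewriting I would collect the local shape of transitive edges allowed by canonicity. Condition \ref{it:can1} guarantees that every non-loop $r$-edge outside $\Imf(\varepsilon)$ lives inside a maximal connected block of nodes labelled $r$, attached to the root and to differently-labelled blocks only through shared elements (the latter by \ref{it:can3}, whose interface is a single element). Inside such a block, \ref{it:can4} forces each node $v$ with parent $w$ to carry a fresh $r$-root cluster $\abf\subseteq\roots(w)$ that is already an $r$-cluster in $\Imf(w)$, to inherit by (c) all immediate $r$-successors of $\abf$, and by (d) to admit no new edge pointing into an inherited element. Combined with the stated uniqueness of the node $u$ with $d\in\roots(u)$, this exhibits the block as a forest of clusters in which a local edge $(d_i,d_{i+1})$ is of exactly one of three kinds: inside a single ($r$-clique) cluster; from a cluster to a freshly spawned child cluster; or onto an inherited successor whose ``home'' node lies strictly above. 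I would record this trichotomy for use in the normalization.

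For the ``only if'' direction I would then take a $\chi_r$-path witnessing the pair that is minimal for a well-founded measure (length, then total tree displacement) and show it has no detour. Using the inheritance clause \ref{it:can4}(c) and the uniqueness of $\roots(\cdot)$, any maximal sub-walk that first moves towards the root and later moves away can be contracted, because inherited copies of an element carry the same downward reachability; a minimal path is thus downward after its first edge, which yields the ``$d_0,\dots,d_n$ is downward'' resp.\ ``$d_1,\dots,d_n$ is downward'' requirement. The first edge is finally classified by whether its target is $r$-fresh in the current node $w_0$: if $d_1\in\roots(w_0)$ we are in form \ref{it:p1}, and the trichotomy forces $d_0\in\roots(w_0)\cup\roots(w_0\cdot -1)$ (same fresh cluster, or the cluster spawning it); otherwise $d_1$ is inherited and its unique home $w_1\cdot -1$ is a $d_1$-connected ancestor of $w_0$ with $d_1\in\roots(w_1\cdot -1)$, while minimality places the edge at the shallowest node where $d_1$ is available, forcing $d_0\in\roots(w_0)$, i.e.\ form \ref{it:p2}. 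Transitive edges living purely in $\Imf(\varepsilon)$ are absorbed into this picture because the root's seed clusters are $r$-fresh and are continued downward by \Rthree, so $w_0$ may always be chosen inside a block.

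The main obstacle is precisely this detour-elimination. The delicate point is the interaction between clusters, where all pairs are mutually $r$-related by a single local edge, and the tree order: one has to verify that contracting an up-and-down excursion, or rerouting through a cluster, never produces a shorter witness outside the two normal forms and that the process terminates. This is exactly where conditions \ref{it:can4}(c)--(d) and the uniqueness of $\roots(\cdot)$ carry the argument, and handling them carefully, together with the boundary behaviour at the root and at block interfaces, is the bulk of the proof.
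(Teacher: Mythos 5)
Your proposal is correct and takes essentially the same route as the paper: the paper proves this statement as the decomposition-level Lemma~\ref{lem:can-paths}, whose auxiliary Lemma~\ref{lem:roots} is exactly your trichotomy of local $r$-edges via the unique home node $w_{d,r}$ with $d\in\roots(w_{d,r})$, and whose body carries out precisely your detour elimination (there by explicit contraction operations and a pushing-up claim based on \ref{it:can4}, where you instead take a path minimal for a well-founded measure) followed by the same classification of the first edge into forms \ref{it:p1} and \ref{it:p2}. Two sketch-level slips to fix when writing this out, neither of which changes the approach: \ref{it:can4}(d) forbids edges from inherited non-cluster sources \emph{into fresh} targets (fresh-to-inherited edges are permitted), not ``new edges into inherited elements,'' and in the \ref{it:p2} case $d_0\in\roots(w_0)$ is obtained by relocating the edge to the home node of $d_0$ (third case of Lemma~\ref{lem:roots}), not to the shallowest node where $d_1$ is available.
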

This lemma suggests the following approach for verifying the
obligation $\qnrsim{n}{r}{D}(d)$ at some node $w$. If $r$ is
non-transitive, `navigate' with the automaton to the (unique!) $w^*$
such that $d\in F(w^*)$ and count the $r$-successors of $d$ in the
successors $v$ of $w^*$, or in $\varepsilon$. If $r$ is transitive, navigate with the
automaton to the unique $w^*$ such that $d\in F_r(w^*)$ and change to
a state $q^*_{\qnrsim{n}{r}{D},d}$, starting from which $\Amf_{\Tmc}$
systematically scans the $r$-successors according to~\ref{it:p1}
and~\ref{it:p2}. We concentrate on verifying at-least restrictions,
at-most restrictions are completely complementary.

Assume $\tau(w^*)=(\Imc,x)$, and let $\abf_1,\ldots,\abf_\ell$ be all
$r$-clusters in $\Imc$ reachable from $d$ (including $Q_{\Imc,r}(d)$),
and let $a_1,\ldots,a_\ell$ be representatives of these clusters.
Moreover, let $N$ be the set of all tuples $\nbf=(n_1,\ldots,n_\ell)$
such that $\sum_{i}n_i=n$.  Then, the transition
$\delta(q^*_{\qnrgeq{n}{r}{D},d},(\Imc,x))$ is defined as
\begin{align*}
  \bigvee_{\nbf\in N}\bigvee_{X\subseteq [1,\ell]}\bigwedge_{i\in X}
  (0,q^{\ref{it:p1}}_{\qnrgeq{n_i}{\!r}{D},a_i})\wedge\!\!\!\bigwedge_{i\in
  [1,\ell]\setminus X} (0,q^{\ref{it:p2}}_{\qnrgeq{n_i}{\!r}{D},a_i}).
\end{align*}
Thus, $\Amf_{\Tmc}$ guesses a distribution of $n$ to the reachable
clusters. Moreover, it guesses from which clusters it starts paths of
the shape~\ref{it:p1} and~\ref{it:p2}. For both guesses, it verifies
that the chosen $a_i$ is $r$-fresh (for~\ref{it:p1}) or not
(for~\ref{it:p2}), and continues in states $q^\downarrow_{\qnrgeq
n r D}$ and $q^\uparrow_{\qnrgeq n r  D}$, respectively. This is done
using the following transitions:
\begin{align*}
  \delta(q^{\ref{it:p1}}_{\qnrgeq{n}{r}{D}, d}, (\Imc,x)) & =
  (0,F_{r,d})\wedge (0, q_{\qnrgeq{n}{r}{D},d}^\downarrow) \\
  \delta(q^{\ref{it:p2}}_{\qnrgeq{n}{r}{D}, d}, (\Imc,x)) & =
  (0,\overline{F}_{r,d})\wedge (-1,q_{\qnrgeq{n}{r}{D},d}^\uparrow)\\
  \delta(F_{r,d}, (\Imc,\bot)) & = \mn{true} \\ 
  \delta(F_{r,d}, (\Imc,x)) & = \mn{false} \quad\quad\quad \text{if
    $x\notin \{r,\bot\}$} \\
  \delta(F_{r,d}, (\Imc,r)) & = (-1, F'_{r,d}) \\
  \delta(F'_{r,d}, (\Imc,x)) & = \begin{cases} \mn{true } & \text{if
      $x\notin \{r,\bot\}$ or $d\not \in \Delta^\Imc$}, \\ \mn{false}
      & \text{otherwise}, \end{cases}
\end{align*}
and complementary transitions for $\overline F_{r,d}$.  Now, in states
$q^\uparrow_{\qnrgeq{n}{r}{D},d}$, the automaton goes up until it
finds the world where $d$ is $r$-fresh (corresponding to $w_1\cdot -1$
in~\ref{it:p2}) and looks for downward paths starting from there. This
is done by taking setting $\delta(q^\uparrow_{\qnrgeq{n}{r}{D},d},(\Imc,x)) = \mn{false}$ whenever $d\notin\Delta^\Imc$, and
otherwise:
\begin{align*}
  %
  %
  \delta(q^\uparrow_{\qnrgeq{n}{r}{D},d},(\Imc,x)) & =
  (0,q^{\ref{it:p1}}_{\qnrgeq{n}{r}{D},d})\vee
  (0,q^{\ref{it:p2}}_{\qnrgeq{n}{r}{D},d}).
\end{align*}
It thus remains to describe transitions for states of the form
$q^\downarrow_{\qnrgeq{n}{r}{D},d}$ at some node $w$. Such  situations
represent the obligation to find $n$ $r$-successors along downward
paths from $d$. Note that the transitions before ensure that $d\in
F_r(w)$. In this case, the automaton guesses how many of the $n$
successors  it will find locally in the current cluster (using states
$p^{\mn{loc}}_{m,r,D,d}$), and how many are to
be found in successor nodes (using
$p^{\mn{succ}}_{\qnrgeq{m}{r}{D}}$). Formally, let $M$ be the set of all
tuples $\mbf=(m_0,\ldots,m_k)$ with $\sum_{i}m_i=n$, and define the
transition for $\delta(q^\downarrow_{{\qnrgeq n r  D},d},(\Imc,x))$ as:
\begin{align*}
  \bigvee_{\mbf\in M} \Big((0,p^\mn{loc}_{m_0,r,D,d}) \wedge
  \bigwedge_{i\in[1,k]} (i,p^{\mn{succ}}_{\qnrgeq{m_i}{r}{D},d})\Big)
\end{align*}
States of the form $p^\mn{loc}_{n,r,D,d}$ are used  to verify that in
$Q_{\Imc,r}(d)$ there are $n$ elements satisfying $D$:
\begin{align*}
  \delta(p^\mn{loc}_{n,r,D,d}, (\Imc,x)) & = \bigvee_{Y\subseteq
    Q_{\Imc,r}(d),|Y|=n} \bigwedge_{e\in Y} D(e).
\end{align*}
It remains to give the transitions for states
$p^{\mn{succ}}_{\qnrgeq{m}{r}{D}}$. To start, we set
$\delta(p^{\mn{succ}}_{\qnrgeq{n}{r}{D},d},\sigma)=\mn{true}$,
whenever $n=0$; $\delta(p^\mn{succ}_{\qnrgeq{n}{r}{D},d},\bullet)=\mn{false}$;
and $\delta(p^\mn{succ}_{\qnrgeq{n}{r}{D},d},(\Imc,x))=\mn{false}$ whenever
$x\neq r$ or $d$ is not in a root cluster of \Imc. For all other
cases, let $\abf_1,\ldots,\abf_\ell$ be all $r$-clusters reachable
from $d$, except $Q_{\Imc,r}(d)$, let $N$ be again the set of all
$\nbf=(n_1,\ldots,n_\ell)$ such that $\sum_{i}n_i=n$, and include the
transition
\begin{align*}
  \delta(p^\mn{succ}_{\qnrgeq{n}{r}{D},d},(\Imc,x)) & = \bigvee_{\nbf\in N}
  \bigwedge_{i\in[1,\ell]}
  (0,q^{\ref{it:p1}}_{\qnrgeq{n_i}{r}{D},a_i}).
\end{align*}
Using the parity condition, we make sure that states
$q^\downarrow_{\qnrgeq{n}{r}{D},d}$ with $n\geq 1$ are not suspended
forever, that is, eventualities are finally satisfied. 
\begin{lemma} \label{lem:kba-correct}
  For every $(T,\tau)\in L(\Amf_{\mn{can}})$, we have $(T,\tau)\in
  L(\Amf_\Tmc)$ iff $\Imc_{(T,\tau)}\models \Tmc$. It
  can be constructed in time double exponential in $|\Kmc|$, and has
  exponentially many states in $|\Kmc|$.
\end{lemma}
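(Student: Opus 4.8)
The plan is to first reduce the global condition $\Imc_{(T,\tau)}\models\Tmc$ to a local, per-element one: since we assume $\Tmc=\{\top\sqsubseteq C_\Tmc\}$, we have $\Imc_{(T,\tau)}\models\Tmc$ iff $[w]_d\in C_\Tmc^{\Imc_{(T,\tau)}}$ for every node $w$ and every $d\in\Delta^{\Imc_w}$. The outer loop realized by $q_0$ visits exactly these elements and spawns the obligation $C_\Tmc(d)$ at each of them, so it suffices to establish the invariant: for every $(T,\tau)\in L(\Amf_{\mn{can}})$, every subconcept $D$ of $C_\Tmc$, every node $w$ and every $d\in\Delta^{\Imc_w}$, there is an accepting run of $\Amf_\Tmc$ starting in state $D(d)$ at $w$ iff $[w]_d\in D^{\Imc_{(T,\tau)}}$. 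I would prove this by induction on the structure of $D$. The cases $D=A$, $D=\neg A$ and the Boolean cases are immediate from the respective transitions, using that by consistency the membership of $[w]_d$ in a concept name does not depend on the chosen representative of $[w]_d$.

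The crux is the inductive step for number restrictions, and here Lemma~\ref{lem:can-paths-encoding} is the bridge between the automaton's path-guessing and the genuine $r$-successor relation of $\Imc_{(T,\tau)}$. Consider an at-least restriction $\qnrgeq n r D$ at $(w,d)$; the at-most case $\qnrleq n r D$ is handled \emph{complementarily}, i.e.\ by dualizing all transitions (swapping $\wedge/\vee$ and $\mn{true}/\mn{false}$) and the parity condition, which is correct precisely because $\Amf_\Tmc$ is alternating. For transitive $r$, the navigation gadget (states $F_{r,d}$, $\overline F_{r,d}$, $F'_{r,d}$ and $q^\uparrow$) moves the automaton to the \emph{unique} node $w^*$ with $d\in F_r(w^*)$, whose existence and uniqueness is guaranteed by canonicity as noted before Definition~\ref{def:candec}; I would first check that the $F_{r,d}/F'_{r,d}$ transitions indeed implement the test $d\in F_r(\cdot)$ (with the root handled so that every element may carry fresh $r$-successors). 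From $w^*$ the automaton guesses a distribution in $N$ of $n$ over the reachable clusters $\abf_1,\dots,\abf_\ell$ together with a set $X\subseteq[1,\ell]$ recording, per cluster, whether it is reached by a path of type \ref{it:p1} (fresh start) or \ref{it:p2} (non-fresh start), and then counts locally via $p^{\mn{loc}}$ and in successor nodes via $p^{\mn{succ}}$, recursing through $q^{\ref{it:p1}}$. The parity condition rules out runs that postpone the eventualities $q^\downarrow_{\qnrgeq n r D,d}$ with $n\geq 1$ forever, so a successful run genuinely exhibits $n$ witnesses at finite depth.

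The main obstacle, and the heart of the argument, is to show that this guess-and-verify scheme counts each $r$-successor satisfying $D$ \emph{exactly once}. For completeness, given $n$ distinct witnessing successors of $[w]_d$, I would attach to each of them a canonical $r$-path of type \ref{it:p1} or \ref{it:p2} via Lemma~\ref{lem:can-paths-encoding}, group the witnesses by the reachable cluster containing their endpoint, and read off from this grouping the distributions in $N$ and $M$ and the set $X$ that make the run accept. For soundness I would argue the converse: distinct reachable clusters contribute \emph{disjoint} sets of successor elements, and the \ref{it:p1}/\ref{it:p2} dichotomy partitions the paths emanating from each cluster, so that the contributions along a guessed distribution are additive; here the no-reintroduction property built into the unraveling (Condition~\ref{it:can4}, via the $F_r$-bookkeeping and the $\mn{Wit}$-differences in the rules) is what ensures that no element is reachable simultaneously as a local and as a successor witness of the same cluster. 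Carrying out this disjointness analysis case by case along the two path shapes of Lemma~\ref{lem:can-paths-encoding} is the most delicate part.

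Finally, the complexity bounds follow by tabulating the states. Each state is indexed by a bounded tuple over a subconcept $D\in\mn{cl}(\Tmc)$ (linearly many), a role $r$ (linearly many), an element $d\in\Delta$ (exponentially many, since $|\Delta|=2K$ with $K$ exponential by Theorem~\ref{thm:sq-canmod}), and a number bounded by the largest integer in $\Tmc$ (exponentially many under binary coding), whence the total number of states is $2^{O(|\Kmc|)}$, i.e.\ exponential. The transition function, by contrast, ranges over the alphabet $\Sigma$, whose letters carry a relation over an exponential-size domain and are therefore doubly exponentially many, while the individual transition formulas (the disjunctions over distributions in $N$, $M$ and over subsets $X\subseteq[1,\ell]$ with $\ell\leq K$) are of doubly exponential size; hence $\Amf_\Tmc$ can be written down, and thus constructed, in double exponential time in $|\Kmc|$, while retaining only exponentially many states.
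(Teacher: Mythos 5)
Your proposal is correct and follows essentially the same route as the paper's proof: the paper likewise completes the transition function (the $F_{r,d}$-navigation gadget with the root treated as $\bot$, non-transitive counting, and dualized transitions for at-most restrictions), sets the parity condition to discharge the $q^\downarrow$-eventualities, and then treats correctness as routine on the grounds that $\Amf_\Tmc$ implements the strategy of Lemma~\ref{lem:can-paths-encoding} --- precisely the structural induction plus cluster-disjointness argument you sketch. Your complexity accounting also coincides with the paper's (exponentially many states of the form $D(d)$, $q^{*}$, $p^{\mn{loc}}$, $p^{\mn{succ}}$ over $\mn{cl}(\Tmc)\times\Delta$ and binary-coded counters, with doubly exponentially large transition formulas and alphabet yielding double exponential construction time), and is if anything stated more carefully than the paper's passing remark on the alphabet size.
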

 
\subsection{Query Automaton $\Amf_\varphi$} 

In previous work, we have observed that  the approach for the query
automaton taken in~\cite{CalvaneseEO14} leads to a 2ATA with double
exponentially many states in $\Kmc$, and thus not to optimal
complexity~\cite{GIJ-DL17}.  We thus take an alternative approach by first
giving an intermediate characterization for when a query has a match,
and then show how to exploit this to build a 2ATA with exponentially
many states.

Fix a P2RPQ $\varphi=\exists \xbf\, \psi(\xbf)$. Note first that since for every regular
expression $\Emc$ over some alphabet $\Gamma$, one can construct in polynomial
time an equivalent non-deterministic finite automaton (NFA)
$\Bmf=(Q_\Bmf,\Gamma,s_{0 \Bmf },\Delta_\Bmf,F_\Bmf)$~\cite{Furer80},
we generally assume an NFA-based representation, that is, 
atoms in $\vp$ take the shape $\Bmf(t,t')$, \Bmf an NFA. For
states $s,s'\in Q_\Bmf$, write $\Bmf_{s,s'}$ for the NFA that
is obtained from $\Bmf$ by taking $s$ as initial state and $\{s'\}$ as
the set of final states.  To give semantics to the automata based
representation, we define $\Imc\models \Bmf(a,b)$
iff $\Imc\models\Emc_\Bmf(a,b)$, where $\Emc_\Bmf$ is a regular
expression equivalent to \Bmf.

A \emph{conjunctive regular path query (CRPQ)} is a PRPQ which does
not use $\vee$. It is well-known that the PRPQ $\varphi$ is equivalent
to a disjunction $q_1\vee\ldots\vee q_n$ of CRPQs, where $n$ is
exponential in $|\varphi|$. Given a CRPQ $p$, we denote with $\hat p$
the equivalent CRPQ obtained from $p$ by replacing every occurrence of
$r$ or $r^-$, $r$ transitive, with $r\cdot r^*$ or $r^-\cdot (r^-)^*$,
respectively. Let $(T,\tau)$ be a consistent $\Sigma$-labeled tree. In
the appendix, we show the following characterization.  
\begin{lemma}\label{lem:witness-sequence}
A function $\pi:\xbf\cup I_\varphi\to \Delta^{\Imc_{(T,\tau)}}$ with
$\pi(a)=[\varepsilon]_a$, for every $a\in I_\varphi$, is a match for
$\varphi$ in $\Imc_{(T,\tau)}$ iff there is a $q_i$ such that for
every $\Bmf(t,t')$ in $\hat q_i$, there is a sequence
\begin{align*}
  (d_0,s_0),w_1,(d_1,s_1),w_2,\ldots,
  w_n,(d_n,s_n),
\end{align*}
where $(d_i,s_i)\in \Delta \times Q_\Bmf$ and $w_i\in T$ and such
that:
\begin{enumerate}[label=\textit{(\alph*)},leftmargin=*,align=left]

  \item $s_0=s_{0\Bmf}$, $s_n\in F_\Bmf$,

  \item $\pi(t)=[w_1]_{d_0}$, $\pi(t')=[w_n]_{d_n}$, and

 \item for every $i\in[1,n]$, we have $d_{i-1},d_{i}\in
   \Delta^{\Imc_{w_i}}$, $w_{i}\in [w_{i-1}]_{d_{i-1}}$ if $i>1$,
   and $\Imc_{w_i}\models \Bmf_{s_{i-1},s_i}(d_{i-1},d_i)$.

\end{enumerate}
\end{lemma}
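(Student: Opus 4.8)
The plan is to first peel off the outer disjunction and the $\,\hat{}\,$-transformation, reducing the statement to a per-atom claim, and then to prove the two directions of that claim by gluing, respectively decomposing, paths along the tree decomposition represented by $(T,\tau)$. Since $\varphi$ is equivalent to $q_1\vee\dots\vee q_n$ and each $q_i$ is equivalent to $\hat q_i$ (the replacements $r\mapsto r\cdot r^*$ are sound because $\Imc_{(T,\tau)}$ interprets transitive roles transitively, so $r^+$ and $r$ coincide on it), $\pi$ is a match for $\varphi$ iff for some $i$ it is a match for $\hat q_i$, i.e.\ iff $\Imc_{(T,\tau)},\pi\models\Bmf(t,t')$ for every atom $\Bmf(t,t')$ of $\hat q_i$. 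It thus suffices to show, for a single such atom, that $\Imc_{(T,\tau)},\pi\models\Bmf(t,t')$ iff a witness sequence as in (a)--(c) exists. Throughout I use that, since $(T,\bagz,\bago)$ is a tree decomposition of $\Imc_{(T,\tau)}$, Condition~\ref{it:tree4} gives $r^{\Imc_{(T,\tau)}}=\chi_r$ for non-transitive $r$ and $r^{\Imc_{(T,\tau)}}=\chi_r^+$ for transitive $r$, with $\chi_r=\bigcup_{w}r^{\Imf(w)}$; in particular every \emph{single} $\chi_r$-edge lives inside one bag, i.e.\ is of the form $([w]_d,[w]_e)$ with $(d,e)\in r^{\Imc_w}$.

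For the direction from a witness sequence to a match, I would build a realizing path by concatenation. For each $i\in[1,n]$, Condition~(c) gives $\Imc_{w_i}\models\Bmf_{s_{i-1},s_i}(d_{i-1},d_i)$, hence a word $u_i\in L(\Bmf_{s_{i-1},s_i})$ and a path from $d_{i-1}$ to $d_i$ inside $\Imc_{w_i}$ realizing $u_i$; transporting it along $d\mapsto[w_i]_d$ yields a path from $[w_i]_{d_{i-1}}$ to $[w_i]_{d_i}$ in $\Imc_{(T,\tau)}$, because every local edge $r^{\Imf(w_i)}$ is contained in $r^{\Imc_{(T,\tau)}}$ and every local concept membership is inherited. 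Consecutive segments glue: by Condition~(c) applied to $i{+}1$ we have $w_{i+1}\in[w_i]_{d_i}$, so $w_i$ and $w_{i+1}$ are $d_i$-connected and $[w_i]_{d_i}=[w_{i+1}]_{d_i}$ is literally the same domain element. Since $s_0=s_{0\Bmf}$ and $s_n\in F_\Bmf$, the concatenation $u_1\cdots u_n$ is accepted by $\Bmf$, and by~(b) the resulting path runs from $\pi(t)$ to $\pi(t')$; hence $\Imc_{(T,\tau)},\pi\models\Bmf(t,t')$.

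For the converse, start from a word $\nu_1\cdots\nu_m\in L(\Bmf)$, a realizing path $\rho_0,\dots,\rho_m$ with $\rho_0=\pi(t)$, $\rho_m=\pi(t')$, and an accepting run $s_{0\Bmf}=\sigma_0,\dots,\sigma_m\in F_\Bmf$. The key step is to refine this path so that every role-edge becomes a \emph{single} local edge. A non-transitive or test step is already local; a transitive $r$-step $(\rho_{j-1},\rho_j)\in\chi_r^+$ is replaced by a chain of single $\chi_r$-edges $\rho_{j-1}=\zeta_0,\dots,\zeta_p=\rho_j$. Here the $\,\hat{}\,$-transformation is essential: because $\hat q_i$ carries $r\cdot r^*$ at this position, reading $p\ge 1$ copies of $r$ keeps the refined word in $L(\Bmf)$, so an accepting run of $\Bmf$ on the refined word still exists. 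Each refined edge is a single $\chi$-edge and therefore lives in one bag $w_i$ with local names $d_{i-1},d_i$; reading off the states of the refined accepting run gives the pairs $(d_{i-1},s_{i-1})$. Condition~(a) is immediate and~(b) holds since the endpoints are unchanged. For~(c), each step is locally realized in $w_i$ by its single letter, so $\Imc_{w_i}\models\Bmf_{s_{i-1},s_i}(d_{i-1},d_i)$; and since every $\rho$ on the refined path is one domain element, a $d_{i-1}$-connected class, any two bags representing it lie in that class and use the same name by consistency, giving $w_i\in[w_{i-1}]_{d_{i-1}}$.

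I expect the main obstacle to be exactly this refinement in the second direction: turning a match that may use long transitive-closure jumps into one that proceeds by single intra-bag edges, while keeping it an accepting run of $\Bmf$ and keeping the bookkeeping of names consistent across bags. Making precise why $\hat q_i$ rather than $q_i$ must be used --- that transitive-closure edges decompose into $\chi_r$-chains absorbed by the $r^*$ gadget --- and why the shared pivot elements of neighbouring segments are genuinely $d$-connected (so that the unique name along a connected region is well defined) are the two points needing the most care; both the reduction to single atoms and the gluing in the first direction are routine once the tree-decomposition semantics of $\Imc_{(T,\tau)}$ are in place.
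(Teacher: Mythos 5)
Your proposal is correct and takes essentially the same route as the paper's own proof: there, the match-to-sequence direction likewise refines the realizing path so that every step is a single intra-bag edge (the $r\cdot r^*$ replacement in $\hat q_i$ absorbing the decomposition of transitive-closure edges into chains of $\chi_r$-edges), with states read off an accepting run on the refined word, while the sequence-to-match direction glues the local satisfactions $\Imc_{w_i}\models \Bmf_{s_{i-1},s_i}(d_{i-1},d_i)$ into $\Imc_{(T,\tau)}\models\Bmf_{s_0,s_n}(d_0,d_n)$ via the shared, $d_i$-connected pivot elements. Your write-up is in fact somewhat more explicit than the paper's, which compresses the refinement step into a without-loss-of-generality claim.
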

We will refer to such  sequences as \emph{witness sequences}.
The lemma suggests the following approach. In order to check whether
$\varphi$ has a match in $\Imc_{(T,\tau)}$, the automaton
guesses a $q_i$ and tries to find the
witness sequences characterizing a match. For
this purpose, $\Amf_\vp$ uses as states triples $\langle
p,V_l,V_r\rangle$ such that $p\subseteq \hat q_i$, $I_p=\emptyset$, and:
\begin{itemize}

  \item[--] $V_l$ and $V_r$ are sets of expressions of the form
    $(d,s)\to_\Bmf x$ and $x\to_\Bmf (d,s)$, respectively, where
    $\Bmf$ is the automaton of some atom $\Bmf(t,t')$ in $\hat q_i$, $s\in
    Q_{\Bmf}$, $d\in\Delta$, $x\in \mn{var}(p)$.

\end{itemize}
Intuitively, when the automaton visits a node $w$ in state $\langle
p,V_l,V_r\rangle$, this represents the obligation that 
each atom $\Bmf(x,y)$ in $p$ still has to be processed in the sense
that all variables occuring in $p$ will be instantiated in the subtree rooted
at $w$, and 

\begin{itemize}

  \item[--] for each $(d,s)\to_\Bmf x\in V_l$, $\Amf_\varphi$ tries to
    find a suffix of the witness sequence for $\Bmf(t,t')$ starting with
    $(d,s)$,

  \item[--] for each $x\to_\Bmf (d,s)\in V_r$, $\Amf_\varphi$ tries to
    find a prefix of the witness sequence for $\Bmf(t,t')$ ending with
    $(d,s)$.

\end{itemize}
We describe verbally how the automaton $\Amf_\vp$ acts when visiting a
node $w$ in state $\langle p,V_l,V_r\rangle$; the complete transition
function is given in the appendix. First, $\Amf_\vp$
non-deterministically chooses a partition $S_0,\ldots,S_k$ (with $S_i$
possibly empty, for all $i$) of $\mn{var}(p)$ and values
$d_x\in\Delta^{\Imc_w}$ for all $x\in S_0$. Intuitively, $S_0$
contains the variables that are to be instantiated in $w$, and $S_i$
contains the variables that are to be instantiated in the subtree
rooted at $w\cdot i$. Based on the taken choice, $\Amf_\vp$ determines
states $\langle p^i,V_l^i,V_r^i\rangle$ which are then sent to the
respective successors $i\in [1,k]$ of $w$. Using the parity condition,
we enforce that every variable is instantiated after finitely many of
such steps.

We demonstrate on several examples how to compute the states
$\langle p^i,V_l^i,V_r^i\rangle$ from $S_0,\ldots,S_k$ and $d_x$ for all
$x\in S_0$.
\begin{itemize}

  \item[--] Assume some $\Bmf(x,y)\in p$ with $x,y\in S_0$. In this
    case, $\Amf_\varphi$ guesses some $f\in F_\Bmf$ and verifies
    (using another set of states) that there is a witness sequence for
    $\Bmf(x,y)$ starting with $(d_x,s_{0\Bmf})$ and ending with
    $(d_y,s_{f})$.

  \item[--] Assume $\Bmf(x,y)\in p$ and $x,y\in S_i$ for some $i>0$.
    In this case, just put $\Bmf(x,y)$ into $p^i$.

  \item[--] For an atom $\Bmf(x,y)\in p$ with $x\in S_0$ and $y\in S_i$
    for $i>0$, $\Amf_\vp$ guesses an intermediate tuple $(d,s)$,
    verifies that there is a witness sequence from $(d_x,s_{0\Bmf})$
    to $(d,s)$ and adds $x\to_\Bmf(d,s)$ to $V_r^i$.  

  \item[--] For the treatment of $V_l$ ($V_r$ is similar), assume
    $(d,s)\to_\Bmf x\in V_l$. If $x\in S_0$, $\Amf_\varphi$ verifies
    that the sequence has a suffix from $(d,s)$ to $(d_x,s_f)$, for
    some $s_f\in F_\Bmf$. If $x\in S_i$, $i>0$, $\Amf_\vp$ guesses an
    intermediate pair $(d',s')$, verifies that there is an infix
    between $(d,s)$ and $(d',s')$ and includes $(d',s')\to_{\Bmf} x\in
    V_l^i$. 

\end{itemize}
We show in the appendix how to verify the existence of an infix of a
witness sequence between two pairs $(d,s)$ and $(d',s')$ as required in
the first, third and last item using only exponentially many states.
Regarding number of states, observe that there are only exponentially
many disjuncts (and thus states) $q_i$ and exponentially many states of the form
$\langle p,V_l,V_r\rangle$ as described. 

 We refer the reader to the
appendix for the complete construction and  a proof of the
following lemma.
\begin{lemma} \label{lem:query-automaton}
  There is a 2ATA $\Amf_\vp$ such that for every
  $(T,\tau)\in L(\Amf_{\mn{can}})$, we have $(T,\tau)\in
  L(\Amf_\varphi)$ iff $\Imc_{(T,\tau)}\models q$. It
  can be constructed in exponential time in $|\vp|+|\Kmc|$ and has
  exponentially in $|\vp|+|\Kmc|$ many states.
\end{lemma}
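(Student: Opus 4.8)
The plan is to build $\Amf_\vp$ directly around the characterization in Lemma~\ref{lem:witness-sequence}: a match exists iff some disjunct $q_i$ admits, for each of its atoms $\Bmf(t,t')$, a witness sequence. Accordingly, $\Amf_\vp$ first guesses in its initial transition one of the exponentially many disjuncts $q_i$, and then launches from the root a single top-down process that simultaneously (i)~instantiates every variable of $\hat q_i$ at some node of the tree and (ii)~certifies the existence of all required witness sequences. The state triples $\langle p,V_l,V_r\rangle$ carry exactly the information needed to coordinate this: $p$ is the set of atoms whose variables are all still to be instantiated in the current subtree, while $V_l$ and $V_r$ record, for atoms already `cut' by the current node, the single pair $(d,s)$ at which a suffix (resp.\ prefix) of the corresponding witness sequence must still be completed below. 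I would formalise the verbally described transition function so that, upon visiting $w$ in state $\langle p,V_l,V_r\rangle$, the automaton guesses the partition $S_0,\dots,S_k$ of $\mn{var}(p)$ together with values $d_x\in\Delta^{\Imc_w}$ for $x\in S_0$, and from this data computes the successor states $\langle p^i,V_l^i,V_r^i\rangle$ exactly along the four cases listed above (atoms internal to $w$, atoms pushed wholesale to a child, atoms split between $w$ and a child, and the symmetric handling of pending obligations in $V_l,V_r$). The parity condition is set so that no variable's instantiation and no pending obligation may be deferred forever.

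The technical heart, and the step I expect to be the main obstacle, is the subroutine that checks the existence of an \emph{infix} of a witness sequence between two given pairs $(d,s)$ and $(d',s')$ while using only exponentially many states. I would implement this by having the automaton navigate through the tree along the $d_i$-connections (reusing the $[w]_d$ bookkeeping guaranteed for inputs in $L(\Amf_{\mn{can}})$), and, at each visited node $w_i$, guess a single local NFA step $\Bmf_{s_{i-1},s_i}(d_{i-1},d_i)$, which is just a reachability question inside the one finite interpretation $\Imc_{w_i}$ and hence locally checkable. The only information retained across nodes is the current pair $(d_i,s_i)\in\Delta\times Q_\Bmf$, so the subroutine needs only exponentially many auxiliary states. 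Because the intra-node steps are composed across nodes connected through shared elements, and the navigation constraint $w_i\in[w_{i-1}]_{d_{i-1}}$ is enforced exactly, the guessed local witnesses assemble into a genuine witness sequence in the sense of Lemma~\ref{lem:witness-sequence}; for transitive roles I would lean on the downward/ancestor path shapes \ref{it:p1}--\ref{it:p2} already isolated in Lemma~\ref{lem:can-paths-encoding}, so that the relevant role paths are found by purely tree-directed moves.

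For correctness I would argue both directions against Lemma~\ref{lem:witness-sequence}. Soundness: from an accepting run, the pairs visited by each infix-verification, concatenated along the tree, yield for every atom of the guessed $q_i$ a witness sequence satisfying conditions (a)--(c), so $\Imc_{(T,\tau)}\models q_i$ and hence $\Imc_{(T,\tau)}\models\vp$. Completeness: given a match, the lemma supplies witness sequences for all atoms of some $q_i$; I would let $\Amf_\vp$ follow them, choosing $S_0,\dots,S_k$ according to where each variable is instantiated and splitting each witness sequence at the pair where it first descends into a child, so that every guess the construction demands is realisable. Finally, for complexity I would count states: there are exponentially many $q_i$; the atom set $p$ ranges over subsets of the polynomial-size $\hat q_i$, while $V_l$ and $V_r$ are partial functions from the atoms of $\hat q_i$ into $\Delta\times Q_\Bmf$, so each of $p,V_l,V_r$ (and each auxiliary current pair) has only exponentially many values. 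Thus $\Amf_\vp$ has exponentially many states in $|\vp|+|\Kmc|$, each transition formula is computable in exponential time, and the whole automaton is built in exponential time, as claimed.
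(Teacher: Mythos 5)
Your outer construction (guessing a disjunct $q_i$, the states $\langle p,V_l,V_r\rangle$, the partition $S_0,\ldots,S_k$ with values $d_x$, the $V_l/V_r$ bookkeeping, and an acceptance condition forcing finite runs) coincides with the paper's. The gap lies in the subroutine you yourself identify as the technical heart. Your walk retains only the current pair $(d_i,s_i)$ (plus, implicitly, the target pair) and advances through the tree guessing one local step per visited bag. The trouble is that letters $d\in\Delta$ are \emph{local} names: the actual element of $\Imc_{(T,\tau)}$ is the connected class $[w]_d$, and the same name can denote different elements in different parts of the tree. Once your walk leaves the class of the target element, it has no means to certify, upon later reaching a bag containing the target name, that this occurrence denotes the same element at which the obligation was issued --- a 2ATA cannot mark or re-recognize the launch node. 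So the walk can close an obligation $(d,s)\to_\Bmf x$ at a wrong occurrence of $d_x$, violating the endpoint conditions \textit{(b)}/\textit{(c)} of Lemma~\ref{lem:witness-sequence}; the construction is unsound. If you repair this by demanding that every visited bag contain both the source and the target name (so both classes stay anchored), you instead lose completeness, since genuine witness sequences may wander through bags in which the target element does not occur.

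The paper's proof resolves exactly this tension, and this is the idea missing from your proposal. It uses states $\langle d,s,\Bmf,d',s'\rangle$ whose transitions (i) permit moving to a neighbor only while \emph{both} $d$ and $d'$ remain in the bag (the transition is $\mn{false}$ if $\{d,d'\}\not\subseteq\Delta^\Imc$), so both endpoint classes are preserved throughout the navigation, (ii) accept as soon as $\Imc\models \Bmf_{s,s'}(d,d')$ holds locally in one bag, and (iii) otherwise \emph{split} the obligation at the current node into $(0,\langle d,s,\Bmf,d'',s''\rangle)\wedge(0,\langle d'',s'',\Bmf,d',s'\rangle)$ for a guessed intermediate pair $(d'',s'')$. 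Completeness of this divide-and-conquer is not obvious and is supplied by the paper's Lemma~\ref{lem:correct-qa}: for any witness sequence with $[w_i]_{d_i}\cap[w_j]_{d_j}\neq\emptyset$, either $j=i+1$ or some intermediate $m$ satisfies $[w_m]_{d_m}\cap[w_i]_{d_i}\cap[w_j]_{d_j}\neq\emptyset$ --- a convexity property of connected sets in trees --- so a splitting point compatible with the both-names-present navigation always exists. Your proposal contains neither the splitting transitions nor this lemma, and without them the exponential-state infix check does not go through. (A side remark: your appeal to the path shapes \ref{it:p1}/\ref{it:p2} of Lemma~\ref{lem:can-paths-encoding} is unnecessary here: since $\hat q_i$ replaces each transitive $r$ by $r\cdot r^*$, all role steps are verified locally inside single bags, and the paper explicitly notes that the query automaton does not rely on canonicity.)
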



\section{Discussion and Future Work}

The obtained results  are both of practical and theoretical interest.
From the practical point of view, our complexity results and
application demands open up the possibility to include a profile based
on \SQ to OWL 2. Note that there is no increase in the computational
complexity in comparison with that of  \SQ \emph{without} counting over transitive
roles.  From the theoretical perspective, our techniques are useful for
several future lines of research. First, the unraveling 
lays the groundwork for studying
extensions of \SQ with other DL constructors. Second, the technique underlying the query automaton works
for standard tree decompositions (it does not rely on canonicity) of
bounded outdegree, even if the width is high (exponential in our
case). We thus believe that this technique is useful for query
answering in other DLs. Finally, the gained understanding of the
model-theoretic characteristics of \SQ is an important step towards
the development of more practical decision procedures.

As future work, we will  tackle the following four  interesting
problems: $(i)$ The \emph{data complexity}  of deciding entailment of
PRPQs in \SQ. The present techniques give only exponential bounds, but
we expect \textsc{coNP}-completeness. 
%
$(ii)$ The complexity of deciding   entailment of  \emph{conjunctive
queries (CQs)} in \SQ. The proposed automata-based approach yields the
same upper bound for PRPQs or CQs, but we expect it to be easier for
CQs.  $(iii)$ The complexity of deciding query entailment in
generalizations of \SQ with \emph{role composition} or  regular
expressions on roles; or with nominals and (controlled) inverses.
$(iv)$ The complexity of query entailment in \SQ over
\emph{finite} models. Indeed, \SQ lacks \emph{finite controlability},
that is, query entailment in the finite does not coincide with
unrestricted query entailment:
\begin{example}
  Consider $\Amc=\emptyset$, $\Tmc =\{\top\sqsubseteq \exists
  r.\top\}$, and $\vp=\exists x\, r(x,x)$ for some $r \in \mn{N}^t_\mn{R}$. Clearly,
  $(\Tmc,\Amc)\not\models \vp$, but for every finite model $\Imc$ of
  $(\Tmc,\Amc)$, we have $\Imc\models\vp$.
\end{example}

\section*{Acknowledgments} 
The first author was
funded by EU's Horizon 2020 programme under the Marie Sk\l{}odowska-Curie grant 663830, the second one by the FWF project P30360, and  the third
one by the ERC grant 647289 CODA.

\bibliographystyle{aaai}
\bibliography{biblio}

\clearpage
\appendix

\begin{center} {\huge \textbf{APPENDIX}} \end{center}
\section*{Additional Preliminaries}
%
%

\textbf{Homomorphisms.} Let $\Imc_1$ and $\Imc_2$ be two
interpretations. A {\em homomorphism from $\Imc_1$ to $\Imc_2$} is a
function $h:\Delta^{\Imc_1}\to \Delta^{\Imc_2}$ such that {\it
(i)}~$h(a)=a$ for all $a\in \mn{N_I}$, {\it (ii)}~if $d\in
A^{\Imc_1}$, then $h(d)\in A^{\Imc_2}$, for all $A\in\mn{N_C}$, and
{\it (iii)} if $(d,e)\in r^{\Imc_1}$, then $(h(d),h(e))\in
r^{\Imc_2}$, for all $r\in\mn{N_R}$. It is folklore that PRQPs are
\emph{preserved under homomorphisms}, that is, if $\Imc_1\models
\varphi$ and there is a homomorphism from $\Imc_1$ to $\Imc_2$, then
$\Imc_2\models \varphi$.

\smallskip \noindent \textbf{Semantics of 2ATAs.} A \emph{run} of $\Amf$ on a labelled
tree $(T,\tau)$ is a $T\times Q$-labelled tree $(T_r, r)$ such
that $r(\varepsilon)=(\varepsilon,q_0)$  and whenever $x \in T_r$, $r(x)
=(w,q)$, and $\delta(q,\tau(w))= \theta$, then there is a set
$\mathcal S = \{(m_1,q_1), \ldots, (m_n,q_n)\} \subseteq [k] \times Q$
such that $\mathcal S$ satisfies~$\theta$ and for $1 \leq i \leq n$,
we have $x\cdot i \in T_r$, $w \cdot m_i$ is defined, and
$\tau_r(x\cdot i)= (w\cdot m_i, q_i)$. A run is
\emph{accepting} if
every infinite path $\pi$ satisfies the \emph{parity condition}.  A
\emph{parity condition $F$ over $Q$} is a finite sequence $G_1,
\ldots, G_m$ with $G_1 \subseteq G_2 \subseteq \ldots \subseteq G_m
=Q$. An infinite path $\pi$ satisfies $F$ if there is an even $i$ such
that $\mn{inf}(\pi) \cap G_i \neq \emptyset$ and $\mn{inf}(\pi) \cap
G_{i-1} = \emptyset$, where $\mn{inf}(\pi) \subseteq Q$ denotes the
set of states that occur infinitely often in $\pi$. The automaton
accepts an input tree if there is an accepting run for it. We use 
$L(\Amf)$ to denote the set of trees accepted by \Amf. The
\emph{nonemptiness problem} is to decide, given a 2ATA $\Amf$,
whether $L(\Amf)$ is nonempty.

\section{Proof of Lemma~\ref{lem:outdegree-width}}

\noindent\textbf{Lemma~\ref{lem:outdegree-width}.}
\textit{
  %
  For each $\Imc\models \Kmc$, there is a sub-interpretation $\Imc'$
  of \Imc with $\Imc'\models\Kmc$ and width and
  breadth of $\Imc'$ are bounded by $O(|\Amc|+2^{\mn{poly}({|\Tmc|})})$.
}

\begin{proof} We show the lemma in two stages, adapting a technique
  from~\cite{KazakovP09,GuIbJu-AAAI17}. 
  
  Let $\hat m$
  be the maximal number appearing in \Tmc.

  \smallskip \textit{Stage 1 (Bounded breadth)}. As it is standard to
  achieve bounded breadth for non-transitive roles~\cite{GlimmLHS08},
  we only deal with transitive roles here. 

  An element $e$ is an \emph{strict $r$-successor of $d$} if $(d,e)\in
  r^\Imc$, but $e\notin Q_{\Imc,r}(d)$. 
  Let $W_r(d)$ be the set of
  strict $r$-successors of $d$ and $W_r(d,C)\subseteq W_r(d)$ be the set of
  all strict $r$-successors of $d$ satisfying $C$. 
  Then, fix a subset $W_r'(d)\subseteq W_r(d)$ by adding, for each $C\in
  \mn{cl}(\Tmc)$, $\min(\hat m, |W_r(d,C)|)$ elements from $W_r(d,C)$.

  Assume without loss of generality that $W_r'(d_1)=W_r'(d_2)$ if
  $d_1\in Q_{\Imc,r}(d_2)$, and define relations $S_r^1$, $S_r^2$, and
  $S_r^3$, for each $r\in \mn{Rol}_t(\Kmc)$, as follows: 
   \begin{align*}
     S_r^1 & = \{(d,d')\in r^\Imc\mid d'\in Q_{\Imc,r}(d)\}; \\
     S_r^2 & = \{(d,d')\in r^\Imc\mid r(d,d')\in\Amc\}; \\
     S_r^3 & = \{(d,d')\in r^\Imc\mid d'\in W'_r(d)\}.
   \end{align*}
   Intuitively, $S_r^1$ is the restriction of $r^\Imc$ to the
   clusters, $S_r^2$ takes care of the ABox, and $S_r^3$ keeps a
   sufficient set of successors to witness all at-least restrictions.

   Finally, obtain $\Imc'$ from $\Imc$ by taking $\Delta^{\Imc'}=\Delta^{\Imc}$,
   $A^{\Imc'}=A^{\Imc}$ for all concept names $A$, $r^{\Imc'}=r^\Imc$,
   for all non-transitive roles $r$, and, for all transitive roles
   $r$,
   $$r^{\Imc'}= (S_r^1\cup S_r^2\cup S_r^3)^+.$$

   \smallskip\noindent{\it Claim 1.} $C^\Imc=C^{\Imc'}$, for all
   $C\in\mn{cl}(\Tmc)$.

   \smallskip\noindent{\it Proof of Claim 1.} This is shown by
   induction on the structure of concepts. The only non-trivial case
   are concepts $C=\qnrleq n r D$, $r$ transitive. Clearly, $d\in C^{\Imc}$ implies $d\in C^{\Imc'}$ since
   $r^{\Imc'}\subseteq r^{\Imc}$. The converse is a direct consequence
   of the definition of $W_r'(d)$ and $S_r^3$. In particular, only
   $r$-successors that ``cannot be seen'' by at-most restrictions (due
   to the choice of $\hat m$) are removed. 

   \smallskip From Claim~1, we conclude that $\Imc'\models\Tmc$; by
   Claim~1 and the definition of $r^{\Imc'}$, particularly $S_r^2$,
   we also have $\Imc'\models\Amc$, thus $\Imc\models\Kmc$. Since
   $r^{\Imc'}\subseteq r^\Imc$ and $A^\Imc=A^{\Imc'}$, for all $A\in
   \mn{N_C}$, the identity is an homomorphism from $\Imc'$ to \Imc.
   Finally note that, by construction, the breadth of $\Imc'$
   is at most $|\Amc|+|\mn{cl}(\Tmc)|\cdot \hat m$ and thus
   $O(|\Amc|+2^{\mn{poly}(|\Tmc|)})$.

   \smallskip \textit{Stage 2 (Bounded Width).}
   For every transitive role $r$, and every $d\in \Delta^\Imc$, fix a set
   $W_{r}(d)\subseteq Q_{\Imc,r}(d)$ as follows.  For each $C\in
   \mn{cl}(\Tmc)$, $W_{r}(d)$ contains the set $Q_{\Imc,r}(d)\cap
   C^\Imc$ if this set has size at most $\hat m$, and otherwise a
   subset thereof having size $\hat m$. Without loss of generality, we
   assume that $W_{r}(d)=W_{r}(e)$ for all $e\in
   Q_{\Imc,r}(d)$.  
   Now, define a set $\Delta_r$, for each transitive $r$, by
   taking
   $$\Delta_r=\mn{ind}(\Amc)\cup \bigcup_{d\in \Delta^\Imc}
   W_{r}(d), $$
   and define an interpretation $\Imc'=(\Delta^{\Imc'},\cdot^{\Imc'})$
   by setting $\Delta^{\Imc'} = \Delta^\Imc$, $A^{\Imc'}= A^\Imc,
   \text{ for all }A\in\mn{N_C}$, $r^{\Imc'} = r^\Imc, \text{ for all
   non-transitive roles $r$}$, and
   $$r^{\Imc'} = r^\Imc \cap (\Delta^\Imc\times\Delta_r), \text{ for all
   transitive roles $r$.} $$
   It is not hard to verify that $r^{\Imc'}$ is indeed transitive.

   \smallskip\noindent{\it Claim 2.} $C^\Imc=C^{\Imc'}$, for all
   $C\in\mn{cl}(\Tmc)$. 

   \smallskip\noindent{\it Proof of Claim 2.} This is again shown by
   induction on the structure of concepts. The only non-trivial case
   are concepts $C=\qnrleq n r  D$, $r$ transitive. Clearly, $d\in
   C^{\Imc}$ implies $d\in C^{\Imc'}$ since $r^{\Imc'}\subseteq
   r^{\Imc}$. The converse is a direct consequence of the definition
   of $W_{r}(d)$, in particular the choice of $\hat m$, and the
   definition of $r^{\Imc'}$. In particular, we remove only
   $r$-successors that cannot contribute to at-least restrictions. 

   \smallskip Based on Claim~2, it is easy to see that $\Imc'\models
   \Tmc$ and $\Imc'\models \Amc$. Moreover, the identity is a
   homomorphism from $\Imc'$ to \Imc. Finally, by definition of $W_{r}(d)$, particularly the choice of $\hat
   m$, it should be clear that the width of $\Imc'$ is bounded by
   $|\Amc|+2^{\mn{poly}(|\Tmc|)}$.
\end{proof}

\section{Properties of $\mn{Wit}_{\Imc,r}(d)$}

We next verify two properties of the witness set
$\mn{Wit}_{\Imc,r}(d)$, which are needed later on.  Throughout the
following Lemmas, we denote with $W_{\Imc,r}^\rightsquigarrow(d)$ the
set $\{e\mid d\rightsquigarrow_{\Imc,r}^*e\}$.

\medskip\noindent\textbf{Lemma~\ref{lem:wit}.} \textit{For every $d\in \Delta^\Imc$ and transitive $r$, we have
  $|\mn{Wit}_{\Imc,r}(d)|\leq |\Amc|\cdot 2^{p(|\Tmc|)}$, for some
polynomial $p$.}

\begin{proof} 
  We construct a tree $T$ labeled with elements from $\Delta^\Imc$. We
  start with the single node tree $d$. Then, we exhaustively performing the following operation:
  \begin{itemize}

    \item[$(\ast)$] Choose a leaf labeled with $e$ and add, for all $f\in
      \Delta^\Imc\setminus T$ with $e\rightsquigarrow_{\Imc,r}f$,
      $f$ as a successor of $e$ in $T$.

   \end{itemize}
   By definition of $\rightsquigarrow_{\Imc,r}$ and~$(\ast)$, the
   obtained graph is indeed a tree which additionally satisfies
   $W_{\Imc,r}^\rightsquigarrow(d)\subseteq T$. Now, consider the labelling
   $\ell:T\to 2^{\mn{cl}(\Tmc)}$ given by
   $$\ell(e)=\{C\mid e\in {\qnrleq n r C}^{\Imc}, \qnrleq n r C \in\mn{cl}(\Tmc)\}.$$ 
   Let $f$ be a successor of $e$ in $T$. By construction of $T$, this implies
   \begin{itemize}

     \item[--] $\ell(e)\subseteq \ell(f)$ if $f$ is a leaf in $T$;

     \item[--] $\ell(e)\subsetneq \ell(f)$ if $f$ is not a leaf in $T$.

   \end{itemize}
   Thus, the depth of $T$ is bounded by $|\Tmc|$. Since, for any $e$,
   there are at most exponentially (in \Tmc) many $f$ such that
   $e\rightsquigarrow_{\Imc,r} f$, we know that the outdegree of
   $(T,E)$ is bounded exponentially in \Tmc. Overall, we get that the
   size of $T$, and thus of the set $W_{\Imc,r}^\rightsquigarrow(d)$
   is bounded by an exponential in \Tmc. Note next that, by
   Lemma~\ref{lem:outdegree-width}, for every $f\in
   W_{\Imc,r}^\rightsquigarrow(d)$, we have $Q_{\Imc,r}(d)\subseteq
   \mn{ind}(\Amc)\cup X_d$, for some set $X_d$ of size bounded by
   $2^{p(|\Tmc|)}$, $p$ a polynomial.  As $\mn{Wit}_{\Imc,r}(d)=\bigcup_{e\in
     W^{\rightsquigarrow}_{\Imc,r}(d)} Q_{\Imc,r}(e)$, this implies
     the statement in the lemma.  \end{proof}

\begin{lemma} \label{lem:wit2}
  Let $d\in\Delta^\Imc$ and $r$ transitive. Then for all
  $e\in\mn{Wit}_{\Imc,r}(d)$, we have that
  $\mn{Wit}_{\Imc,r}(e)\subseteq \mn{Wit}_{\Imc,r}(d)$.
\end{lemma}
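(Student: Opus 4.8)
The plan is to reduce the stated set inclusion to a statement purely about the reachability relation $\rightsquigarrow_{\Imc,r}^*$, and then prove that statement by induction. Recall that $\mn{Wit}_{\Imc,r}(e)=\bigcup_{f\mid e\rightsquigarrow_{\Imc,r}^* f}Q_{\Imc,r}(f)$, so it suffices to show that every $f$ with $e\rightsquigarrow_{\Imc,r}^* f$ satisfies $Q_{\Imc,r}(f)\subseteq\mn{Wit}_{\Imc,r}(d)$. I would first observe that this already follows from the weaker statement $f\in\mn{Wit}_{\Imc,r}(d)$: indeed, $f\in\mn{Wit}_{\Imc,r}(d)$ means $f\in Q_{\Imc,r}(g)$ for some $g\in W_{\Imc,r}^\rightsquigarrow(d)$, whence $Q_{\Imc,r}(f)=Q_{\Imc,r}(g)\subseteq\mn{Wit}_{\Imc,r}(d)$, since two elements of a common $r$-cluster determine the same cluster (this uses transitivity of $r$). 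Thus the whole lemma boils down to the implication: if $e\in\mn{Wit}_{\Imc,r}(d)$ and $e\rightsquigarrow_{\Imc,r}^* f$, then $f\in\mn{Wit}_{\Imc,r}(d)$.

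The key auxiliary fact I would isolate first is that $\rightsquigarrow_{\Imc,r}$ is invariant under $r$-clusters: if $Q_{\Imc,r}(g)=Q_{\Imc,r}(g')$ and $g'\rightsquigarrow_{\Imc,r} f$, then $g\rightsquigarrow_{\Imc,r} f$. To see this, note that since $g,g'$ lie in the same $r$-cluster we have $(g,g'),(g',g)\in r^\Imc$, and transitivity of $r$ yields that $g$ and $g'$ have exactly the same set of $r$-successors: from $(g,h)\in r^\Imc$ we get $(g',h)\in r^\Imc$ via $(g',g)$, and symmetrically. Consequently $\{\,h\in C^\Imc\mid (g,h)\in r^\Imc\,\}=\{\,h\in C^\Imc\mid (g',h)\in r^\Imc\,\}$ for every concept $C$, so $g\in\qnrleq n r C^\Imc$ iff $g'\in\qnrleq n r C^\Imc$. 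Taking the witness $\qnrleq n r C\in\Tmc$, $f\in C^\Imc$, $(g',f)\in r^\Imc$ that certifies $g'\rightsquigarrow_{\Imc,r} f$, and replacing $(g',f)\in r^\Imc$ by $(g,f)\in r^\Imc$ (again by transitivity through $g'$), the same witness certifies $g\rightsquigarrow_{\Imc,r} f$.

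With this claim in hand, I would prove the implication by induction on the length of a $\rightsquigarrow_{\Imc,r}$-chain from $e$ to $f$. The base case $f=e$ is exactly the hypothesis $e\in\mn{Wit}_{\Imc,r}(d)$. For the inductive step, write $e\rightsquigarrow_{\Imc,r}^* f'\rightsquigarrow_{\Imc,r} f$; by the induction hypothesis $f'\in\mn{Wit}_{\Imc,r}(d)$, so $f'\in Q_{\Imc,r}(g)$ for some $g\in W_{\Imc,r}^\rightsquigarrow(d)$. Applying the cluster-invariance claim to $g$ and $f'$ gives $g\rightsquigarrow_{\Imc,r} f$, hence $d\rightsquigarrow_{\Imc,r}^* g\rightsquigarrow_{\Imc,r} f$, so $f\in Q_{\Imc,r}(f)\subseteq\mn{Wit}_{\Imc,r}(d)$, closing the induction.

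I expect the main obstacle to be precisely the gap addressed by the cluster-invariance claim: the hypothesis $e\in\mn{Wit}_{\Imc,r}(d)$ does \emph{not} give $d\rightsquigarrow_{\Imc,r}^* e$, but only $d\rightsquigarrow_{\Imc,r}^* g$ for some $g$ in the same $r$-cluster as $e$. All the genuine content lies in transporting $\rightsquigarrow_{\Imc,r}$-edges along $r$-clusters, and this is exactly where transitivity of $r$ is essential: it forces cluster-mates to share their $r$-successor sets and therefore their membership in every at-most restriction. The remaining bookkeeping — the reduction to $f\in\mn{Wit}_{\Imc,r}(d)$ and the chain induction — is routine once the claim is established.
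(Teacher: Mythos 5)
Your proof is correct and rests on exactly the same key fact as the paper's: by transitivity of $r$, two elements of a common $r$-cluster have identical $r$-successor sets, hence identical membership in all at-most restrictions, so $\rightsquigarrow_{\Imc,r}$-edges can be transported from an element to a cluster-mate that is $\rightsquigarrow_{\Imc,r}^*$-reachable from $d$. The only difference is organizational and immaterial: the paper splices a reachable representative $e'$ for $e$ once at the head of the chain (after a case distinction on whether $e$ itself lies in $W^{\rightsquigarrow}_{\Imc,r}(d)$, with the degenerate singleton-cluster case folded into the first branch), whereas you isolate the cluster-invariance claim as a lemma and apply it inductively at each step of the chain, the case $g=g'$ being trivial.
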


\begin{proof} 
  Let $e\in \mn{Wit}_{\Imc,r}(d)$.  By definition of
  $\mn{Wit}_{\Imc,r}$, it suffices to show that
  $W_{\Imc,r}^\rightsquigarrow(e)\subseteq \mn{Wit}_{\Imc,r}(d)$.  To
  this end, suppose $f\in W^\rightsquigarrow_{\Imc,r}(e)$. By
  definition of $W_{\Imc,r}^\rightsquigarrow$, there is a sequence
  $e_1\rightsquigarrow_{\Imc,r} \cdots \rightsquigarrow_{\Imc,r}e_n$
  with $e=e_1$ and $f=e_n$ (possibly $n=1$).  As $e\in
  \mn{Wit}_{\Imc,r}(d)$, we have either~\textit{(i)} $e\in
  W^\rightsquigarrow_{\Imc,r}(d)$ or~\textit{(ii)} there is some
  $e'\in W_{\Imc,r}^\rightsquigarrow(d)$ such that $e\in
  Q_{\Imc,r}(e')$.  We distinguish cases. 
  \begin{itemize}

     \item[\textit{(i)}] $e\in W^\rightsquigarrow_{\Imc,r}(d)$ implies that
       there is a sequence
       $d_1\rightsquigarrow_{\Imc,r}\cdots\rightsquigarrow_{\Imc,r}d_m$
       with $d_1=d$ and $d_m=e$. Thus, there is a sequence
       $d_1\rightsquigarrow_{\Imc,r}\cdots \rightsquigarrow_{\Imc,r}
       d_m=e=e_1 \rightsquigarrow_{\Imc,r} \cdots
       \rightsquigarrow_{\Imc,r}e_n=f$. Hence, $f\in
       W^\rightsquigarrow_{\Imc,r}(d)\subseteq\mn{Wit}_{\Imc,r}(d)$.

     \item[\textit{(ii)}] Similar to Case~\textit{(i)}, there is a
       sequence
       $d_1\rightsquigarrow_{\Imc,r}\cdots\rightsquigarrow_{\Imc,r}d_m$
       with $d_1=d$ and $d_m=e'$. If $e=f$, that is $n=1$ in the
       sequence above, we know that $f\in Q_{\Imc,r}(e')$ and thus
       $f\in \mn{Wit}_{\Imc,r}(d)$. Otherwise, observe that we can
       assume that $|Q_{\Imc,r}(d)|\geq 2$ (otherwise $e'=e$ and we
       are in Case~\textit{(i)}). Thus, we have $e'\in{\qnrsim \ell  r
       C}^\Imc$ iff $e\in {\qnrsim \ell  r  C}^\Imc$, for all $\sim$, $\ell$, and
       $C$, and hence also $e'\rightsquigarrow_{\Imc,r}e_2$ implying
       that $d_1\rightsquigarrow_{\Imc,r}\cdots
       \rightsquigarrow_{\Imc,r} d_m=e'\rightsquigarrow_{\Imc,r}
       e_2\rightsquigarrow_{\Imc,r} \cdots
       \rightsquigarrow_{\Imc,r}e_n=f$. Hence, $f\in
       W^\rightsquigarrow_{\Imc,r}(d)\subseteq\mn{Wit}_{\Imc,r}(d)$.

   \end{itemize}
   %
\end{proof}

\section{Proof of Theorem~\ref{thm:sq-canmod}}\label{app:thm1}

Before we establish Theorem~\ref{thm:sq-canmod}, we prove two auxiliary
lemmas, which establish how to address in a unique way
$r$-successors in canonical decompositions.  For the first auxiliary
lemma, observe that as a consequence of Definition~\ref{def:candec},
particularly, Condition~\ref{it:can3}, for every $d\in \Delta^\Jmc$,
$r\in \mn{N}_\mn{R}^t$, there is a unique node $w \in T$ with
$\bago(w)=r$ and $d \in \roots(w)$.  We denote this node with
$w_{d,r}$. 

\begin{lemma} \label{lem:roots}
  Let $r\in \mn{N}_\mn{R}^t$. For every $u\in T$ with $\bago(u)=r$ and
  $(d,e) \in r^{\bagz(u)}$, exactly one of the following holds: 
\begin{itemize} 

  \item $w_{d,r}=w_{e,r}$ and $(d,e) \in r^{\bagz(w_{d,r})}$; 

  \item $w_{e,r}$ is a successor of $w_{d,r}$, $(d,e) \in
   r^ {\bagz(w_{e,r})}$ and $d$ belongs to an $r$-root cluster in $w_{e,r}$; 

  \item $w_{e,r}$ is an ancestor of $w_{d,r}$ and $(d,e) \in
    r^{\bagz(w_{d,r})}$.

\end{itemize}
\end{lemma}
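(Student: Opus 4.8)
The plan is to locate the two distinguished nodes $w_{d,r}$ and $w_{e,r}$ relative to $u$ and then read off the three bullets from their relative position in $T$. First I would show that \emph{both} $w_{d,r}$ and $w_{e,r}$ are ancestors-or-equal of $u$. Since $\bago(u)=r$ and $d\in\Delta_u$, consider the maximal suffix of the root-to-$u$ path consisting of nodes that are $r$-labelled and contain $d$, and let $w^*$ be its top. Either the parent of $w^*$ is not $r$-labelled (so $\bago(w^*\cdot -1)\neq r=\bago(w^*)$) or it does not contain $d$ (so $d$ is fresh in $w^*$); in both cases $d$ is $r$-fresh in $w^*$, i.e.\ $d\in\roots(w^*)$. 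By the uniqueness of the $r$-fresh node recalled before the lemma, $w^*=w_{d,r}$, and since $w^*$ lies on the path, $w_{d,r}$ is an ancestor-or-equal of $u$; the same argument applies to $e$. Hence $w_{d,r}$ and $w_{e,r}$ both lie on the unique root-to-$u$ path and are therefore comparable, so exactly one of $w_{d,r}=w_{e,r}$, $w_{d,r}\prec w_{e,r}$, $w_{e,r}\prec w_{d,r}$ holds; these three relations will feed the three bullets and already yield the ``exactly one'' claim.

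The two straightforward cases are handled uniformly via the bag-equals-restriction property~\ref{it:tree2}, i.e.\ $\Imf(w)=\Jmc|_{\Delta_w}$, together with $(d,e)\in r^{\Imf(u)}\subseteq r^\Jmc$. If $w_{d,r}=w_{e,r}=:w$, then $d,e\in\roots(w)\subseteq\Delta_w$, so $(d,e)\in r^\Jmc\cap(\Delta_w\times\Delta_w)=r^{\Imf(w)}$, which is the first bullet. If $w_{e,r}\prec w_{d,r}$, then $d\in\Delta_{w_{d,r}}$ by definition and $e\in\Delta_{w_{d,r}}$ by connectivity~\ref{it:tree3} (as $e\in\Delta_{w_{e,r}}\cap\Delta_u$ and $w_{d,r}$ lies between them), whence again $(d,e)\in r^{\Imf(w_{d,r})}$, the third bullet.

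The crux is $w_{d,r}\prec w_{e,r}$, where I must also prove that $w_{e,r}$ is a \emph{direct} successor of $w_{d,r}$ and that $d$ lies in an $r$-root cluster of $w_{e,r}$. Write $p=w_{e,r}\cdot -1$; since $w_{d,r}\preceq p$, connectivity~\ref{it:tree3} gives $d\in\Delta_p$ and $d,e\in\Delta_{w_{e,r}}$, so $(d,e)\in r^{\Imf(w_{e,r})}$ as above and $d$ is not fresh in $w_{e,r}$. I would first exclude $\bago(p)\neq r$: otherwise Condition~\ref{it:can3} forces $\Delta_p\cap\Delta_{w_{e,r}}$ to be a single element, necessarily $d$, lying in an $r$-root cluster of $w_{e,r}$; but $\bago(p)\neq r=\bago(w_{e,r})$ then makes $d$ itself $r$-fresh in $w_{e,r}$, so $w_{d,r}=w_{e,r}$ by uniqueness, contradicting strictness. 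Hence $\bago(p)=r$ and Condition~\ref{it:can4} applies to $(p,w_{e,r})$, supplying an $r$-root cluster $\abf\subseteq\roots(p)$ of $w_{e,r}$. Since $\bago(p)=r$, $r$-freshness of $e$ at $w_{e,r}$ forces $e\in F(w_{e,r})$; as $d$ is not fresh in $w_{e,r}$, property~(d) of~\ref{it:can4} applied to $(d,e)\in r^{\Imf(w_{e,r})}$ yields $d\in\abf$. Thus $d$ lies in the $r$-root cluster $\abf\subseteq\roots(p)$, so $d$ is $r$-fresh in $p$, and by uniqueness $p=w_{d,r}$; hence $w_{e,r}$ is a direct successor of $w_{d,r}$, giving the second bullet.

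The main obstacle is exactly this direct-successor claim: the delicate point is to combine the ``no fresh edge into a non-cluster element'' sub-condition~\ref{it:can4}(d) with the uniqueness of the $r$-fresh node in order to collapse any potential chain between $w_{d,r}$ and $w_{e,r}$ down to a single parent--child step, while the exclusion of a role change at $p$ via~\ref{it:can3} is what prevents this case from silently degenerating into the first bullet. The remaining verifications are routine once the position of $w_{d,r}$ and $w_{e,r}$ on the path to $u$ is fixed.
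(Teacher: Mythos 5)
Your proof is correct and matches the paper's own argument in all essentials: both place $w_{d,r}$ and $w_{e,r}$ as comparable ancestors-or-equal of $u$, settle the first and third bullets via connectivity together with $\Imf(w)=\Jmc|_{\Delta_w}$, and settle the second bullet at the predecessor $p$ of $w_{e,r}$ using the freshness of $e$, Conditions~\ref{it:can4}(a) and~\ref{it:can4}(d), and the uniqueness of the $r$-fresh node. Your organization is marginally cleaner than the paper's: where the paper case-splits on $w_{e,r}\cdot -1=w_{d,r}$ versus $w_{e,r}\cdot -1\neq w_{d,r}$ and refutes the latter under both~\ref{it:can3} and~\ref{it:can4}, you exclude $\bago(p)\neq r$ outright (for which your $r$-freshness observation alone already suffices; the detour through~\ref{it:can3} is legitimate, since $w_{d,r}\preceq p$ forces $p\neq\varepsilon$ and hence $\bago(p)\neq\bot$, but redundant) and then run the same~\ref{it:can4} computation forward to conclude $p=w_{d,r}$ directly, obtaining the direct-successor and root-cluster claims in one step.
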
 

\begin{proof}

  Since $d,e\in\Delta_u$, we know that $w_{d,r}$ and
  $w_{e,r}$ are either equal to $u$ or ancestors of $u$. We
  distinguish three cases:
  \begin{itemize}

    \item  If $w_{d,r}=w_{e,r}$, then, by
      Definition~\ref{def:treedecomp},  $d,e\in \Delta_{v'}$ for every $v'$
      on the path from $w_{d,r}$ to $u$,  and 
      $(d,e)\in r^{\bagz(v')}$ for every such $v'$.  Therefore,  $(d,e)\in
      r^{\bagz(w_{d,r})}$.

    \item If $w_{d,r}$ is an ancestor of $w_{e,r}$, then we know by
      the same reasoning as in the previous point that $(d,e)\in
      r^{\bagz(w_{e,r})}$.  Let $w' = w_{e,r}\cdot -1$  (the predecessor of $w_{e,r}$).
      Since $\Tmf$ is a canonical
      decomposition, either~\ref{it:can3} or~\ref{it:can4} applies to $w'$ and
      $w_{e,r}$. Assume first that $w'=w_{d,r}$. 
      \begin{itemize}

	\item In case of \ref{it:can3}, since $d \in \Delta_{w_{d,r}} \cap
	  \Delta_{w_{e,r}}$ we know that there is a $r$-root
	  cluster $\abf \subseteq \Delta_{w_{e,r}}$ such that $d
	  \in \abf$. 

	\item  In case of \ref{it:can4}, let $\abf \subseteq \roots(w')$ be
	  the cluster witnessing this.  By Item~\textit{(b)}, $\abf$
	  is an $r$-root cluster in $\bagz(w_{e,r})$. By definition,
	  we know $e \in F_r(w_{e,r})$ and thus $e\in F(w_{e,r})$.
	  From this and Item~\textit{(d)}, we obtain that $d \in \abf
	  \cup F(w_{e,r})$, and since $d \not \in F(w_{e,r})$ we know
	  $d \in \abf$. 

      \end{itemize}
      Thus, in both cases, we are in the second case of the lemma. 
      Assume now that $w'\neq w_{d,r}$. We show that it leads to 
      a contradiction in both cases:
      \begin{itemize}

	\item In case of~\ref{it:can3}, since $d \in \Delta_{\bagz(w_{e,r})} \cap
	  \Delta_{\bagz(w')}$ we know $d\in F(w')$. On the other hand, 
	  $d\in \roots(w_{d,r})$ implies that either $d \in
	  F(w_{d,r})$ or $w_{d,r}$ has a predecessor $w''$ such that $d
	  \in F(w'')$. This is a contradiction since $w'\neq w''$
	  since $w_{d,r}$ is an ancestor of $w_{e,r}$.

	\item In case of~\ref{it:can4}, let $\abf$ be the $r$-cluster
	  witnessing this.  By definition,  $e \in \roots(w_{e,r})$, implies $e\in F(w_{e,r})$.
	  Since $w_{d,r}\neq w'$ but $w_{d,r}$ is an ancestor of
	  $w_{e,r}$, we know that $w' \neq \varepsilon$ and
	  $\bago(w') = r$. From Item~\textit{(d)} we obtain that
	  $d\in \abf\cup F(w_{e,r})$, and  since $d\notin F(w_{e,r})$,
	  we know $d\in \abf$.  By Item~\textit{(a)}, we know that
	  $\abf\subseteq F_r(w')$, but then $w'=w_{d,r}$,
	  contradiction. 
	  
      \end{itemize}

    \item If $w_{e,r}$ is an ancestor of $w_{d,r}$, then we know by the reasoning
      in the first point that $(d,e)\in r^{\bagz(w_{d,r})}$; thus, we are in the
      last case of the lemma.
      
  \end{itemize}
%
\end{proof}

The second auxiliary lemma now provides a way to address
$r$-successors in canonical tree decompositions.  For this purpose, we
introduce the notion of $r$-paths. Let $(T,\bagz,\bago)$ be a
canonical decomposition of an interpretation~\Imc. An \emph{$r$-path
from $d$ to $e$ in $(T,\bagz,\bago)$} is a sequence
$d_0,w_0,d_1,\ldots,w_{n-1},d_n$ such that $d=d_0$, $e=d_n$, and
$(d_i,d_{i+1})\in r^{\bagz(w_i)}$, for all $0\leq i<n$. It is
\emph{downward} if, for all $0< i<n$, $w_{i}$ is a successor of
$w_{i-1}$ and $d_{i}$ is contained in an $r$-root cluster of $w_{i}$.

We then have: 

\begin{lemma} \label{lem:can-paths}
  Let $(T,\bagz,\bago)$ be a canonical decomposition of an
  interpretation~\Imc. We have that $(d,e)\in r^{\Imc}$ iff one of the
  following is true:
  \begin{itemize}

    \item[--] $r$ is non-transitive and $(d,e)\in
      r^{\bagz(\varepsilon)}$ or
      $(d,e)\in r^{\bagz(v)}$ for some successor $v$ of the unique
      $w$ where $d$ is fresh;

    \item[--] $r$ is transitive and there is an $r$-path
      $d_0,w_0,\ldots,d_n$ from $d$ to $e$ in $(T,\bagz,\bago)$ such that one of
      the following holds: 
      \begin{enumerate}[label=\textbf{\Alph*},leftmargin=*,align=left]

	\item\label{it:p1} $d_0\in\roots(w_0)\cup\roots(w_0\cdot
	  -1)$, $d_1\in\roots(w_0)$, and  
	  $d_0,\ldots,d_n$ is downward, or

	\item\label{it:p2} $d_0\in\roots(w_0)$,
	  $d_1\notin\roots(w_0)$, and if $n>1$, then $d_1,\ldots,d_n$
	  is downward and $w_1{\cdot}{-}1$ is an ancestor of $w_0$
	  with $d_1 \in F_r(w_1{\cdot}{-}1)$. 

      \end{enumerate}

    \end{itemize}
  
\end{lemma}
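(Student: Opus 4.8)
The plan is to prove the two implications separately, splitting into the transitive and non-transitive cases; the backward implications and the non-transitive case are routine, while the transitive forward implication is the crux. For the backward direction, every edge $(d_i,d_{i+1})\in r^{\bagz(w_i)}$ of an $r$-path lies in $\chi_r$, so an $r$-path of either shape immediately witnesses $(d,e)\in\chi_r^+=r^\Imc$, and in the non-transitive case the two displayed disjuncts put the edge directly into $\chi_r=r^\Imc$. For the non-transitive forward direction, suppose $(d,e)\in\chi_r$, say $(d,e)\in r^{\bagz(w)}$. If $w=\varepsilon$ we obtain the first disjunct; otherwise, $r$ being non-transitive together with Condition~\ref{it:can1} forces $\bago(w)=r$, and then Condition~\ref{it:can2} gives $\Delta_w=\{d,e\}$ with $d\in F(w\cdot -1)$ and $(d,e)\in r^{\bagz(w)}$. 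Hence $d$ is fresh in $w\cdot -1$, which by connectedness is its unique fresh node, and $w$ is a successor of it, yielding the second disjunct.

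For the transitive forward direction I would fix a witnessing $\chi_r$-path $d=f_0,\dots,f_m=e$ of minimal length and induct on $m$, using Lemma~\ref{lem:roots} to track the home nodes $w_{f,r}$; each non-loop edge falls into exactly one of its three cases. In the base case $m=1$, the edge $(d,e)\in r^{\bagz(u)}$ (so $\bago(u)=r$) is analysed directly: if $w_{d,r}=w_{e,r}$ or $w_{e,r}$ is a child of $w_{d,r}$, then both endpoints are $r$-fresh at the relevant node, so the length-one path $d,w_0,e$ with $w_0\in\{w_{d,r},w_{e,r}\}$ satisfies~\ref{it:p1}; and if $w_{e,r}$ is a proper ancestor of $w_{d,r}$, then $e\notin\roots(w_{d,r})$ and the same length-one path (with $w_0=w_{d,r}$) satisfies~\ref{it:p2}.

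For the inductive step I would apply the induction hypothesis to the tail $f_1,\dots,f_m$ to obtain a canonical path from $f_1$ to $e$, classify the first edge $(f_0,f_1)$ by Lemma~\ref{lem:roots}, and merge the two into a single canonical path: a ``down'' or ``same-home'' first edge should extend the downward tail from the top, while an ``up'' first edge is exactly what forces the lateral-then-downward shape of~\ref{it:p2}, with the ancestor $w_1\cdot -1$ and the condition $d_1\in F_r(w_1\cdot -1)$ furnished by the home node $w_{f_1,r}$. I expect this merge to be the main obstacle: prepending an edge can a priori break downwardness, so the real work is to show that every resulting configuration reroutes into one of the two admissible shapes. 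The structural lever is Condition~\ref{it:can4}, which confines all $r$-successors of an $r$-cluster to child nodes and hence places the transitive successors of any element in the downward cone below its home node; combined with the uniqueness of $w_{f,r}$ and the trichotomy of Lemma~\ref{lem:roots}, this should rule out a down-move being followed by an up-move along a minimal path, so that the single admissible up-move can occur only at the very start as the lateral step of~\ref{it:p2}. Verifying this normalisation carefully, including the precise placement of the clusters and the $F_r$-membership required in~\ref{it:p2}, is the technical heart of the argument.
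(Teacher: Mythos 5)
Your backward direction, the non-transitive case, and the base case of a single edge are all correct and coincide with the paper's proof (which indeed ends with exactly your trichotomy from Lemma~\ref{lem:roots}). But the transitive inductive step is where the lemma actually lives, and there your proposal stops at naming the obstacle rather than overcoming it. Concretely, two distinct collapsing arguments are needed, and neither is supplied. First, a down-move followed later by a return to the same or a higher node must be shortcut to a single edge: the paper proves this (its Cases~1 and~2) by an induction showing $(d_i,d_{k+1})\in r^{\bagz(w_i)}$ along the intermediate nodes, using connectivity (Condition~\ref{it:tree3} of tree decompositions), the fact that each bag is $\Imc|_{\Delta_w}$, and transitivity of $r$ inside bags; only with this shortcut in hand does your minimality assumption rule out down-then-up configurations. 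Second, a leading run of up-moves must be absorbed into the single lateral step of~\ref{it:p2}: the paper proves a separate claim that if the path begins with up-moves then $(e_0,e_2)\in r^{\bagz(v_0)}$, by an induction up the tree in which Conditions~\ref{it:can3} and~\ref{it:can4}(b)--(d) are used to show that the relevant edge persists in every bag along the way. In your induction this second argument is unavoidable in the configuration ``first edge of up-type, tail of shape~\ref{it:p2}'', which produces two stacked up/lateral moves; your sketch never says how these merge into one.

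Moreover, the structural lever you invoke is false as stated: Condition~\ref{it:can4} does \emph{not} place all transitive successors of an element in the downward cone below its home node $w_{d,r}$. Shape~\ref{it:p2} itself is the counterexample: $d$ can have an edge inside $\bagz(w_{d,r})$ to an element whose home is a strict ancestor (third bullet of Lemma~\ref{lem:roots}), and from that element's cluster downward paths lead into subtrees that are not below $w_{d,r}$ at all. So the claim you lean on to ``rule out a down-move followed by an up-move'' cannot carry that weight; what rules it out is minimality \emph{together with} the bag-composition shortcut, and what confines the up-moves to a single initial lateral step is the edge-lifting argument via~\ref{it:can4}. Since you explicitly defer exactly these verifications as ``the technical heart,'' the proposal has the right skeleton --- classification by Lemma~\ref{lem:roots} plus normalisation into~\ref{it:p1}/\ref{it:p2} --- but leaves the lemma's essential content unproven.
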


\begin{proof}
  Let first be $(d,e)\in r^\Imc$ for some non-transitive role $r$. The
  direction $(\Leftarrow)$ is immediate. For $(\Rightarrow)$, assume
  that $(d,e)\notin r^{\bagz(\varepsilon)}$. By
  Condition~\ref{it:can1}, $(d,e)\notin r^{\bagz(w)}$, for all
  $w\neq \varepsilon$ such that $\bago(w)\neq r$. The statement then
  follows from Condition~\ref{it:can2}.

  Let now be $r$ transitive. Again, the direction $(\Leftarrow)$ is
  trivial. For $(\Rightarrow)$, $(d,e)\in r^\Imc$
  implies, by definition of 
  tree decomposition, that there is an $r$-path
  $d_0,w_0,\ldots,w_{n-1},d_n$ from $d$ to $e$ in $(T,\bagz,\bago)$.
  We show first that it is without loss of generality to assume that
  for all $0\leq j<n-1$, we have:

  \begin{enumerate}[label=(\alph*)]

    \item $w_{j+1}=w_j$, 

    \item $w_{j+1}$ is a successor of $w_{j}$, $d_{j+1} \in
      \roots(w_{j})$ and  $d_{j+1}$ belongs to an $r$-root cluster in $\bagz(w_{j+1})$

    \item $w_{j+1}$ is an ancestor of $w_{j}$ and $d_{j+1}\in
      \roots(w_{j+1})$, or

    \item The predecessor of $w_{j+1}$ is an ancestor of $w_j$,
      $d_j\in\roots(w_j)$, and $d_{j+1}\in\roots(w_{j+1})$.

  \end{enumerate}
 Observe that, by Lemma~\ref{lem:roots}, we can assume that $w_i \in
  \{w_{d_i,r},w_{d_{i+1},r}\}$. Moreover, if $w_i=w_{d_i,r}$, then
  either $w_{d_{i},r}=w_{d_{i+1},r}$ or $w_{d_{i+1},r}$ is an ancestor
  of $w_{d_i,r}$ and $d_{i+1}\in\roots(w_{j+1})$; if
  $w_i=w_{d_{i+1},r}$, then $w_{d_{i+1},r}$ is a successor of
  $w_{d_{i},r}$ and $d_{i+1}\in\roots(w_i)$. Let now be $0\leq j<
  n-1$. We distinguish four cases:
  \begin{itemize}

    \item If $w_j=w_{d_j,r}$ and $w_{j+1}=w_{d_{j+1},r}$, then
      Case~(a) or Case~(c) applies.

    \item If $w_j=w_{d_{j+1},r}$ and $w_{j+1}=w_{d_{j+1},r}$, then
      Case~(a) applies.

    \item If $w_j=w_{d_j,r}$ and $w_{j+1}=w_{d_{j+2},r}$, then
      Case~(b) or~(d) applies.
    \item If $w_j=w_{d_{j+1},r}$ and $w_{j+1}=w_{d_{j+2},r}$, then
      Case~(b) applies.

  \end{itemize}
  Note then, that in case~(a) is satisfied for some $j$, we can safely
  drop $d_{j+1}$ and $w_{j}$ and the remaining sequence is still an
  $r$-path, due to Definition~\ref{def:treedecomp} (item~\ref{it:tree2}). 
  So from now on, we assume that for all $0\leq j<n-1$, one of~(b)--(d)
  is the case.

  If Condition~(b) applies for all $j$ then, by the second item in Lemma~\ref{lem:roots}, the $r$-path
  is downward and it satisfies \ref{it:p1}. 
  Otherwise, we modify the sequence by performing the following operation
  exhaustively. Let $0\leq k < n-1$ be some index
  satisfying~(c), that is, $w_{k+1}$ is an ancestor of $w_k$, and let
  $k'$ be minimal such that all $i$ with $k'\leq i<k$ satisfy~(b). If
  $k'=k$, then do nothing, otherwise we distinguish the following
  cases: 

\begin{description}

\item[Case 1:] $w_{k+1}=w_j$ for some $k'\leq j<k$. We show
  inductively that then $(d_i,d_{k+1})\in r^{\bagz(w_i)}$, for all $j\leq
  i\leq k$. For $i=k$ it is clear by assumption. For the inductive
  step, assume $j\leq i<k$.  Clearly, we have
  $(d_i,d_{i+1})\in r^{\bagz(w_i)}$ and, by the choice of $k$ and the
  assumption $w_j=w_k$, also
  $(d_{k+1},d_{k+2})\in r^{\bagz(w_{j})}$. Moreover, by induction, we can
  assume that $(d_{i+1},d_{k+1})\in r^{\bagz(w_{i+1})}$. By
  the definition of tree decomposition (item \ref{it:tree3}), we know that
  $d_{k+1}\in \Delta_{w_i}$; and that $r(d_{i+1},d_{k+1})\in r^{\bagz(w_i)}$. 
  Further,  the definition of tree decomposition yields also
  $(d_{i},d_{k+1})\in r^{\bagz(w_i)}$, thus finishing the inductive step.

  This implies $(d_j,d_{k+1})\in r^{\bagz(w_j)}$. Since also
  $(d_{k+1},d_{k+2})\in r^{\bagz(w_j)}$, we know
  $(d_{j},d_{k+2})\in r^{\bagz(w_j)}$.
  Thus, dropping the subsequence $$d_{j+1},w_{j+1},\ldots,w_{k+1}$$ 
  yields an $r$-path satisfying~(b)--(d) for all $j$.

\item[Case 2:] $w_{k+1}$ is an ancestor of $w_{k'}$. We can argue as in
  Case~1 that $(d_{k'},d_{k+1})\in r^{\bagz(w_{k'})}$. Thus, we can drop the
  subsequence $d_{k'+1},\ldots,d_k,w_k$ obtaining an $r$-path which
  satisfies~(b)--(d), for all $j$.

\end{description}
We can deal similarly with an index satisfying~(d).  After performing
this step exhaustively, we obtain an $r$-path
$e_0,v_0,\ldots,v_{m-1},e_m$ from $d$ to $e$ which is downward, and satisfies \ref{it:p1} or \ref{it:p2}, or
\begin{itemize}

  \item[($\ast$)] there is some $0\leq j<m$ such
    that~(c) holds for all $0\leq i<j$, and if $j<m-1$, then (d) holds
    for $j$, and~(b) holds for all $j< i <m$.

\end{itemize}
In case of~$(\ast)$, we show how to obtain an $r$-path
satisfying~$(\ast)$ with $j=0$. 

\smallskip\noindent
\textit{Claim.} If $j\geq 1$, then $(e_0,e_2)\in r^{\bagz(v_0)}$.

\smallskip\noindent\textit{Proof of the Claim.} 
We show inductively that $(e_1,e_2)\in r^{\bagz(u)}$ for all $u$ on the
path between $v_1$ and $v_0$. It is obviously true for $u=v_1$.

Let now $u$ be the successor of some $u_0$ on the path from $v_{1}$ to
$v_0$, and assume by induction that $u_0$ satisfies $(e_1,e_{2})\in
\Delta_{u_0}$. Suppose that \ref{it:can3} holds for $u$. Then
$\bago(u)\neq r$. But since $\bago(w_0)=r$, we know that \ref{it:can3} holds
again for some node between $u$ and $v_0$. The only possible witness
for this is $e=e_{2}$.  However, this leads to a contradiction as
well, because $e_{2}\notin F(w)$ for any $w$ on the path between $u$
and $v_{0}$. Hence, we know that \ref{it:can4} holds for $u$. Let \abf be
the $r$-cluster in $\bagz(u)$ witnessing this.
\begin{itemize}

  \item If $e_{2}\in \abf$, \ref{it:can4} (b) implies that $e_{2}\in
    \Delta_u$, since 
    $(e_{1},e_{2})\in r^{\bagz(u_0)}$. 
  \item If $e_{2}\notin \abf$, then we know by~\ref{it:can4} that
    $(e,e_{2})\in r^{\bagz(u)}$, for some $e\in\abf$. Thus,
    $(e,e_{2})\in r^{\bagz(u_0)}$. Again, \ref{it:can4}(b) implies that
      $e_{2}\in\Delta_u$.

  \end{itemize}
  By the definition of tree decomposition, we obtain in both cases $(e_1,e_2)\in
  r^{\bagz(u)}$, thus finishing the induction.  Since also
  $(e_0,e_1)\in r^{\bagz(v_0)}$, we obtain $(e_0,e_2)\in r^{\bagz(w_0)}$.
  This finishes the proof of the Claim.

  \smallskip
  It is now easy to verify that dropping $e_1,v_1$ from the sequence
  preserves~$(\ast)$, but with $j$ and $m$ decreased by one. By the
  Claim, we can perform this operation repeatedly until $j=0$.

  We argue that the remaining $r$-path satisfies either \ref{it:p1} or \ref{it:p2}.  \begin{itemize}
    \item If $m=1$, we distinguish cases according to
      Lemma~\ref{lem:roots}:

      \begin{itemize}

	\item if $w_{e_0,r}=w_{e_1,r}$, then $e_0,w_{e_1,r},e_1$ is a
	  downward path from $d$ to $e$ satisfying \ref{it:p1};

	\item if $w_{e_1,r}$ is a successor of $w_{e_0,r}$, then
	  $e_0,w_{e_1,r},e_1$ is a downward path from $d$ to $e$ satisfying \ref{it:p1};

	\item if $w_{e_1,r}$ is an ancestor of $w_{e_0,r}$, then
	  $e_0,w_{e_0,r},e_1$ is an $r$-path from $d$ to $e$
	  satisfying \ref{it:p2}.

      \end{itemize}

  \end{itemize}
  %
  In case $m>2$, the resulting path satisfies~\ref{it:p2} because
  of~$(\ast)$, in particular, (d) holds for $0$ and (b) holds for all
  $0<j<m$.
\end{proof}

We restate Theorem~\ref{thm:sq-canmod} and give the missing details
from the proof.

\medskip\noindent\textbf{Theorem~\ref{thm:sq-canmod}.}\textit{
  Let $\Kmc=(\Tmc, \Amc)$ be an \SQ KB and $\vp$ a PRPQ with
  $\Kmc\not\models\vp$. There is a model $\Jmc$ of $\Kmc$ and a
  canonical tree decomposition $(T,\Imf ,\bago)$ of $\Jmc$ with
  (i)~$\Jmc\not\models\vp$, (ii) $\Imf(\varepsilon) \models \Amc$, and
  (iii) width and outdegree of $(T,\Imf)$ are bounded by
  $O(|\Amc|\cdot 2^{p(|\Tmc|)})$, for some polynomial $p$.
}

\begin{proof}
  Let $\Jmc$ and $(T,\bagz,\bago)$ be the interpretation and the tree
  decomposition obtained by the unraveling procedure in the main part.

  We first verify that $(T,\bagz,\bago)$ is indeed a tree
  decomposition of $\Jmc$. Items~\ref{it:tree1} and~\ref{it:tree2} of
  $(T,\bagz)$ being a tree decomposition of \Jmc are an immediate
  consequence of the definition of $\Jmc$ and $\bagz$.
  Item~\ref{it:tree3} is a consequence of the nature of the rules. In
  particular, each rule makes sure that the domain elements in world
  $v$ are either freshly introduced, or appear in the predecessor.

  We argue next that $(T,\bagz,\bago)$ is canonical.  Let
  $v\in T$ be a successor of $w\in T$ and assume that $r=\bago(w)$ and
  $s=\bago(v)$. A general property of the construction of \Jmc is that
  Condition~\ref{it:can1} is satisfied throughout. More precisely, the
  application of a rule does not change the interpretation of elements
  that were already present, and it implies~\ref{it:can1} in the
  created interpretation.  For the remaining conditions, we
  distinguish cases which rule has been applied to obtain $v$ from
  $w$.
  \begin{itemize}

    \item[--] In case of \Rone, it is clear from the definition of~\Rone,
      that~\ref{it:can2} is satisfied.

    \item[--] If \Rtwo has been applied, it is clear that
      $\Delta_v\cap\Delta_w$ is the singleton $\{\delta_0\}$ and that
      $r\neq s$. By the premise of the rule, we know that $\delta_0\in
      F(w)$. By definition of~$\mn{Wit}_{\Imc,r}$, we know that there
      is an $r$-root cluster \abf in $\Imf(v)$ with $\delta_0\in\abf$.
      Finally, observe that~\Rtwo is applied only once to every~$d\in
      F(w)$, and $r'\neq r$. Thus, \Rtwo satisfies~\ref{it:can3}.

    \item[--] Suppose \Rthree has been applied to some $r$-cluster \abf in
      $\bagz(w)$ with $\abf\subseteq\roots(w)$ and a direct $r$-successor
      $\hat e$ of $\tau(\delta)$ in \Imc, for some $\delta\in\abf$
      such that $(\delta,\delta')\notin r^\Jmc$, for any $\delta'$
      with $\tau(\delta')=\hat e$. We show that \abf witnesses~\ref{it:can4}. 

      By definition of $\Delta'$ and~\eqref{eq:ij}, \abf is an $r$-root
      cluster in $\Imf(v)$. Items~\ref{it:can4}(a) and~\ref{it:can4}(b) are satisfied by assumption. 
      Item~\ref{it:can4}(c) follows from the definition of $\Delta_v$.
      For~\ref{it:can4}(d), assume $(d,e)\in r^(\Imf(v))$ and suppose that
      $e\in F(v)$. By definition of \Rthree, we know that $(d,e)\in
      r^\Jmc$. Because $e\in F(v)$, the tuple $(d,e)$ has been added
      to $r^\Jmc$ in this step via the application of~\eqref{eq:ij}.
      Thus, we obtain $d\in\abf\cup\Delta=\abf\cup F(v)$, as required.

   \end{itemize}

   \smallskip We next verify that $(T,\bagz,\bago)$ and
   $\Jmc$ satisfy Conditions~\textit{(i)}--\textit{(iii)} from the
   statement.

   Condition~\textit{(i)} is a consequence of the fact that $\tau$ is
   a homomorphism from \Jmc to \Imc and that PRPQs are preserved under
   homomorphisms. Condition~\textit{(ii)} is ensured by the
   initialization phase. For Condition~\textit{(iii)}, we start with
   the bounding the width. We distinguish
   cases according to which rule was applied. 
\begin{itemize}

  \item for the root $\varepsilon$ of $T$, we know that
    $|\Delta_\varepsilon|$ is bounded as required by construction and
    Lemmas~\ref{lem:outdegree-width} and~\ref{lem:wit}.

  \item If $w$ was created by~\ref{it:r1}, then $|\Delta_w|=2$.

  \item If $w$ was created by~\ref{it:r2}, then $|\Delta_w|$ is
    bounded as required by Lemma~\ref{lem:wit}.

   \item If $w$ was created by~\ref{it:r3}, let $\Cmc$ denote the set of
    all concepts $\qnrleq m r  D$ appearing in $\Tmc$. We make the
    following observations.
    \begin{enumerate}[label=(\alph*)]

      \item For all $(d,e)\in r^\Imc$, $C\in \Cmc$: if $d\in
	C^\Imc$, then $e\in C^\Imc$;
      
      \item Let $d_1,\ldots,d_n$ be such that $(d_i,d_{i+1})\in
	r^\Imc$, for all $1\leq i<n$ and $n>|\Tmc|$. Then
	$\mn{Wit}_{\Imc,r}(d_i)=\mn{Wit}_{\Imc,r}(d_j)$ for some
	$i\neq j$.

    \end{enumerate}
    Point~(a) follows from the semantic of $\qnrleq m r D$ and
    transitivity. For Point~(b) observe that $|\Cmc|<n$, thus
    there are $i\neq j$ such that $d_i\in C^\Imc$ iff $d_j\in C^\Imc$,
    for all $C\in \Cmc$. By definition of $\rightsquigarrow_{\Imc,r}$ and
    $\mn{Wit}_{\Imc,r}$, we also have
    $\mn{Wit}_{\Imc,r}(d_i)=\mn{Wit}_{\Imc,r}(d_j)$. 

    Now consider some branch of applications of  \ref{it:r3}. Each
    application adds (copies of) elements which are new witnesses,
    that is, they are in
    $\mn{Wit}_{\Imc,r}(e)\setminus \mn{Wit}_{\Imc,r}(d)$, for some
    $(d,e)\in r^\Imc$. By Point~(b), along such a branch, elements
    are added at most $|\Tmc|$ times. Each time, at most
    $|\mn{Wit}_{\Imc,r}(d)|$ elements are added. Overall, the size is
    bounded by
    $|\Tmc|\cdot(|\Amc|\cdot2^{p(|\Tmc|)})=O(|\Amc|\cdot
    2^{p(|\Tmc|)})$.

\end{itemize}
Finally, the outdegree is bounded by $k_1\cdot k_2\cdot k_3$, where
$k_1$ is the number of elements in a bag, $k_2$ is the number of role
names, and $k_3$ is the maximal outdegree in \Imc. We have seen bounds
for $k_1$ and $k_2$. So it remains to note that the outdegree in \Imc
is bounded by $|\Amc|+2^{\mn{poly}(|\Tmc|)}$, by
Lemma~\ref{lem:outdegree-width}.

\medskip It remains to prove that $\Jmc\models\Kmc$, which is a
consequence of the following claim.

\smallskip\noindent{\it Claim.}
For all $\delta \in \Delta^\Jmc$ and all $C \in \mn{cl}(\Tmc)$, we have
$$\delta \in C^\Jmc \text{ iff } \tau(\delta) \in C^\Imc.$$

\smallskip\noindent{\it Proof of the Claim.} The proof is by induction
on the structure of concepts. The case $C=A$  for $A \in \mn{N_C}$ follows from $\tau$ being
a homomorphism and rules \ref{it:r1}--\ref{it:r3}. The Boolean cases $C=\neg
D$ and $C=C_1\sqcap C_2$ are  consequences of the induction
hypothesis. It thus remains to consider concepts of the form $C=\qnrsim
n  r D$.  If $r \in \mn{N}_\mn{R}^{nt}$, the claim is  a
straightforward consequence of the induction hypothesis and
construction rule~\ref{it:r1}.  Now,  assume that $r \in \mn{N}_\mn{R}^{t}$.
It suffices to  show that: 
\begin{enumerate} \item[(a)] If
      $\tau(\delta) \not \in {\qnrleq n  r  D}^\Imc $, then  $\delta \not \in
    {\qnrleq n  r  D}^\Jmc$, and 
    
    \item[(b)] if $\tau(\delta) \in {\qnrleq
      n  r  D}^\Imc$, then $\delta \in {\qnrleq n  r  D}^\Jmc$.
  \end{enumerate}

\noindent {\bf For Point~(a)} 
Let now be $\tau(\delta)=d$ and 
$d\not \in {\qnrleq n r D}^\Imc$, that is $d\in {\qnrgeq {(n+1)} r
D}^\Imc$. It suffices to show that $\delta$ has $n+1$ $r$-successors
satisfying $D$. Let $d'\in\Delta^\Imc$ be any domain element such that $(d,d')\in
r^\Imc$ and $d'\in D^\Imc$.  Thus, either $d'\in Q_{\Imc,r}(d)$ or
there is a sequence $d_0,\ldots,d_m$ with $d_0=d$, $d_m=d'$, and
$d_{i+1}$ is a direct $r$-successor of $d_i$ in \Imc, for all $0\leq
i<m$. 

In the first case, by construction, there is $\delta'\in
Q_{\Jmc,r}(\delta)$ with $\tau(\delta')=d'$. By induction hypothesis, $\delta'\in
D^\Jmc$. 

In the second case, we  show that for every such sequence, every 
$\delta\in\Delta^\Jmc$ with $\tau(\delta)=d_0$ and every $0\leq i\leq
m$ there is an $r$-path from $\delta$ to some $\delta_{i}$ with
$\tau(\delta_{i})=d_{i}$.
The base case $i=0$ is immediate. 
So suppose there is an $r$-path  $\pi=\delta_0, w_0,\ldots,w_{i-1},\delta_{i}$ 
from
$\delta_0$ to $\delta_{i}$ with $\tau(\delta_{i})=d_{i}$, for $i>0$. 
Let $w\in T$ be such that $w=w_{d_i}$, and 
$\abf\subseteq\roots(w)$ with the $r$-cluster in $\bagz(w)$ such that $d_i \in \abf$.
We distinguish two cases:
\begin{itemize}

  \item If there is some $\delta' \in \Delta^\Jmc$, such that $(\delta,\delta')\in r^\Jmc$, for some  with
    $\tau(\delta')=d_{i+1}$. Then, there is some $r$-path
    $\delta,v_0,\ldots,v_k, \delta'$ from $\delta$ to $\delta'$ in
    $(T,\bagz,\bago)$. Then $\pi,v_0,\ldots,v_k,\delta'$ is the required
    $r$-path.

  \item If $(\delta,\delta')\notin r^\Jmc$, for all $\delta'$ with
    $\tau(\delta')=d_{i+1}$,
    then \ref{it:r3} applies to $w$, $\abf$,
    and $d_{i+1}$. In particular, it adds a successor $v$ of $w$ to $T$
    and adds a domain element $\delta'=(d_{i+1})_v\in \Delta$ to $\Jmc$
    such that $(\delta_i,\delta')\in r^\Jmc$. By construction,
     $(\delta,\delta') \in r^{\bagz(v)}$ and $\pi,v,\delta'$ is the
    required $r$-path.
\end{itemize}

Thus, we can conclude that  there is an $r$-path from $\delta$ to some
$\delta'$ with $\tau(\delta')=d'$. By induction, we know that
$\delta'\in D^\Jmc$. Since distinct $d'$ with $(d,d')\in r^\Imc$ and
$d'\in D^\Imc$ yield distinct $\delta'$, this finishes the proof
of~(a).

\medskip \noindent{\bf For Point (b)} Assume $\tau(\delta) \in {\qnrleq n
 r  D}^\Imc$ with $\tau(\delta)=d$.  It
clearly suffices to show that for every $e\in\mn{Wit}_{\Imc,r}(d)$,
there is at most one $\delta'\in\Delta^\Jmc$ with $(\delta,\delta')\in
r^\Jmc$. To do so, let $w$ be the (unique) world where $\delta$ is
$r$-fresh in $(T,\bagz,\bago)$. We show first that 
\begin{itemize}

  \item[\textit{(x)}] for every $e\in\mn{Wit}_{\Imc,r}(d)$, there is
    precisely one $\delta'\in\Delta_w$ with $(\delta,\delta')\in
    r^{\bagz(w)}$ and $\tau(\delta')=e$.

  \item[\textit{(xx)}] for every $\delta' \in \Delta^\Jmc$ such that
    $(\delta, \delta') \in r^\Jmc$ and $\tau(\delta') \in
    \mn{Wit}_{\Imc,r}(\tau(\delta))$, we have $\delta' \in \Delta_{w}$

\end{itemize}
For showing~\textit{(x)}, observe that either $w = \varepsilon$ or
$\delta$ was added either by an application of \ref{it:r2} or
\ref{it:r3}. We distinguish cases:
\begin{itemize}

  \item Suppose first $w=\varepsilon$. If $\delta=a\in\mn{ind}(\Amc)$
    then~\textit{(x)} is clear due to the initialization of $\Jmc$ and
    $(T,\bagz,\bago)$.  If $\delta= d_r \in \Delta^r$, we have that
    $d\in \mn{Wit}_{\Imc,r }(a)$ for some individual $a$, and by
    Lemma~\ref{lem:wit2} thus $\mn{Wit}_{\Imc,r}(d) \subseteq
    \mn{Wit}_{\Imc,r }(a)$. Together with the initialization of $\Jmc$
    and of $(T,\bagz,\bago)$, this yields that $\Delta_\varepsilon$
    contains exactly one element $\delta_e$ such that
    $(\delta,\delta_e)\in r^{\bagz(\varepsilon)}$ and
    $\tau(\delta_e)=e$, for each $e \in \mn{Wit}_{\Imc,r }(d)$. 

  \item If $w$ was created by \ref{it:r2}, then  there is some $\hat
    \delta \in \Delta_w \cap \Delta_{w'}$  such that for every
    $\delta' \in \Delta_w$, with $\hat \delta \neq \delta'$,
    $\tau(\delta') \in \mn{Wit}_{\Imc,r}(\tau(\hat \delta))$. The
    claim~\textit{(x)} follows now by Lemma~\ref{lem:wit2} and the
    definition of~\ref{it:r2}.
  
  \item If $w$ was created by \ref{it:r3}, let $\abf \subseteq
    F_r(w\cdot -1)$ be the $r$-cluster in $\Delta_w$ that witnesses
    this.  Then there is some $\hat \delta \in \abf$  and $\delta' \in
    \Delta_w$ such that  $\tau(\delta')$ is a direct successor of
    $\tau(\hat \delta)$.  Further, by the choice of $w$, we have
    $\delta \in F(w)$.  Then, by the definition of \Rthree,
    $\tau(\delta) \in \mn{Wit}_{\Imc, r}(\tau(\delta')) \setminus
    \mn{Wit}_{\Imc, r}(\tau(\hat \delta))$.  By  Lemma~\ref{lem:wit2},
    the definition of \ref{it:r3} ensures that there is exactly one
    fresh element in $\Delta_w$ for every $e \in \mn{Wit}_{\Imc,
    r}(\tau(\delta)) \setminus  \mn{Wit}_{\Imc, r}(\tau(\hat
    \delta))$.  It remains to show that there is exactly one element
    in $\Delta_w$ for every  $e \in \mn{Wit}_{\Imc, r}(\hat \delta)
    \cap \mn{Wit}_{\Imc, r}(\delta')$.  Indeed, we can (inductively)
    assume that \textit{(x)} holds for $w'=w\cdot -1$ and
    $\hat\delta$, that is, for every such $e$,  there is some $d'' \in
    \Delta_{w'}$ such that $\tau(\delta'') = e$.  Moreover $(\hat
    \delta, \delta'') \in r^{\bagz(w')}$, and  by the definition of
    \ref{it:r3}, we have  $\delta'' \in \Delta_w$.  

\end{itemize}

For showing~\textit{(xx)}, assume $(\delta, \delta') \in r^\Jmc$. 
 By Lemma~\ref{lem:can-paths}, there is an 
 $r$-path $\pi=\delta_0,w_0,\ldots,w_{k-1},\delta_k$ from $\delta$ to
 $\delta'$, satisfying either \ref{it:p1} or \ref{it:p2}. 
We use the following auxiliary claims.
\begin{claim} \label{claim:step}
For every $v$, and every $\delta \in \Delta_w$. If $v$ was created by an application of rule \ref{it:r3} and 
$\delta \notin F_r(v)$ then $W\subseteq \Delta_{v\cdot -1}$ where $$W=
\{\delta' \in \Delta_{v} \mid (\delta,\delta') \in r^{\bagz(v)} \land
\tau(\delta') \in \mn{Wit}_{\Imc, r}(\tau(\delta))\}.$$
\end{claim}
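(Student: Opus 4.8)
The plan is to prove the contrapositive-free reformulation $W\cap F(v)=\emptyset$, which suffices: writing $w=v\cdot -1$, rule~\ref{it:r3} sets $\Delta_v=F(v)\cup\abf\cup P$, where $F(v)$ is the set of freshly introduced elements, $\abf\subseteq F_r(w)$ is the processed cluster, and $P=\{\delta''\mid (\delta',\delta'')\in r^{\bagz(w)}\text{ for some }\delta'\in\abf\}$, and both $\abf$ and $P$ lie in $\Delta_{w}=\Delta_{v\cdot -1}$. Let $\delta_0\in\abf$ and let $e$ be the direct $r$-successor of $\tau(\delta_0)$ triggering the application, so $F(v)=\{f_v\mid f\in\mn{Wit}_{\Imc,r}(e)\setminus\mn{Wit}_{\Imc,r}(\tau(\delta_0))\}$. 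If $w=\varepsilon$ the statement is vacuous, since then $\bago(w)=\bot\neq r=\bago(v)$ forces $F_r(v)=\Delta_v$, contradicting $\delta\notin F_r(v)$. Hence $\bago(w)=r$, which gives $F_r(v)=F(v)$, so the hypothesis $\delta\notin F_r(v)$ reads $\delta\in\abf\cup P$. I would then split on whether $\delta\in\abf$ or $\delta\in P\setminus\abf$.

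For $\delta\in\abf$ I would show that no fresh element can satisfy the witness side condition defining $W$. Since $\abf$ is an $r$-cluster in $\bagz(w)$ and $\tau$ is a homomorphism from $\Jmc$ to $\Imc$, we get $\tau(\delta)\in Q_{\Imc,r}(\tau(\delta_0))\subseteq\mn{Wit}_{\Imc,r}(\tau(\delta_0))$, the inclusion holding because $\rightsquigarrow_{\Imc,r}^*$ is reflexive. Lemma~\ref{lem:wit2} then yields $\mn{Wit}_{\Imc,r}(\tau(\delta))\subseteq\mn{Wit}_{\Imc,r}(\tau(\delta_0))$. If some $\delta'\in F(v)$ were in $W$, then $\tau(\delta')\in\mn{Wit}_{\Imc,r}(\tau(\delta))\subseteq\mn{Wit}_{\Imc,r}(\tau(\delta_0))$; but by definition of $F(v)$ we have $\tau(\delta')\in\mn{Wit}_{\Imc,r}(e)\setminus\mn{Wit}_{\Imc,r}(\tau(\delta_0))$, a contradiction. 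Hence $W\cap F(v)=\emptyset$.

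For $\delta\in P\setminus\abf$ the witness set of $\tau(\delta)$ need not be contained in that of $\tau(\delta_0)$, so instead I would argue directly that $\delta$ has no $r$-edge to any fresh element, i.e.\ $(\delta,\delta')\notin r^{\bagz(v)}$ for all $\delta'\in F(v)$. At step $v$ the edges created by~\ref{it:r3} through~\eqref{eq:ij} have sources only in $\abf\cup F(v)$, and $\delta\notin\abf\cup F(v)$; and none existed earlier, since $\delta'$ is fresh at $v$. The delicate point, which I expect to be the main obstacle, is to exclude that a \emph{later} application of~\ref{it:r3} reintroduces such an edge inside $\Delta_v$. Here I would combine two observations. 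First, by connectedness of tree decompositions (item~\ref{it:tree3}) and freshness of $\delta'$ at $v$, every later node containing $\delta'$ lies in the subtree rooted at $v$, so the responsible node $w'$ is a descendant of $v$. Second, for~\ref{it:r3} at $w'$ to add $(\delta,\delta')$ the old element $\delta$ must belong to the processed cluster $\abf'\subseteq F_r(w')$, hence $\delta\in F_r(w')$; but canonical decompositions contain a \emph{unique} node in which $\delta$ is $r$-fresh, forcing $w'$ to be that node, which — as $\delta\in\Delta_{v\cdot -1}$ and $\delta\notin F_r(v)$ — lies strictly above $v$, contradicting that $w'$ is a descendant of $v$. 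Thus no such edge arises, so $W\cap F(v)=\emptyset$ in this case too, and in both cases $W\subseteq\abf\cup P\subseteq\Delta_{v\cdot -1}$.
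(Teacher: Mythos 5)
Your proof is correct, and its skeleton --- reduce the claim to $W\cap F(v)=\emptyset$, then split on $\delta\in\abf$ versus $\delta\in P\setminus\abf$ --- coincides with the paper's own case distinction ($\delta\in\abf$ versus $\delta\notin\abf\cup F(v)$). Your first case is the paper's argument made explicit: the paper merely says that $\tau(\delta')\in\mn{Wit}_{\Imc,r}(\tau(\delta))$ rules out $\delta'$ being one of the fresh elements $f_v$, and your appeal to Lemma~\ref{lem:wit2} (via $\tau(\delta)\in Q_{\Imc,r}(\tau(\delta_0))\subseteq\mn{Wit}_{\Imc,r}(\tau(\delta_0))$, using that $\tau$ is a homomorphism) is exactly the missing justification. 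The genuine divergence is the second case: the paper disposes of it in one line by invoking condition~\ref{it:can4}(d) at $v$, which is legitimate because canonicity of the constructed decomposition has already been verified earlier in the proof of Theorem~\ref{thm:sq-canmod}, whereas you re-derive the relevant instance of~\ref{it:can4}(d) at the level of the construction. Your route buys something real: the paper's own verification of~\ref{it:can4}(d) asserts without argument that an edge $(d,e)$ with $e\in F(v)$ ``has been added to $r^\Jmc$ in this step'', i.e.\ it silently excludes that a later rule application creates such an edge inside $\Delta_v$ --- precisely the ``delicate point'' you isolate and settle via connectedness (item~\ref{it:tree3}) together with uniqueness of the node where $\delta$ is $r$-fresh. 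Two caveats: that uniqueness is itself a consequence of canonicity (asserted just before Definition~\ref{def:candec}), so you have not avoided canonicity, only used a different piece of it; and your parenthetical that the unique $r$-fresh node ``lies strictly above $v$'' compresses a small upward-chain argument that should be spelled out --- from $\delta\notin F_r(v)$ and $\bago(v)=r$ one gets that $\delta$ is not fresh in $v$ and that $\bago(v\cdot -1)=r$, and iterating upward through nodes labelled $r$ and containing $\delta$ one must reach a proper ancestor that is labelled $r$ in which $\delta$ is either fresh or has a predecessor not labelled $r$; that ancestor is $r$-fresh for $\delta$, so by uniqueness no $w'$ in the subtree rooted at $v$ can have $\delta\in F_r(w')$. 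With that step made explicit (and your correct observation that the case $v\cdot -1=\varepsilon$ is trivial, since then $F_r(v)=\Delta_v$), the proof is complete.
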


\begin{proof}
Let $\abf$ in $\bagz(w)$ be the $r$-cluster used in the  application of \ref{it:r3}, $w = v \cdot-1$,  and  $\delta' \in W$.  
We know that $v$ satisfies~\ref{it:can4} and that $\abf$ witnesses this.
Thus, $\abf$ is an $r$-root cluster in $\bagz(w)$ such that 
$\abf \subseteq F_r(w)$.
%
Since $(\delta, \delta')\in r^{\bagz(w)}$,  we have that $\delta \in \Delta_{v}$, which by~\ref{it:r3} means
that either $\delta \in \abf$, or  there is some $\hat \delta \in \abf$ such that $(\hat\delta, \delta) \in r^{\bagz(w)}$, and thus 
$\delta \not \in \abf \cup F(v)$. 

If $\delta \not \in \abf \cup F(v)$, then $(\delta,\delta') \in r^{\bagz(v)}$ implies that $\delta' \not \in F(v)$, 
by condition \ref{it:can4}(d), that is $\delta' \in \Delta_{w}$. Now, assume $\delta \in \abf$. 
In that case, since  $\tau(\delta') \in
\mn{Wit}_{\Imc,r}(\tau(\delta))$ we know that 
$\delta'$ is not one of the fresh elements of the form $f_v$ added by~\ref{it:r3}. 
Therefore,  $\delta' \not \in F(v)$ and $\delta' \in \Delta_w$.  
\end{proof}
\begin{claim}\label{claim:path}
Let $\pi = \delta_0, w_0,\dots, w_{k-1}, \delta_k$ be a downward
$r$-path with $\tau(\delta_k) \in
\mn{Wit}_{\Imc_r}(\tau(\delta_0))$.
Then, for every  $0\leq  i \leq  k-1 $, we have 
$\delta' \in \Delta_{w_i}$, and $(\delta_i, \delta') \in r^{\bagz(w_i)}$.
\end{claim}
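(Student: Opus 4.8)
The plan is to prove the two conclusions ($\delta_k\in\Delta_{w_i}$ and $(\delta_i,\delta_k)\in r^{\bagz(w_i)}$, where $\delta_k$ is the endpoint $\delta'$ of the path) by backward induction on $i$, running from $i=k-1$ down to $i=0$, so that we "lift" the deepest element $\delta_k$ up through every bag of the downward path. The base case $i=k-1$ is immediate: by the definition of an $r$-path we have $(\delta_{k-1},\delta_k)\in r^{\bagz(w_{k-1})}$, and this single edge already witnesses both $\delta_k\in\Delta_{w_{k-1}}$ and the required relation.

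For the inductive step I assume $\delta_k\in\Delta_{w_{i+1}}$ and $(\delta_{i+1},\delta_k)\in r^{\bagz(w_{i+1})}$ and derive the same statements at $w_i$. The engine is Claim~\ref{claim:step}, applied at $v=w_{i+1}$ with $\delta=\delta_{i+1}$, and two of its hypotheses must be checked. First, $w_{i+1}$ must have been created by~\ref{it:r3}: since the path is downward, $w_{i+1}$ is a successor of $w_i$ with $\bago(w_{i+1})=r$ transitive and carrying an $r$-root cluster, and \ref{it:r1}/\ref{it:r2} never produce such a node (the former is for non-transitive roles, the latter makes its distinguished element $r$-fresh), so \ref{it:r3} is the only possibility. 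Second, $\delta_{i+1}\notin F_r(w_{i+1})$: downwardness places $\delta_{i+1}$ in the $r$-root cluster of $w_{i+1}$, which by Condition~\ref{it:can4} is inherited from the predecessor $w_i$ and contained in $\roots(w_i)=F_r(w_i)$; hence $\delta_{i+1}$ is already $r$-fresh in the ancestor $w_i$, and by the uniqueness of the world in which an element is $r$-fresh (a consequence of Condition~\ref{it:can3}) it cannot be $r$-fresh again in $w_{i+1}$. Claim~\ref{claim:step} then yields $W\subseteq\Delta_{w_i}$ for the set $W$ of all $\delta''\in\Delta_{w_{i+1}}$ with $(\delta_{i+1},\delta'')\in r^{\bagz(w_{i+1})}$ and $\tau(\delta'')\in\mn{Wit}_{\Imc,r}(\tau(\delta_{i+1}))$. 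Granting for a moment that $\tau(\delta_k)\in\mn{Wit}_{\Imc,r}(\tau(\delta_{i+1}))$, the induction hypothesis puts $\delta_k\in W$, so $\delta_k\in\Delta_{w_i}$. The edge then follows readily: $\delta_{i+1}\in\Delta_{w_i}$ since it is the target of $(\delta_i,\delta_{i+1})\in r^{\bagz(w_i)}$ given by the $r$-path, and as now also $\delta_k\in\Delta_{w_i}$, the relation $\Imf(w_i)=\Jmc|_{\Delta_{w_i}}$ (item~\ref{it:tree2}) lets us read off $(\delta_{i+1},\delta_k)\in r^{\bagz(w_i)}$; combining this with $(\delta_i,\delta_{i+1})\in r^{\bagz(w_i)}$ and transitivity of $r$ in the model $\Jmc$ gives $(\delta_i,\delta_k)\in r^{\bagz(w_i)}$.

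It remains to supply the witness membership $\tau(\delta_k)\in\mn{Wit}_{\Imc,r}(\tau(\delta_{i+1}))$, and this is the step I expect to be the main obstacle. I would establish it as a separate invariant, $\tau(\delta_k)\in\mn{Wit}_{\Imc,r}(\tau(\delta_i))$ for every $i$, seeded at $i=0$ by the hypothesis of the claim. The crux is that \ref{it:r3} only ever introduces fresh elements whose $\tau$-images lie \emph{outside} the witness set of the cluster it processes (the set $\Delta$ in~\ref{it:r3} is $\mn{Wit}_{\Imc,r}(e)\setminus\mn{Wit}_{\Imc,r}(\tau(\delta))$): hence an element whose image already belongs to $\mn{Wit}_{\Imc,r}(\tau(\delta_0))$ is never \emph{re-created} deeper along the path but is inherited, and since all elements of one $r$-cluster share the same witness set (Lemma~\ref{lem:wit2}), the membership persists as $i$ increases. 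Verifying precisely that the downward path is forced to stay inside this witness region --- rather than branching off to a genuinely new at-least successor, which $\tau(\delta_k)\in\mn{Wit}_{\Imc,r}(\tau(\delta_0))$ forbids --- is the delicate point, and is exactly where the coherence between the $\rightsquigarrow_{\Imc,r}$-structure and the unraveling rules has to be exploited.
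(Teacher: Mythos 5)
Your induction skeleton coincides with the paper's proof of this claim: backward induction starting at $i=k-1$ (where the base case is immediate from the definition of an $r$-path), identifying $w_{i+1}$ as a node created by \ref{it:r3}, verifying $\delta_{i+1}\notin F_r(w_{i+1})$, and invoking Claim~\ref{claim:step} at $w_{i+1}$ with $\delta=\delta_{i+1}$ to pull $\delta'=\delta_k$ up into $\Delta_{w_i}$. You also resolve the statement's typos correctly ($\delta'$ is the endpoint $\delta_k$, and the hypothesis is $\tau(\delta_k)\in\mn{Wit}_{\Imc,r}(\tau(\delta_0))$). Two local remarks: the cleanest reason $\delta_{i+1}\notin F_r(w_{i+1})$ and $w_{i+1}$ is not an \ref{it:r2}-node is that the $r$-edge $(\delta_i,\delta_{i+1})\in r^{\bagz(w_i)}$ forces $\bago(w_i)=r$ via \ref{it:can1}, so $\delta_{i+1}$ is neither fresh in $w_{i+1}$ nor does $\bago$ change at $w_{i+1}$ (your detour through \ref{it:can4}(a) and uniqueness of $r$-freshness works but is roundabout); and your explicit reconstruction of the edge $(\delta_i,\delta')\in r^{\bagz(w_i)}$ from $\Imf(w_i)=\Jmc|_{\Delta_{w_i}}$ and transitivity of $r^\Jmc$ is correct and in fact more detailed than the paper, which leaves that half of the conclusion implicit.

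There is, however, a genuine gap at exactly the step you flag and defer: you never establish $\tau(\delta')\in\mn{Wit}_{\Imc,r}(\tau(\delta_{i+1}))$, which is a hypothesis of Claim~\ref{claim:step}, so your induction does not close. Your sketched route --- an invariant maintained through the unraveling, based on \ref{it:r3} introducing only fresh elements with $\tau$-images outside $\mn{Wit}_{\Imc,r}(\tau(\delta))$, plus the unproven assertion that the downward path ``stays inside the witness region'' --- is a syntactic argument about the construction rules that you do not carry out. The paper closes this step differently and much more cheaply, working semantically in $\Imc$ rather than in the unraveling: the path prefix $\delta_0,w_0,\ldots,\delta_{i+1}$ together with transitivity gives $(\delta_0,\delta_{i+1})\in r^\Jmc$, and the induction hypothesis gives $(\delta_{i+1},\delta')\in r^\Jmc$; pushing these through the homomorphism $\tau$ yields $(\tau(\delta_0),\tau(\delta_{i+1}))\in r^\Imc$ and $(\tau(\delta_{i+1}),\tau(\delta'))\in r^\Imc$, and from $\tau(\delta')\in\mn{Wit}_{\Imc,r}(\tau(\delta_0))$ one then concludes $\tau(\delta')\in\mn{Wit}_{\Imc,r}(\tau(\delta_{i+1}))$ --- the point being that, for a transitive $r$, at-most restrictions propagate along $r$-edges (the $r$-successors of $\tau(\delta_{i+1})$ form a subset of those of $\tau(\delta_0)$), so the relevant $\rightsquigarrow_{\Imc,r}$-chains restart at $\tau(\delta_{i+1})$. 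No case analysis over \ref{it:r1}--\ref{it:r3} is needed for this transfer, and the same observation already handles your base case bookkeeping. So the ``delicate point'' you anticipated is real, but it has a short model-theoretic resolution in $\Imc$; without supplying it (or your invariant in full), the proposal is incomplete.
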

\begin{proof}
We show this using an inductive argument. 
This holds by assumption for  $i = k-1$ and the definition of $r$-path, and since $(\delta_0, \delta_{k-1}) \in r^\Jmc$ and 
$\tau(\delta') \in \mn{Wit}_{\Imc, r}(\tau(\delta_0))$ implies $\tau(\delta') \in \mn{Wit}_{\Imc, r}(\tau(\delta_{k-1}))$.  
For the inductive step, 
assume that this holds for $0 <  i+1 \leq k-1$.  Then $w_{i+1}$ was created by an application of \ref{it:r3}. Since
$(\delta_i, \delta_{i+1}) \in r^{\bagz{w_i}}$, then $w_{i+1} \notin F_r(w_{i+1})$. Then, as $(\delta_0, \delta_{i+1}) \in r^\Jmc$ 
and $\tau(\delta') \in \mn{Wit}_{\Imc, r}(\tau(\delta_0))$ we can conclude $\tau(\delta') \in \mn{Wit}_{\Imc, r}(\tau(\delta_{k-1}))$.
Thus by Claim~\ref{claim:step}
$\delta' \in \delta_{i}$. 
\end{proof}

We now do a final case distinction according to which
case~\ref{it:p1} or~\ref{it:p2} applies to $\pi$. 

\begin{itemize}
\item Assume that $\pi$ satisfies \ref{it:p1}. 
From Claim~\ref{claim:path} we can conclude that $\delta' \in \Delta_0$. By~\ref{it:p1}, we have that either $w_0 = w$, and then $\delta' \in \Delta_w$ as required; 
or $w = w \cdot -1$, and then it must be the case that $w_0$ was created by an application or rule~\ref{it:r3}. 
Thus the statement follows from Claim~\ref{claim:path}. 

\item Assume that $\pi$ satisfies \ref{it:p2}. The statement clearly holds if $k=0$. For $k > 1$, 
using  Claim~\ref{claim:path} we can conclude $\delta' \in \Delta_{w_1}$. 
We know from~\ref{it:p2} that $\delta_1 \in F_r(w_1 \cdot -1)$ and thus $\bago(w_1 \cdot -1) = r$. 
Since $(\delta_1, \delta_2)\in r^{\bagz(w_1)}$, by \ref{it:can1}, we also have that $\bago(w_1)=r$, which means 
that $w_1$ was created by rule~\ref{it:r3} and, by Claim~\ref{claim:step}, we get $\delta' \in w_1\cdot-1$. 

By~\ref{it:p2}, we know that $w_1\cdot -1$ is an ancestor of $w_0$. Let  
$v_0, \dots, v_n=w_0$ be the path   
from $w_1 \cdot -1 = v_0 \cdot -1$ to $w_0$, for $0 \leq n $. From the construction of $\Jmc$, we know that every 
node $v_i$ was added  by an application of either~\ref{it:r2} or~\ref{it:r3}.  
We claim that every $v_i$ was created by \ref{it:r3}.
Indeed, we have by definition of tree decomposition (item~\ref{it:tree3}) that 
$\delta_1 \in \Delta_{v_i}$ for every $0\leq i \leq n$, since $\delta_1 \in \Delta_{w_0}$.  
For $n=0$, the claim follows since $\Delta_{v_0} \cap \Delta_{v_0 \cdot -1}= \delta_1$,  $\bago(v_0)= r$, 
and $(\delta_1,  \delta') \in r^{\bagz(v_0 \cdot -1)}$. This means that $v_0$ was not created by~\ref{it:r2}.
Now assume $v_n$ was created by \ref{it:r3} for $0 \leq n$.
The claim follows for $v_{n+1}$ since  $\Delta_{v_{n}} \cap \Delta_{v_{n+1}}= \delta_1$ and $\delta_1 \notin F(v_n)$ 
because $\delta_1 \in \Delta_{v_0 \cdot -1}$.  Hence, $v_{n+1}$ was not created by~\ref{it:r2}.

Finally, we show that $\delta' \in \Delta_{v_i}$ for every $0\leq i \leq n$. From 
$\delta_1 \in \Delta_{v_i \cdot -1}\cap \Delta_{v_i}$ 
and $(\delta, \delta') \in r^{\bagz(v_i \cdot -1)}$ we get  
$\delta' \in \Delta_{v_1}$. Further, as $v_i$ was introduced via an application of \ref{it:r3} to some $r$-cluster $\abf$ in $\bagz(w)$,  
we also have that $v_i$ satisfies~\ref{it:can4} and that $\abf$ witnesses this.
Then, $\delta_1 \notin F(v)$ implies that either $\delta_1 \in \abf$, or  there is some $\hat \delta \in \abf$ 
such that $(\hat\delta, \delta_1) \in r^{\bagz(v_i \cdot -1)}$. In either case, by~\ref{it:can4}(c), we can conclude that
$d' \in \Delta_{v_i}$. Therefore, the previous inductive argument together with the fact that 
$d_1 \in \Delta_{v_0 \cdot -1}$ and $(\delta_1,\delta') \in r^{\bagz{v_0\cdot -1}}$ imply that $\delta' \in \Delta_{v_i}$ for every $i$. 
This in particular implies $\delta' \in \Delta_{w_0}$. 
\end{itemize}

\end{proof}

\section{Proof of Theorem~\ref{thm:main}}

\noindent\textbf{Theorem~\ref{thm:main}.} \textit{
PRPQ entailment over \SQ-knowledge bases is \TwoExpTime-complete.}

\begin{proof}
  The lower bound is inherited from \TwoExpTime-hardness of positive
  existential query entailment in \ALC.

  For the upper bound, we first show correctness of the given procedure,
  that is, we show that $\Kmc\models\varphi$ iff
  $L(\Amf_{\mn{can}}\wedge\Amf_\Kmc\wedge\neg \Amf_{\vp})\neq
  \emptyset$.
  
  $(\Leftarrow)$ Assume that $\Kmc\not\models\varphi$. By
  Theorem~\ref{thm:sq-canmod}, we know that there is a model
  $\Jmc$ of $\Kmc$ and a canonical tree decomposition
  $(T,\bagz,\bago)$ of $\Jmc$
  satisfying Conditions~\textit{(i)}--\textit{(iii)}.

  The key observation is that, by the size $2K$ of $\Delta$, it is
  possible to select a mapping $\pi:\Delta^\Jmc\to \Delta$
  such that for each $w\in T\setminus\{\varepsilon\}$ and each $d\in
  \Delta_w\setminus\Delta_{w\cdot -1}$, we have $\pi(d)\notin
  \{\pi(e)\mid e\in \Delta_{w\cdot -1}\}$.  Define a $\Sigma$-labeled
  tree $(T,\tau)$ by setting, for all $w\in T$, $\Imf_w$ to the image
  of $\bagz(w)$ under $\pi$ and $r_w$ to $\bago(w)$. Clearly,
  $(T,\tau)$ is consistent, and $\Imc_{(T,\tau)}$ is isomorphic to
  $\Jmc$. It is not hard to see that $(T,\tau)\in L(\Amf_\mn{can})$.
  Now, by Lemma~\ref{lem:kba-correct} and $\Jmc\models\Kmc$, we have $(T,\tau)\in
  L(\Amf_\Kmc)$, and, by Lemma~\ref{lem:query-automaton} and $\Jmc\not\models\vp$, we
  have $(T,\tau)\notin L(\Amf_{\varphi})$. Thus
  $L(\Amf_{\mn{can}}\wedge \Amf_{\Kmc}\wedge\neg \Amf_{\varphi})$ is
  not empty. 

  For the direction $(\Rightarrow)$, let $(T,\tau)\in
  L(\Amf_{\mn{can}}\wedge \Amf_{\Kmc}\wedge\neg \Amf_{\varphi})$.
  Since $(T,\tau)\in L(\Amf_{\mn{can}})$, we know that $(T,\tau)$ is
  consistent, that the represented $(T,\bagz,\bago)$ is a canonical
  decomposition of $\Imc_{(T,\tau)}$. It remains to note that, by
  Lemmas~\ref{lem:kba-correct} and~\ref{lem:query-automaton}, we have
  that $\Imc_{(T,\tau)}\models\Kmc$ and
  $\Imc_{(T,\tau)}\not\models\vp$, respectively.
  
  The \TwoExpTime-upper bound follows now from the following facts.
  By Lemma~\ref{lem:kba-correct} and~\ref{lem:query-automaton}, the
  construction of the respective automata can be done in (worst case)
  double exponential time; moreover, the automata have exponentially
  many states. Since intersection and complement of 2ATAs can be done
  in polynomial time, we know that $\Amf_\mn{can}\wedge
  \Amf_{\Kmc}\wedge\neg \Amf_\vp$ has exponentially many states, and
  can be constructed in exponential time. It remains to note
  that emptiness of that automaton can be checked in double
  exponential time.
\end{proof}

\section{The Automaton $\Amf_\mn{can}$}\label{sec:autcan}

We refrain from giving the automaton explicitely, but rather describe
its functioning. Let $(T,\tau)$ be a $\Sigma$-labeled tree.  Consistency of $(T,\tau)$
can be checked by verifying that:
\begin{itemize}

  \item $r_w$ is a role name from \Kmc for every $w\neq
    \emptyset$, and $r_\varepsilon=\bot$, and 

  \item for every $w\in T$, every successor $v$ of $w$, and any two
    elements $d,e\in \Delta^{\Imc_w}\cap \Delta^{\Imc_v}$, we have
    that $d\in A^{\Imc_w}$ iff $e\in A^{\Imc_v}$, for all concept
    names $A$ appearing in \Kmc, and $(d,e)\in r^{\Imc_w}$ iff
    $(d,e)\in r^{\Imc_v}$, for all role names $r$ appearing in $\Kmc$.

\end{itemize}
Both can be easily done with a 2ATA.

For verifying that the encoded structure $\Imc_{(T,\tau)}$ is
canonical, we formulate the following
variants~\ref{it:can1'}--\ref{it:can4'} which talk about $(T,\tau)$.
We call $(T, \tau)$ \emph{canonical} iff for every $w \in T$ with
$\tau(w) = (\Imc_w, r)$ and every successor $v$ of $w$ with
$\tau(v)=(\Imc_v, s)$, the following conditions are satisfied:
\begin{enumerate}[label=(C$_\arabic*'$),leftmargin=*,align=left]

  \item\label{it:can1'} if $(d,e)\in s_1^{\Imc_v}$, then $s_1=s$, or
    $d=e$ and $s_1\!\in\!\mn{N}_\mn{R}^{t}$;

  \item\label{it:can2'} if $s\in \mn{N}_\mn{R}^{nt}$, then
    $\Delta^{\Imc_v}=\{d,e\}$, for some $d\in F(w)$, $e\in F(v)$, and
    $s^{\Imc_v}=\{(d,e)\}$;

  \item \label{it:can3'} if $s\in\mn{N}_\mn{R}^t$ and $r\notin\{\bot,
    s\}$, then there are $d\in F(w)$ and an $r$-root cluster \abf in
    $\Imc_v$ such that $\Delta^{\Imc_w} \cap \Delta^{\Imc_v}=\{d\}$
    and $d\in\abf$; moreover, there is no successor $v'\neq v$ of $w$
    satisfying this for $d$ and $r_v' = s$;
    
  \item \label{it:can4'}if $s\in\mn{N}_\mn{R}^{t}$ and
    $r\in\{\bot,s\}$, then there is an $s$-root cluster $\abf$ in
    $\Imc_v$ with: 

    \begin{enumerate}

      \item $\abf\subseteq F_{s}(w)$;

      \item \abf is an $s$-cluster in $\Imc_w$;

      \item for all $d\in \abf$ and $(d,e)\in s^{\Imc_w}$, we have
	$e\in\Delta^{\Imc_v}$; 

      \item for all $(d,e)\in s^{\Imc_v}$, $d\in \abf\cup F(v)$ or
	$e\notin F(v)$.

    \end{enumerate}

\end{enumerate}
It is not difficult to verify that $(T,\tau)$ is canonical iff the
represented extended tree decomposition $(T,\bagz,\bago)$ is
canonical.  Moreover, Conditions~\ref{it:can1'}--\ref{it:can4'} can be
implemented in a 2ATA in a straightforward way.

\section{Proof of Lemma~\ref{lem:can-paths-encoding}}

Note that Lemma~\ref{lem:can-paths-encoding} from the main part is
just the reformulation of Lemma~\ref{lem:can-paths} adapted to the
encoding. We refer the reader to Appendix~\ref{app:thm1} for a proof of
that Lemma.

\section{Knowledge Base Automaton $\Amf_\Kmc$}


  %
\begin{lemma}
    There is a 2ATA  ${\Amf_\Kmc}$ such that for every $(T,\tau)\in L(\Amf_{\mn{can}})$, we have that $(T,\tau)\in L(\Amf_\Kmc)$  iff
  $\Imc_{T,\tau}\!\models\!\Kmc$. 
   It  can be constructed in time double exponential in $|\Kmc|$, and has
  exponentially many states in $|\Kmc|$.
  \end{lemma}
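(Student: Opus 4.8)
The plan is to realise $\Amf_\Kmc$ as the intersection $\Amf_\Amc\wedge\Amf_\Tmc$ of two automata, one enforcing the ABox and one the TBox, and then to invoke Lemma~\ref{lem:kba-correct} for the TBox component. Since 2ATAs are closed under intersection in polynomial time, and by Lemma~\ref{lem:kba-correct} the automaton $\Amf_\Tmc$ already accepts exactly the canonical inputs with $\Imc_{(T,\tau)}\models\Tmc$, has exponentially many states, and is constructible in double exponential time, everything reduces to designing a small automaton $\Amf_\Amc$ that accepts a consistent input iff $\Imc_{(T,\tau)}\models\Amc$. Once both are available, the equivalence ``$(T,\tau)\in L(\Amf_\Kmc)$ iff $\Imc_{(T,\tau)}\models\Tmc$ and $\Imc_{(T,\tau)}\models\Amc$ iff $\Imc_{(T,\tau)}\models\Kmc$'' is immediate from the semantics of intersection, for every $(T,\tau)\in L(\Amf_\mn{can})$.

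For $\Amf_\Amc$ I would use the convention that every individual $a\in\mn{ind}(\Amc)$ is represented by $[\varepsilon]_a$, so that all ABox atoms speak about elements already present in the root bag. First I would, from the initial state at $\varepsilon$, verify each concept assertion $A(a)$ by testing $a\in A^{\Imc_\varepsilon}$ directly in the root label $\tau(\varepsilon)=(\Imc_\varepsilon,\bot)$; by the definition of the represented interpretation, $[\varepsilon]_a\in A^{\Imc_{(T,\tau)}}$ iff $a\in A^{\Imc_\varepsilon}$, so this is faithful. Role assertions $r(a,b)$ are the only delicate point: checking $(a,b)\in r^{\Imc_\varepsilon}$ in the root label suffices whenever the ABox is materialised at the root, which holds for the witnessing decomposition by Point~(ii) of Theorem~\ref{thm:sq-canmod}; to obtain the stated equivalence for \emph{arbitrary} canonical inputs, I would instead decide membership $([\varepsilon]_a,[\varepsilon]_b)\in r^{\Imc_{(T,\tau)}}$ by reusing the navigation mechanism of $\Amf_\Tmc$ for transitive roles, i.e.\ by searching via Lemma~\ref{lem:can-paths-encoding} for an $r$-path of type~\ref{it:p1} or~\ref{it:p2} from $[\varepsilon]_a$ to $[\varepsilon]_b$ (for non-transitive $r$ a single root- or successor-edge check is enough). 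In either design $\Amf_\Amc$ has only polynomially many states in $|\Amc|$ and is constructible in polynomial time, as it needs one branch per assertion plus the reused path-search states.

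The correctness of $\Amf_\Amc$ then follows from the representation: in the materialised case the forward direction ($\Imc_\varepsilon\models\Amc\Rightarrow\Imc_{(T,\tau)}\models\Amc$) holds because root edges and concept memberships are inherited by the full interpretation, while the reuse of the $\Amf_\Tmc$-navigation yields the converse by the soundness and completeness of $r$-path search established in Lemma~\ref{lem:can-paths-encoding}. For the complexity of $\Amf_\Kmc$, note that $\Amf_\Amc$ contributes polynomially many states, $\Amf_\Tmc$ exponentially many (Lemma~\ref{lem:kba-correct}), and intersection is polynomial, so $\Amf_\Kmc$ has exponentially many states in $|\Kmc|$ and its construction time is dominated by that of $\Amf_\Tmc$, i.e.\ double exponential. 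I expect the main obstacle to be not the ABox but the correctness and succinctness of $\Amf_\Tmc$ already encapsulated in Lemma~\ref{lem:kba-correct}: counting $r$-successors over transitive roles without double counting and with only exponentially many states hinges on the canonical-path characterisation of Lemma~\ref{lem:can-paths-encoding} and on the uniqueness of the $r$-fresh node $w^*$ with $d\in\roots(w^*)$. Since that lemma may be assumed here, the only remaining care is precisely the treatment of transitive ABox role assertions described above.
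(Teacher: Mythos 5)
Your proposal is correct and follows the paper's architecture exactly: $\Amf_\Kmc$ is the intersection $\Amf_\Amc\wedge\Amf_\Tmc$, the TBox component is delegated to Lemma~\ref{lem:kba-correct} (which the paper likewise treats as the self-contained core of this proof), and your complexity accounting (polynomial-time intersection, state count dominated by the exponentially many states of $\Amf_\Tmc$, double exponential construction time) matches the paper's. The one point where you genuinely deviate is the handling of role assertions: the paper's $\Amf_\Amc$ is a \emph{single-state} automaton that merely tests $\Imc_\varepsilon\models\Amc$ in the root label, justified by the remark that, by Theorem~\ref{thm:sq-canmod}(ii), the ABox may be assumed materialised at the root. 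You correctly observe that for an \emph{arbitrary} canonical input this root-only check yields only the direction ``accepted $\Rightarrow\Imc_{(T,\tau)}\models\Amc$'', since an assertion $r(a,b)$ with $r$ transitive could be realised in $r^{\Imc_{(T,\tau)}}=\chi_r^+$ solely via a path through non-root bags, and you patch this by reusing the $r$-path search of Lemma~\ref{lem:can-paths-encoding} from $[\varepsilon]_a$ to $[\varepsilon]_b$. Your patch makes the literal ``iff'' of the lemma hold on all of $L(\Amf_{\mn{can}})$, whereas the paper's simpler automaton relies on the implicit convention that only root-materialised witness trees matter; both versions are sound for Theorem~\ref{thm:main}, because its completeness direction only ever invokes the witness decompositions supplied by Theorem~\ref{thm:sq-canmod}, for which the root check suffices. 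In short: same decomposition and same key lemma, with a defensible refinement on the ABox side rather than a gap.
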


\begin{proof}
  %
 $\Amf_\Kmc$ is the intersection of two automata $\Amf_\Amc$
and $\Amf_\Tmc$ verifying that the input satisfies the ABox and the
TBox, respectively. For devising $\Amf_\Amc$, recall that
  $\mn{ind}(\Amc)\subseteq \Delta$ and we identify, in the translation to
  $(T,\tau)$, $[\varepsilon]_a$ with $a$, for each
  $a\in\mn{ind}(\Amc)$, to reflect the SNA. 
Moreover, recall that, by 
Theorem~\ref{thm:sq-canmod}, we can assume that the ABox is actually
satisfied in the root. Thus, $\Amf_\Amc$ only needs to check whether \Amc
is satisfied  at the root. $\Amf_\Amc$ is a single state automaton
$\Amf_\Amc=(\{q_0\},\Sigma,q_0,\delta,F)$ with 
\begin{align*}
  \delta(q_0,\bullet) & = \mn{false} & \\
  \delta(q_0,(\Imc,x))& = \text{if $\Imc\models\Amc$, then
    $\mn{true}$ else $\mn{false}$} 
\end{align*}

The next lemma concentrates on the construction of   $\Amf_\Tmc$.

\medskip \noindent {\bf Lemma~\ref{lem:kba-correct}} {\it
  For every $(T,\tau)\in L(\Amf_{\mn{can}})$, we have $(T,\tau)\in
  L(\Amf_\Tmc)$ iff $\Imc_{(T,\tau)}\models \Tmc$. It
  can be constructed in time double exponential in $|\Kmc|$, and has
  exponentially many states in $|\Kmc|$.
  }

\begin{proof}
For $\Amf_\Tmc$, we give the missing transitions and states (see Section~\ref{sec:KBA}).
First, we show how the announced ``navigation'' works. First, for
non-transitive roles, we use the following transitions to navigate the
automaton to the unique world where $d$ is fresh, for all states
$(\sim n\,r\,D)(d)$:
\begin{align*}
  \delta((\sim n\,r\,D)(d), (\Imc,x)) & = \mn{false}
  \hspace{2.6cm}\text{if $d\notin\Delta^\Imc$} \\
  \delta((\sim n\,r\,D)(d), (\Imc,x)) & = \big( (0,F_{d}) \wedge
  (0,q^*_{(\sim n\,r\,D),d})\big) \vee{} \\ &\quad\quad (-1,(\sim
  n\,r\,D)(d))\quad\text{if $d\in\Delta^\Imc$}\\
  \delta(F_d,(\Imc,x)) & = \begin{cases} \mn{true} & \text{if
    $x=\varepsilon$} \\ (-1,\overline F_d) & \text{otherwise}
  \end{cases}\\
  \delta(\overline F_d,(\Imc,x)) & = \begin{cases} \mn{true} &
    \text{if $d\notin\Delta^\Imc$} \\ \mn{false} & \text{otherwise}
  \end{cases}
\end{align*}
For transitive roles, we use the following transitions to navigate the
automaton to the unique world where $d$ is $r$-fresh, for all states
$(\sim n\,r\,D)(d)$:
\begin{align*}
  %
  %
   \delta((\sim n\,r\,D)(d), (\Imc,x)) & = \mn{false}
  \hspace{2.6cm}\text{if $d\notin\Delta^\Imc$} \\
  \delta((\sim n\,r\,D)(d), (\Imc,x)) & = \big( (0,F_{r,d}) \wedge
  (0,q^*_{(\sim n\,r\,D),d})\big) \vee{} \\ &\quad
  \bigvee_{i\in [k]}(i,(\sim
  n\,r\,D)(d))\quad\text{if $d\in\Delta^\Imc$},
  %
  %
\end{align*}
where the transitions for $F_{r,d}$ are given in the main part. 
%

\subsection*{Counting for non-transitive Roles}

The automaton implements the strategy suggested by
Lemma~\ref{lem:can-paths-encoding} via the following transitions. We
first concentrate on at-least restrictions, so let us fix a state
$q_{\qnrgeq{n}{r}{D},d}^*$. Recall that $k$ is the bound on the
outdegree, and let $N_{m,k}$ be the set of $m$-element subsets of
$[1,k]$. The automaton then has the following transitions for symbols
$(\Imc,s)\in\Sigma$ with $s\neq \bot$, that is, for non-root worlds:
\begin{align*}
  \delta(q_{\qnrgeq{n}{r}{D},d}^*,(\Imc,s)) & = \bigvee_{X\in N_{n,k}}
  \bigwedge_{i\in X} (i,q_{d,r,D}) \\
  \delta(q_{d,r,D},(\Imc,x)) & =\begin{cases} D(e) & \text{if $x=r$
    and $(d,e)\in r^\Imc$} \\ 
    \mn{false} &\text{ otherwise}  \end{cases}
\end{align*}
For symbols $(\Imc,\bot)\in\Sigma$, we have to additionally take
successors in \Imc into account, which is implemented as follows. Let
$S_{\Imc,r}(d)$ denote the set of all $e$ with $(d,e)\in r^\Imc$. We then
define the transition for $\delta(q_{\qnrgeq{n}{r}{D},d}^*,(\Imc,\bot))$ as 
\begin{align*}
  %
  \bigvee_{S\subseteq
    S_{\Imc,r}(d)} \Big(\bigwedge_{e\in S} (0,D(e))\wedge \bigvee_{X\in
      N_{n-|S|,k}} \bigwedge_{i\in X} (i,q_{d,r,D})\Big)
\end{align*}
For states corresponding to at-most restrictions, $q_{\qnrleq{n}{r}{D},d}^*$, we include the complementary transitions, that is, for
$(\Imc,s)\in\Sigma$ with $s\neq \bot$:
\begin{align*}
  \delta(q_{\qnrleq{n}{r}{D},d}^*,(\Imc,s)) & = \bigwedge_{X\in N_{n,k}}
  \bigvee_{i\in X} (i,\overline q_{d,r,D}) \\
  \delta(\overline q_{d,r,D},(\Imc,x)) & =\begin{cases} ({\approx}D)(e) & \text{if $x=r$
    and $(d,e)\in r^\Imc$} \\ 
    \mn{true} &\text{ otherwise}  \end{cases}
\end{align*}
where ${\approx}D$ denotes the negation normal form of $\neg D$.
Moreover, we define the transition for $\delta(q_{\qnrleq{n}{r}{D},d}^*,(\Imc,\bot))$ as 
\begin{align*}
  %
  \bigwedge_{S\subseteq
    S_{\Imc,r}(d)} \Big(\bigvee_{e\in S} (0,{\approx}D(e))\vee \bigwedge_{X\in
      N_{n-|S|,k}} \bigvee_{i\in X} (i,\overline q_{d,r,D})\Big)
\end{align*}

\subsection*{At-most Restrictions (Transitive Roles)}

Finally, the following are the transitions for the at-most
restrictions (for transitive roles). The strategy there is to try to
find $n+1$ $r$-successors satisfying $D$ and accept if this fails,
thus ``complementing'' the strategy for the at-least restrictions. Let
$N$ be the set of all tuples $\nbf=(n_1,\ldots,n_\ell)$ such that
$\sum_{i}n_i=n+1$. Then, $\delta(q^*_{\qnrleq{n}{r}{D},d},(\Imc,x))$
is defined as follows:
$$ \bigwedge_{\nbf\in N}\bigwedge_{X\subseteq [1,\ell]}\bigvee_{i\in
X} (0,q^{\ref{it:p1}}_{\qnrleq{n_i}{r}{D},a_i})\wedge\bigwedge_{i\in
[1,\ell]\setminus X} (0,q^{\ref{it:p2}}_{\qnrleq{n_i}{r}{D},a_i})$$
\begin{align*}
  \delta(q^{\ref{it:p1}}_{\qnrleq{n}{r}{D}, d}, (\Imc,x)) & =
  (0,\overline{F}_{r,d})\vee (0, q_{\qnrleq{n}{r}{D},d}^\downarrow)
  \\[2mm]
  \delta(q^{\ref{it:p2}}_{\qnrleq{n}{r}{D}, d}, (\Imc,x)) & =
  (0,F_{r,d})\vee (-1,q_{\qnrleq{n}{r}{D},d}^\uparrow) \\
\end{align*} 
Further, $\delta(q^\downarrow_{\qnrleq{n}{r}{D},d},(\Imc,x))$ is
defined as (where $\mbf$ and $M$ are as in Section~\ref{sec:KBA}): 
$$\bigwedge_{\mbf \in M} (0,p^\mn{loc'}_{n_0,r,D,d}) \vee \bigvee_{i=1}^k
(i,p^\mn{succ}_{\qnrleq{n_i}{r}{D},d}).$$
For states of the form $p^\mn{loc'}_{n,r,D,d}$, the transition
function is defined as
$$\delta(p^\mn{loc'}_{n,r,D,d},(\Imc,x))=\bigwedge_{Y\subseteq
  Q_{\Imc,r}(d), |Y|=n} \bigvee_{e\in Y} \approx D(e).$$
  For states of the form $p^\mn{succ}_{\qnrleq{n}{r}{D}}$, the transition
function on input $\bullet$ is defined as 
\begin{align*} \delta(p_{(\leq n\,r\,D),d},\bullet) & = \begin{cases}
    \mn{true} & \text{if $n>0$} \\ \mn{false} & \text{otherwise}
  \end{cases} \ \end{align*}
On inputs of the form $(\Imc,x)$, we set
$\delta(p^\mn{succ}_{\qnrleq{n}{r}{D},d},(\Imc,x))=\mn{true}$ if
$x\neq r$ or $d$ is not in a root cluster; otherwise, we define
\begin{align*}
  \delta(p^\mn{succ}_{\qnrleq{n}{r}{D},d},(\Imc,x)) = \bigwedge_{\nbf
  \in N} \bigvee_{i=1}^{\ell} (0, \overline q_{\qnrleq{n_i}{r}{D},a_i})
\end{align*}
Finally, we set $\delta(\overline q_{\qnrleq{n}{r}{D},d},
(\Imc,x))=\mn{true}$ in case $d \not \in \Delta^\Imc$, and otherwise
$$\delta(\overline q_{\qnrleq{n}{r}{D},d},
(\Imc,x))(0,q^{\ref{it:p1}}_{\qnrleq{n}{r}{D},d})\wedge
(0,q^{\ref{it:p2}}_{\qnrleq{n}{r}{D},d}).$$

\bigskip 

It remains to define the acceptance condition $F$. We set
$F=G_1,G_2,G_3$ where $G_1=\emptyset$, $G_2$ contains all states
of the form $q^*_{\qnrsim{n}{r}{D},d}$ and 
$p^\mn{loc}_{n,r,D,d}$ with $n\geq 1$, and $G_3=Q$. Note that, as mentioned in Section~\ref{sec:KBA}, 
 the parity condition enforces that states
$q^\downarrow_{(\geq n\,r\,D),d}$ with $n\geq 1$ are not suspended
forever, that is, eventualities are finally satisfied.
\end{proof}

Finally, it is not hard to see that  the number of states of the
automaton $\Amf_\Kmc$ (in particular that of $\Amf_\Tmc$) is bounded
exponentially in $|\Kmc|$. Moreover, $\Amf_\Tmc$ can be constructed in
double exponential time in $|\Kmc|$ since the size of the alphabet and
the number of states are exponentially bounded in $|\Kmc|$. 

Having Lemma~\ref{lem:can-paths-encoding} above at hand, it is routine
to show the correctness of the constructed automaton. Indeed,
$\Amf_\Tmc$ basically implements the `strategy' provided by this lemma. 
\end{proof}

\section{Query Automaton $\Amf_\vp$}

We first prove the characterisation lemma.

\medskip\noindent\textbf{Lemma~\ref{lem:witness-sequence}.} {\it
A function $\pi:\xbf\cup
I_\varphi\to \Delta^{\Imc_{(T,\tau)}}$ with $\pi(a)=[\varepsilon]_a$, for every
$a\in I_\varphi$, is a match for $\varphi$ in $\Imc_{(T,\tau)}$ iff
there is a $q_i$ such that for every
$\Bmf(t,t')$ in $\hat q_i$, there is a \emph{witness sequence}
  %
  \begin{align*}
    (d_0,s_0),w_1,(d_1,s_1),w_2,\ldots,w_n,(d_n,s_n),
  \end{align*}
  where $(d_i,s_i)\in \Delta \times S_\Bmf$ and $w_i\in T$ and such
  that:
  \begin{enumerate}[label=\textit{(\alph*)},leftmargin=*,align=left]

    \item $s_0=s_{0\Bmf}$, $s_n\in F_\Bmf$,

    \item $\pi(t)=[w_1]_{d_0}$, $\pi(t')=[w_n]_{d_n}$, and

    \item for every $i\in[1,n]$, we have $d_{i-1},d_{i}\in
      \Delta^{\Imc_{w_i}}$, $w_{i}\in [w_{i-1}]_{d_{i-1}}$ if $i\neq
      1$, and $\Imc_{w_i}\models \Bmf_{s_{i-1},s_i}(d_{i-1},d_i)$.

  \end{enumerate}
    %
%
%
}

\begin{proof}
  $(\Rightarrow)$ Let $\pi$ be a match for $q$ in $\Imc_{(T,\tau)}$, and 
  let $\Bmf(t,t')\in \hat q$. We construct a sequence as required. 

  By definition of a match, we know that
  $\Imc_{(T,\tau)},\pi\models\Bmf(t,t')$, that is, there is a word $\nu_1
  \cdots \nu_n \in L(\Emc)$ and a sequence $[w_0]_{d_0}, \ldots,
  [w_n]_{d_n} \in
  \Delta^{\Imc_{(T,\tau)}}$ such that $[w_0]_{d_0} = \pi(t),
  [w_n]_{d_n} =\pi(t')$, and for all $i \in [1,n]$ we have that 
  \begin{enumerate}[label=\textit{(\roman*)},leftmargin=*,align=left]

    \item if $\nu_i = A?$, then $[w_{i-1}]_{d_{i-1}}=[w_i]_{d_i} \in
      A^{\Imc_{(T,\tau)}}$, 

    \item if $\nu_i = r$, then $([w_{i-1}]_{d_{i-1}}, [w_i]_{d_i})
      \in r^{\Imc_{(T,\tau)}}$, and

    \item if $\nu_i = r^-$, then $([w_{i}]_{d_{i}},
      [w_{i-1}]_{d_{i-1}}) \in r^{\Imc_{(T,\tau)}}$.

  \end{enumerate}
  Observe now that the replacement of $r$ and $r^-$ by $r\cdot r^*$ and
  $r^-\cdot (r^-)^*$, respectively, for all transitive roles together
  with the definition of encoding implies that we can assume
  without loss of generality that for all $i\in[1,n]$, we have:
  %
  \begin{enumerate}[label=\textit{(\roman*)'},leftmargin=*,align=left]

    \item if $\nu_i=A?$, then $d_{i-1}=d_i$, and $d_i\in A^{\Imc_{w_i}}$, 

    \item if $\nu_i=r$, then $(d_{i-1},d_i)\in r^{\Imc_{w_i}}$, and

    \item if $\nu_i=r^-$, then $(d_{i},d_{i-1})\in r^{\Imc_{w_i}}$.

  \end{enumerate}
  Moreover, there is a sequence of states $s_0,\ldots,s_n\in Q_\Bmf$
  such that $s_0=s_{0\Bmf}$, $s_n\in F_\Bmf$ and
  $(s_i,\nu_i,s_{i+1})\in\Delta_\Bmf$, for all $i\in [0,n-1]$.  Thus,
  the sequence
  $(d_0,s_0),w_1,\ldots,(d_{n-1},s_{n-1}),w_n,(d_n,s_n)$ satisfies
  Items~\textit{(a)}--\textit{(c)} of the Lemma. 
  
%
%
%
  
  \smallskip $(\Leftarrow)$ Assume that the sequences exist for every
  $\Bmf(t,t')\in \hat q$. We show that $\pi$ is a match. Let
  $(d_0,s_0),w_1,\ldots,w_n,(d_n,s_n)$ be the
  sequence for some $\Bmf(t,t')\in\hat q$. By Item~\textit{(c)}, we obtain 
  \begin{align*}
    \Imf(w_1) \models \Bmf_{s_0,s_1}(d_0,d_1), \ldots, \Imf(w_n)
    \models \Bmf_{s_{n-1},s_{n}}(d_{n-1},d_n).
  \end{align*}
  Since $(T,\Imf,\bago)$ is a tree decomposition of $\Imc$, we also
  have
  $$\Imc\models \Bmf_{s_0,s_1}(d_0,d_1),
  \ldots,\Imc\models\Bmf_{s_{n-1},s_n}(d_{n-1},d_n).$$
  This implies $\Imc\models \Bmf_{s_0,s_n}(d_0,d_n)$ and, by
  Item~\textit{(a)}, $\Imc\models \Bmf(d_0,d_n)$. Finally, using
  Item~\textit{(b)}, we obtain $\Imc,\pi\models\Bmf(t,t')$.
\end{proof}

The following lemma provides a crucial observation underlying the
design (and correctness) of the automaton.

\begin{lemma} \label{lem:correct-qa}
  Let $(d_0,s_0),w_1,\ldots,w_n,(d_n,s_n)$ be a witness sequence
  satisfying~(a)--(c) from Lemma~\ref{lem:witness-sequence}, and $i<j$. If
  $[w_i]_{d_i}\cap [w_{j}]_{d_j}\neq \emptyset$, then either $j=i+1$
  or there is an $i<m<j$ such that $[w_m]_{d_m}\cap [w_i]_{d_i}\cap
  [w_{j}]_{d_j}\neq \emptyset$.
\end{lemma}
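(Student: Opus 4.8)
The plan is to set aside the automata and the query semantics entirely and read the statement as a purely combinatorial fact about connected subtrees of the tree $T$. For each index $k$ I would write $R_k:=[w_k]_{d_k}$; by the definition of $[\cdot]_\cdot$ (a node $v$ is $d$-connected to $w_k$ iff $d_k$ occurs in every bag on the path from $v$ to $w_k$), each $R_k$ is a \emph{connected subtree} of $T$ containing $w_k$. The first thing to extract from conditions (a)--(c) of Lemma~\ref{lem:witness-sequence} is that \emph{consecutive} regions overlap: condition (c) gives $d_{k-1},d_k\in\Delta^{\Imc_{w_k}}$ and $w_k\in[w_{k-1}]_{d_{k-1}}$, so $w_k\in R_{k-1}\cap R_k$. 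Hence the tree-path $\pi_k$ from $w_k$ to $w_{k+1}$ lies inside the connected set $R_k$, and concatenating $\pi_i,\pi_{i+1},\dots,\pi_{j-1}$ yields a walk $W$ in $T$ from $w_i$ to $w_j$ whose $\pi_k$-part is contained in $R_k$.

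Next I would locate the common part on the geodesic. Let $P$ be the tree-path from $w_i$ to $w_j$ and $S:=R_i\cap R_j$, which is nonempty by hypothesis and convex, being an intersection of subtrees. A gate argument shows $S$ meets $P$: for any $z\in S$, its projection $z'$ onto $P$ lies on both $\mn{path}(z,w_i)\subseteq R_i$ and $\mn{path}(z,w_j)\subseteq R_j$, so $z'\in S\cap P$. Since $R_i$ and $R_j$ are convex and contain the endpoints $w_i$ and $w_j$ of $P$, their traces on $P$ are an initial segment $R_i\cap P=[w_i,b]$ and a terminal segment $R_j\cap P=[a,w_j]$; consequently $S\cap P=[a,b]$, where $b$ is the last node of $P$ still in $R_i$, and in particular $b\in S$.

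The decisive step is an edge-cut argument. If $b=w_j$, then $w_j\in R_i$, hence $w_j\in S$, and since $w_j\in R_{j-1}$ by consecutive overlap with $i<j-1<j$ (recall $j\neq i+1$), we are done with $m=j-1$. Otherwise let $e$ be the edge of $P$ just after $b$ towards $w_j$; its far endpoint $b^+$ lies in $R_j\cap P$ but not in $R_i$, as $b$ is the last $P$-node of $R_i$. Deleting $e$ separates $w_i$ from $w_j$, so the walk $W$ must traverse $e$, and a single segment $\pi_k$ does so; therefore $\mn{path}(w_k,w_{k+1})$, and with it $R_k$, contains both endpoints $b$ and $b^+$ of $e$. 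Because $b^+\notin R_i$ we get $k\neq i$, i.e.\ $i<k\le j-1<j$, and $b\in R_k\cap S$, which exhibits the required intermediate region $m=k$.

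I expect the main obstacle to be the temptation to prove the stronger (and false) statement that a \emph{fixed} common point $z\in S$ must land in an intermediate region: small path examples show the witnessing point generally has to be chosen on $P$ and may differ from the originally picked $z$. Once the problem is reduced to the geodesic $P$ and one observes that crossing the single separating edge $e$ happens inside one segment $R_k$, the conclusion is immediate. The only routine verifications that remain are that each $[w_k]_{d_k}$ is indeed a connected subtree and that conditions (a)--(c) really supply the overlaps $w_k\in R_{k-1}\cap R_k$ used throughout; note that canonicity is not needed, only that $(T,\tau)$ represents a genuine tree decomposition.
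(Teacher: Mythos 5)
Your proof is correct, but it takes a different route than the paper, which disposes of the lemma with a three-set Helly argument. The paper sets $W_i=[w_i]_{d_i}$, $W_j=[w_j]_{d_j}$ and $V=\bigcup_{i<k<j}[w_k]_{d_k}$, observes (exactly as you do, from condition \textit{(c)}) that all three are connected subsets of $T$ --- $V$ because consecutive regions share the nodes $w_{k+1}$ --- and that they intersect pairwise ($w_{i+1}\in W_i\cap V$, $w_j\in W_j\cap V$, and $W_i\cap W_j\neq\emptyset$ by hypothesis); it then invokes the Helly property of connected subsets of a tree to conclude $W_i\cap W_j\cap V\neq\emptyset$, and any node of this triple intersection lies in some $[w_m]_{d_m}$ with $i<m<j$. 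Your geodesic/gate/edge-cut argument is in effect a from-scratch proof of precisely this Helly instance: the concatenated walk through $\pi_i,\dots,\pi_{j-1}$ plays the role of the connected union $V$, and your boundary node $b$ on the geodesic $P$ (together with the crossing segment $\pi_k$, or the index $m=j-1$ in the case $b=w_j$) exhibits concretely the common node that Helly produces abstractly. What your version buys is explicitness: it identifies a specific witnessing node and a specific intermediate index, and it makes visible that only convexity of the classes $[w]_d$ and the consecutive overlaps $w_{k+1}\in[w_k]_{d_k}\cap[w_{k+1}]_{d_{k+1}}$ are used; what the paper's version buys is brevity, delegating the tree combinatorics to a standard fact. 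Your closing remarks also agree with the paper's proof: canonicity plays no role here, and your self-warning about fixing a common point $z$ in advance is exactly the trap the Helly formulation sidesteps, since it quantifies existentially over the common node rather than projecting a chosen one. The only presentational slip is that your assumption $j\neq i+1$ is invoked parenthetically midway through; like the paper, you should discharge the trivial case $j=i+1$ explicitly at the outset.
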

\begin{proof}
  If $j=i+1$, we are done. So assume that $j>i+1$, and define sets
  $W_{i}=[w_i]_{d_i}$ and $W_j=[w_j]_{d_j}$, and  
  $$V=\bigcup_{i<k<j}[w_k]_{d_k}.$$
  Note that each of $W_i$ and $W_j$ are connected subsets of $T$. By
  Condition~\textit{(c)}, also $V$ is connected.  Moreover, by
  assumption $W_i\cap W_j\neq \emptyset$, and, again by
  Condition~\textit{(c)}, both $W_i\cap V\neq\emptyset$ and $W_j\cap
  V\neq \emptyset$. Since they are subsets of a tree, their joint
  intersection $W_i\cap W_j\cap V$ cannot be empty. Hence, 
  there is an $m$ as required. 
\end{proof}

\noindent\textbf{Lemma~\ref{lem:query-automaton}.}\textit{
  There is 2ATA $\Amf_\vp$ such that for every
  $(T,\tau)\in L(\Amf_{\mn{can}})$, we have $(T,\tau)\in
  L(\Amf_\vp)$ iff $\Imc_{(T,\tau)}\models \vp$. 
  It can be constructed in exponential time in $|\vp|+|\Kmc|$ and has
  exponentially in $|\vp|+|\Kmc|$ many states.
}

\medskip For the construction of the automaton $\Amf_\varphi$, recall
that $k$ is the outdegree underlying the input trees $(T,\tau)$ and
that $\Delta$ is the finite domain (of size $2K$). We construct the
2ATA $\Amf_\vp=(Q,\Sigma,q_0,\delta,F)$ as follows. Recall that 
$\vp$ can be equivalently rewritten into a disjunction
$q_1\vee\ldots\vee q_m$ of CRPQs.
Slightly abusing notation, we sometimes treat the $q_i$ as sets of
atoms.

\smallskip

\textbf{Set of states.} States in $Q$ take four forms. The basic
states are $q_0$, and for each $1\leq i\leq m$, the CRPQ $\hat q_i$.
States of the third form are all tuples $\langle p,V_l,V_r\rangle$
such that there is a $i$ such that
\begin{itemize}

  \item[--] $p\subseteq \hat q_i$ and $I_p=\emptyset$,

  \item[--] $V_l$ is a set of expressions $(d,s)\to_\Bmf x$ such
    that $\Bmf$ is the automaton of some atom $\Bmf(t,t')$ in
    $\hat q_i$, $s\in Q_{\Bmf}$, $d\in\Delta$, 
    $x\in \mn{var}(p)$, and for each $\Bmf(t,t')$ in $\vp$,
    there is at most one such expression,

  \item[--] $V_r$ is a set of expressions $x\to_\Bmf (d,s)$ such that
    $\Bmf$ is the automaton of some atom $\Bmf(t,t')$ in $\hat q_i$,
    $s\in Q_{\Bmf}$, $d\in\Delta$, $x\in \mn{var}(p)$, and for each
    $\Bmf(t,t')$ in $\vp$, there is at most one such expression.

\end{itemize}

States of the fourth form are tuples $\langle d,s,\Bmf,d',s'\rangle$
with $d,d'\in \Delta$, $s,s'\in Q_\Bmf$, and $\Bmf$ in $\vp$.

Intuitively, a state of form $\langle p,V_l,V_r\rangle$ expresses
the following obligations: 
\begin{itemize}

  \item[--] each atom $\Bmf(t,t')$ in $p$ still has to be `processed',

  \item[--] each $(d,s)\to_\Bmf x$ means that we need to find a path
    from $s$ to a final state in $\Bmf$ which is also a path from
    $d$ to the image of variable $x$,

  \item[--] each $x\to_\Bmf(d,s)$ means that we need to find a
    path from $q_{0\Bmf}$ to $s$ in $\Bmf$ which is
    also a path from the image of variable $x$ to $d$.

\end{itemize}
A state of the second form $\langle d,s,\Bmf,d',s'\rangle$ expresses the
obligation that there is a path along which we can reach both $d'$ from
$d$ in $\Imc_{(T,\tau)}$ and $s'$ from $s$ in $\Bmf$.

\smallskip \textbf{Transition function.} As a general proviso, we set
$\delta(q,\bullet)=\mn{false}$, for all states $q\in Q$; in what
follows, we define the transitions only for symbols of the form
$\sigma=(\Imc,x)\in\Sigma$. As the transition function does not depend
on $x$, we generally write only $\Imc$.

The automaton starts off in state $q_0$ by choosing
non-deterministically a disjunct $q_i$:
$$\delta(q_0,\Imc)=\bigvee_{1\leq i\leq m} (0,\hat q_i).$$
For every state $\hat q_i$, we define a transition as follows.
Let $\Theta(\hat q_i)$ be the set of all triples $(Q_0,V_l,V_r)$ which can be the result of
the following procedure:
\begin{enumerate}

  \item initialize $V_l=V_r=Q_0:=\emptyset$;

  \item for each $\Bmf(a,b)\in \hat q_i$, choose some
    $s_f\in F_\Bmf$ and add $\langle a,s_{\Bmf
    0},\Bmf,b,s_f\rangle\in Q_0$;

  \item for all $\Bmf(x,a)\in \hat q_i$, choose some $s_f\in F_\Bmf$ and
    add $x\to_\Bmf (a,s_f)\in V_r$;

  \item for all $\Bmf(a,x)\in \hat q_i$,
    add $(a,s_{0\Bmf})\to_\Bmf x\in V_l$.

\end{enumerate}
Intuitively, we choose an accepting state in $F_\Bmf$ for every
occurrence of an individual name as the second argument in some atom
$\Bmf(t,t')$.  Moreover, obtain $p$ from $\hat q_i$ by dropping all
atoms mentioning an individual name. The transition for $\hat q_i$
is then
\begin{align*}
  \delta(\hat q_i,\Imc) = \bigvee_{(Q_0,V_l,V_r)\in \Theta(\hat
  q_i)}\big(
  (0,\langle p,V_l,V_r\rangle) \wedge \bigwedge_{q\in Q_0}
  (0,q)\big).
\end{align*}

For transitions for states of the form $\langle
d,s,\Bmf,d',s'\rangle$, we take inspiration from
Lemma~\ref{lem:correct-qa}. We start with setting $\delta(\langle
d,s,\Bmf,d',s'\rangle,\Imc)=\mn{false}$ whenever
$\{d,d'\}\not\subseteq \Delta^\Imc$, and assume from now on that
$\{d,d'\}\subseteq \Delta^\Imc$. The base case is the following: 
\begin{align*}
  \delta(\langle d,s,\Bmf,d',s'\rangle,\Imc)) = \mn{true}\quad
  \text{if $\Imc\models \Bmf_{s,s'}(d,d')$. }
\end{align*}
For the case when $\Imc\not\models \Bmf_{s,s'}(d,d')$, we include the
following transitions:
\begin{align*}
  & \delta(\langle d,s,\Bmf,d',s'\rangle,\Imc) = \bigvee_{i\in[k]}
  (i,\langle d,s,\Bmf,d',s'\rangle) \vee{} \\
  & \ \ \bigvee_{\substack{d''\in \Delta^\Imc, \\ s''\in Q_\Bmf}}
  \big( (0,\langle d,s,\Bmf,d'',s''\rangle)\wedge (0,\langle
  d'',s'',\Bmf, d',s'\rangle)\big)
  \end{align*}
  Intuitively, the automaton looks for a node to continue (first line)
  and then intersects the path non-deterministically (second line,
  c.f., Lemma~\ref{lem:correct-qa}); it
  is successful if it finds a node where the required path exists
  inside the associated interpretation (base case). 

  For states of the form $\langle p,V_l,V_r\rangle$, we start with
  including the transitions $\delta(\langle p,
  V_l,V_r\rangle,\bullet)=\mn{false}$ whenever $p=V_l=V_r=\emptyset$, and
  \begin{align*}
    \delta(\langle p,V_l,V_r\rangle,\Imc) = \mn{false} 
  \end{align*}
  whenever there is a $(d,s)\to_\Amc x\in V_l$
  or a $x\to_\Emc (d,s)\in V_r$ with
  $d\notin\Delta^\Imc$. 
  
  So assume now that $\langle
  p,V_l,V_r\rangle$ and $\Imc$ are compatible in this sense, and let 
  $S=\mn{var}(p)\cup \{x\mid (d,s)\to_\Bmf x\in V_l\}\cup \{x\mid
  x\to_\Bmf (d,s)\in V_r\}$. We denote 
  with $\Pmc(S,k)$ the set of all partitions of $S$ into $k+1$ pairwise
  disjoint, possibly empty sets $S_0,\ldots,S_k$. For each $\Sbf=(S_0,\ldots,S_k)\in \Pmc(S,k)$, define
  $\Theta(p,V_l,V_r,\Sbf)$ as the set of all tuples
  $(Q^0,p^1,V_l^1,V_r^1,\ldots,p^k,V_l^k,V_r^k)$ that
  can be obtained as the result of the following procedure.
  \begin{enumerate}

    \item for every $x\in S_0$, choose a value $d_x\in \Delta^\Imc$;

    \item for every atom $\Bmf(x,y)\in p$ with $\{x,y\}\subseteq S$
      choose $s_f\in F_\Emc$ and add $\langle d_x, s_{0\Bmf}, \Bmf,
      d_y, s_f\rangle$ to $Q^0$;

    \item for every atom $\Bmf(x,y)\in p$ with $x\in S_0$, $y\in S_i$
      for $i>0$, choose a value $d_{\Bmf y}\in \Delta^\Imc$ and a state
      $s_{y}\in Q_\Bmf$, and add $\langle
      d_x,s_{0\Bmf},\Bmf,d_{\Bmf y},s_y\rangle$ to $Q^0$ and $(d_{\Bmf
      y},s_y)\to_\Bmf y$ to $V_l^i$;

   \item for every atom $\Bmf(x,y)\in p$ with $y\in S_0$, $x\in S_i$
     for $i>0$, choose a value $d_{\Bmf x}\in \Delta^\Imc$ and states
     $s_{x}\in Q_\Bmf$, $s_f\in F_\Bmf$, and add $\langle d_{\Bmf
     x},s_x,\Bmf,d_y,s_f\rangle\in Q^0$ and
     $x\to_\Bmf (d_{\Bmf x},s_x)\in V_r^i$;

   \item for every atom $\Bmf(x,y)\in p$ with $x,y\in S_i$ for
     $i>0$, add $\Bmf(x,y)$ to $p^i$;

   \item for every atom $\Bmf(x,y)\in p$ with $x\in S_i,y\in S_j$ for
     $i\neq j$ and $i,j>0$, choose values $d_{\Bmf x},d_{\Emc y}\in \Delta^\Imc$ and
     states $s_{x},s_y$, and add $\langle d_{\Bmf
     x},s_x,\Bmf,d_{\Bmf y},s_y\rangle\in Q^0$, $(d_{\Bmf
      y},s_y)\to_\Bmf y\in V_l^j$, and 
     $x\to_\Bmf (d_{\Bmf x},s_x)\in V_r^i$;

   \item for every $(d,s)\to_\Bmf x\in V_l$:

     \begin{itemize}

       \item[--] if $x\in S_0$, choose some $s_f\in F_\Amf$ and add $\langle
	 d,s,\Amf,d_x,s_f\rangle\in Q^0$;

       \item[--] if $x\in S_i$ for $i>0$, then choose
	 $d'\in\Delta^\Imc$ and $s'\in Q_\Bmf$, and add
	 $\langle d,s,\Bmf,d',s'\rangle\in Q^0$ and
	 $(d',s')\to_\Bmf x\in V_l^i$;

     \end{itemize}

   \item for every $x\to_\Bmf (d,s)\in V_r$:

     \begin{itemize}

       \item[--] if $x\in S_0$, add $\langle
	 d_x,s_{0\Bmf},\Bmf,d,s\rangle\in Q^0$;

       \item[--] if $x\in S_i$ for $i>0$, then choose
	 $d'\in\Delta^\Imc$ and $s'\in Q_\Bmf$, and add
	 $\langle d',s',\Bmf,d,s\rangle\in Q^0$ and
	 $x\to_\Bmf(d',s')\in V_r^i$.

     \end{itemize}

 \end{enumerate}
 We then include the following transition: $\delta(\langle
 p,V_l,V_r\rangle,\Imc)$ as the following expression:
 \begin{align*}
    %
   \delta(\langle p,V_l,V_r\rangle,\Imc) = \bigvee_{\substack{\Sbf\in \Pmc(S,k),\\
   (Q^0,p^1,V_l^1,V_r^1,\ldots,p^k,V_l^k,V_r^k)\in\Theta(p,V_l,V_r,\Sbf)}}\delta^*
    %
  \end{align*}
  where $\delta^*$ abbreviates
  \begin{align*}
    \bigwedge_{q\in Q^0} (0,q) \wedge \bigwedge_{i=1}^k (i,\langle
    p^i,V_l^i,V_r^i\rangle).
  \end{align*}
  Finally, we define the parity acceptance condition as $F=Q$ to
  enforce that no state appears infinitely often.

  It should be clear that the number of states of the automaton is
  bounded by an exponential in $|\vp|$ and polynomially in $\Delta$,
  that is, exponentially in $|\Kmc|$. Moreover, it is easy to verify
  that $\delta$ can also be computed in exponential time.  To finish
  the proof of Lemma~\ref{lem:query-automaton}, it remains to show
  correctness of the constructed automaton.

\begin{lemma} \label{lem:query-correct}
  For every $(T,\tau)\in L(\Amf_{\mn{can}})$, we have that $(T,\tau)\in
  L(\Amf_\vp)$ iff $\Imc_{(T,\tau)}\models \vp$.  
\end{lemma}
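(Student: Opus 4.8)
The plan is to prove both directions of the equivalence using Lemma~\ref{lem:witness-sequence} as the bridge between a semantic match and the combinatorial existence of witness sequences, and Lemma~\ref{lem:correct-qa} to justify correctness of the ``infix-verification'' states $\langle d,s,\Bmf,d',s'\rangle$. First I would isolate an auxiliary claim about these states: for every $(T,\tau)\in L(\Amf_{\mn{can}})$, every node $w$, and every state $\langle d,s,\Bmf,d',s'\rangle$ with $d,d'\in\Delta^{\Imc_w}$, the automaton has an accepting run from $(w,\langle d,s,\Bmf,d',s'\rangle)$ iff there is an infix of a witness sequence connecting $(d,s)$ to $(d',s')$, i.e.\ one witnessing $\Imc_{(T,\tau)}\models\Bmf_{s,s'}([w]_d,[v]_{d'})$ for some node $v$ with $d'\in\Delta^{\Imc_v}$. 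For the ``if'' direction I would build the run top-down: whenever the infix stays inside a single node's interpretation I use the base case $\Imc_w\models\Bmf_{s,s'}(d,d')$; otherwise I apply Lemma~\ref{lem:correct-qa} to find an intermediate pair $(d'',s'')$ at a node lying in the joint intersection of the source and target clusters, matching the second disjunct of the transition, navigate to the correct node via the first disjunct $\bigvee_{i\in[k]}(i,\langle d,s,\Bmf,d',s'\rangle)$, and recurse on the two halves. For the ``only if'' direction I read the path off the accepting run by induction on its structure; finiteness of the words of $L(\Bmf)$ together with the parity condition $F=Q$ guarantees that the recursion bottoms out.

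With this claim in hand, the main equivalence follows. For completeness ($\Leftarrow$), assume $\Imc_{(T,\tau)}\models\vp$ and use Lemma~\ref{lem:witness-sequence} to fix a disjunct $q_i$, a match $\pi$, and a witness sequence for every atom $\Bmf(t,t')$ in $\hat q_i$. I would then build an accepting run guided by $\pi$ and these sequences: from $q_0$ move to $\hat q_i$; at each state $\langle p,V_l,V_r\rangle$ visited at a node $w$, choose the partition $\Sbf=(S_0,\dots,S_k)$ so that $S_0$ collects the variables whose $\pi$-image lies in $[w]$ and $S_j$ the variables whose image lies in the subtree rooted at $w\cdot j$, and choose the intermediate tuples $(d,s)$ exactly where the corresponding witness sequence crosses from $[w]$ into a subtree. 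Every element of $Q^0$ and every entry placed into some $V_l^i,V_r^i$ is then an infix of one of the fixed witness sequences, hence dischargeable by the auxiliary claim. Since $\pi$ and all witness sequences are finite, every variable is instantiated after finitely many steps, so no $\langle p,V_l,V_r\rangle$ state recurs forever and $F=Q$ is satisfied. For soundness ($\Rightarrow$), I would run this in reverse: from an accepting run read off the guessed $q_i$ and define $\pi$ by assigning each variable the element $[w]_{d_x}$ at the node where it is instantiated (which, by $F=Q$, is reached after finitely many steps and is unique); the discharged $\langle d,s,\Bmf,d',s'\rangle$ obligations supply, via the auxiliary claim, infixes that concatenate along the $V_l$/$V_r$ bookkeeping into a full witness sequence for each atom of $\hat q_i$, so Lemma~\ref{lem:witness-sequence} yields $\Imc_{(T,\tau)}\models\vp$.

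The main obstacle I anticipate is the bookkeeping in the completeness direction: one must check that the guesses prescribed by the transition function---the variable partition $\Sbf$, the intermediate $(d,s)$ tuples for atoms straddling a subtree boundary, and the propagated $V_l^i,V_r^i$---can be made simultaneously consistently with one fixed match $\pi$ and one fixed family of witness sequences, so that each spawned obligation really is an infix of some witness sequence and each variable is instantiated in exactly one subtree. Connectedness of the sets $[w]_d$ (condition~\textit{(c)} of Lemma~\ref{lem:witness-sequence}) together with the split guarantee of Lemma~\ref{lem:correct-qa} are precisely what make these choices coherent, and carefully tracking that each witness sequence is cut at the unique node where it leaves a cluster is the delicate part, rather than the routine automaton-theoretic verification of each individual transition clause.
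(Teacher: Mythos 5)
Your proposal is correct and follows essentially the same route as the paper's proof: both directions use Lemma~\ref{lem:witness-sequence} as the bridge between matches and witness sequences, with Lemma~\ref{lem:correct-qa} justifying the intersection transitions for the states $\langle d,s,\Bmf,d',s'\rangle$ and the parity condition $F=Q$ ensuring termination. Your only deviation is presentational: you factor the handling of the infix-verification states into a standalone claim, whereas the paper carries the same content through the invariant $(\ast)$ and invariants (I1)--(I4) woven directly into the run construction and the sequence-refinement step.
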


\begin{proof}
  $(\Rightarrow)$
  Assume some accepting run of $\Amf_\vp$ on $(T,\tau)$. Let
  $\hat q_i$ be the successor state of $q_0$ in the accepting run.
  Moreover, define a mapping $\pi$ by taking: 
  \begin{itemize}

    \item[--] $\pi(a)=[\varepsilon]_{a}$ for all $a\in
      \mn{ind}(q_i)$;

    \item[--] $\pi(x)=[w]_{d}$, if the automaton visits $w$ in
      some state $\langle p, V_l,V_r\rangle$ and selects $S_0$ with $x\in
      S_0$ and $d_x=d$.

  \end{itemize}
  Note that $w$ and $d$ are uniquely defined by the construction of
  $\Amf_\vp$. In particular, the definition of the transitions for
  states of the form $\langle p, V_l, V_r \rangle$ makes sure that
  each variable $x$ is instantiated precisely once, and thus in a
  unique world $w_x$ to a unique value $d_x$. We show how to read off
  from the accepting run witnessing sequences for every $\Bmf(t,t')\in
  \hat q_i$. By Lemma~\ref{lem:witness-sequence}, this implies that $\pi$ is
  a match for $q_i$ (and thus for $\varphi$) in $\Imc_{(T,\tau)}$.

  Throughout the construction we maintain the following invariant: 
  \begin{itemize}

    \item[$(\ast)$] if $(d_0,s_0),w_1,\ldots,(d_{n},s_{n})$ is the
      currently constructed sequence for $\Bmf(t,t')$, then it
      satisfies~\textit{(a)} and~\textit{(b)}. Moreover, the automaton
      visits $w_i$ in state $(d_{i-1},s_{i-1},\Bmf,d_i,s_i)$, for all
      $i\in[1,n]$.

  \end{itemize}
  Fix some $\Bmf(t,t')\in \hat q_i$. We first distinguish cases on
  whether or not $t,t'$ are constant names.
  \begin{itemize}

    \item[--] If both $t,t'$ are constant symbols, then the automaton
      visits $\varepsilon$ in state $\langle t,s_{0\Bmf},\Bmf,t',s_f\rangle$ for some
      $s_f\in F_\Bmf$. We initialize the sequence for $\Bmf(t,t')$ with
      $(t,s_{0\Bmf}),\varepsilon,(t',s_f)$. Obviously, $(\ast)$ is
      satisfied. 

    \item[--] If $t$ is a constant name and $t'$ is not, then by construction of $\Amf_\vp$, in
      particular the treatment of $V_l$ in the definition of
      $\Theta$, there is a sequence
      $(d_0,s_0),w_1,(d_1,s_1),\ldots,(d_n,s_n)$ such that $d_0=t,
      s_0=s_{0\Bmf}$, $d_n=d_x$, $w_n=w_x$, $s_n\in F_\Bmf$, and which
      additionally satisfies~$(\ast)$.

    \item[--] The case that $t'$ is a constant name and $t$ is not is
      analogous (using $V_r$).

  \end{itemize}
  Now, take an atom $\Bmf(x,y)\in \hat q_i$ with $x,y\in
  \mn{var}(\hat q_i)$. By definition of $\Amf_\vp$, there is a unique world
  $w$ and a state $\langle p,V_l,V_r\rangle$ with $\Bmf(x,y)$ such that the automaton
  visits $w$ in state $\langle p,V_l,V_r\rangle$ and selects
  $\Smf$ with $x\in S_i$ and $y\in S_j$ for $i\neq j$. Thus, one of
  the cases 3, 4, or 6 applies.  We distinguish cases:
  \begin{itemize}

    \item[--] In case of~3, by the treatment
      of $V_l$, there is a sequence
      $(d_0,s_0),w_1,(d_1,s_1),\ldots,(d_n,s_n)$ such that $d_0=t,
      s_0=s_{0\Bmf}$, $d_n=d_x$, $w_n=w_x$, $s_n\in F_\Bmf$, and which
      additionally satisfies~$(\ast)$.

    \item[--] The case of~4 is analogous.

    \item[--] In case of~6, we know that $x\in S_i$, $y\in S_j$ for
      some $i,j>0$ and $i\neq j$. By construction, there are $d_{\Bmf x},d_{\Bmf y}\in
      \Delta^{\Imc_w}$ and states $s_{x},s_y$ and:
      \begin{enumerate}

	\item by the treatment of $V_r$ in $\Theta$: a sequence
	  $(d_0,s_0),w_1,\ldots,w_n,(d_n,s_n)$ with $d_0=d_x$,
	  $s_0=s_{0\Bmf}$, $w_1=w_x$, $d_n=d_{\Bmf x}$,
	  $s_n=s_x$, and such that, for each $i$, the automaton visits
	  $w_i$ in state $\langle
	  d_{i-1},s_{i-1},\Bmf,d_i,s_i\rangle$;

	\item by the treatment of $V_l$ in $\Theta$: a sequence
	  $(d_0',s_0'),w_1',\ldots,w_m',(d_m',s_m')$ with
	  $d_0'=d_{\Bmf y}$, $s_0'=s_y$, $d_m'=d_y$, and $s_m'\in
	  F_{\Bmf}$, and and such that, for each $i$, the automaton
	  visits $w_i'$ in state $\langle
	  d_{i-1}',s_{i-1}',\Bmf,d_i',s_i'\rangle$.

      \end{enumerate}
      We then start with the sequence
      \begin{align*}
	(d_0,s_0),\ldots, (d_n,s_n), w,
	(d_0',s_0'),\ldots,(d_m',s_m').
      \end{align*}
      This sequence satisfies~$(\ast)$ because of~1.\ and~2.\ above and
      because the automaton visits $w$ in state $\langle
      d_n,s_n,\Bmf,d_{0}',s_0'\rangle=\langle d_{\Bmf
      x},s_x,\Bmf,d_{\Bmf y},s_y\rangle$.

  \end{itemize}
  Thus, for each $\Bmf(t,t')\in \hat q_i$, we have constructed a sequence 
  satisfying~$(\ast)$. Next, we refine these sequences such that they
  also satisfy~\textit{(c)}. Let
  $(d_{i-1},s_{i-1}),w_i,(d_i,s_i)$ be an infix of the sequence
  constructed so far for some $\Bmf(t,t')\in\hat q_i$. By~$(\ast)$, we know
  that the automaton visits $w_i$ in $\langle
  d_{i-1},s_{i-1},\Bmf,d_i,s_i\rangle$. We distinguish cases:  
  \begin{itemize}

    \item[--] If the automaton accepts at this point, the sequence
      satisfies~\textit{(c)} for this $i$, and we are done.

    \item[--] If the automaton moves to some neighbor $w_i\cdot j$ with
      $j\in[k]$, then we replace $w_i$ in the sequence by $w_i\cdot
      j$. Obviously, invariant~$(\ast)$ is preserved.

    \item[--] If the automaton applies the intersection transition to
      $d''\in \Delta^{\Imc_{w_i}}$ and $s''\in Q_\Bmf$, then replace
      $(d_{i-1},s_{i-1}),w_i,(d_i,s_i)$ with 
      $(d_{i-1},s_{i-1}),w_i,(d'',s''),w_i(d_i,s_i)$. Obviously, the
      invariant~$(\ast)$ remains preserved. 

  \end{itemize}
  Because of the acceptance condition, the latter two cases apply only
  finitely often, so the process terminates with a sequence that
  satisfies~\textit{(c)} for all $i$. By
  Lemma~\ref{lem:witness-sequence}, we know that
  $\Imc_{(T,\tau)},\pi\models \Bmf(t,t')$.

  \smallskip $(\Leftarrow)$ As $\Imc_{(T,\tau)}\models \vp$, there is a match	
  $\pi$ for $\vp$ in \Imc. Thus, there is some $i$ such that
  $\Imc_{(T,\tau)},\pi\models \Bmf(t,t')$, for all $\Bmf(t,t')\in\hat
  q_i$. By Lemma~\ref{lem:witness-sequence},
  there are witnessing sequences for each $\Bmf(t,t')\in \hat q_i$
  satisfying Conditions~\textit{(a)}--\textit{(c)}. Guided by these
  sequences, we construct an accepting run of $\Amc_\vp$.
  Throughout the construction of this run, some invariants are preserved. 
  First, whenever the automaton visits a node $w\in T$ in a state $\langle
  p,V_l,V_r\rangle$, then 
  \begin{enumerate}[label=(I\arabic*),align=left,leftmargin=*]

    \item $\mn{var}(p)\cup \{x\mid (d,s)\to_\Bmf x\in V_l\}\cup
      \{x\mid x\to_\Bmf (d,s)\in V_r\}$ is the set of all variables
      $x$ such that the image of $x$ under $\pi$ is below or in $w$;

    \item $(d,s)\to_\Bmf x\in V_l$ implies $d\in
      \Delta^{\Imc_w}$ and,
      in the sequence for $\Bmf$, there is some $i$ such that
      $d_i=d$, $s_i=s$ and for all $j\geq i$, $w_j$ is below or equal
      $w$;

    \item $x\to_{\Bmf} (d,s)\in V_r$ implies $d\in\Delta^{\Imc_w}$ and
      in the sequence for $\Bmf$, there is some $i$ such that
      $d_i=d$, $s_i=s$ and for all $j\leq i$, $w_j$ is below or equal
      $w$.

  \end{enumerate}
  Moreover, if the automaton visits a node $w\in T$ in state
  $\langle d,s,\Bmf,d',s'\rangle$ then
  \begin{itemize}[align=left,leftmargin=*]

    \item[(I4)] $d,d'\in\Delta^{\Imc_w}$ and there are $i<j$ such
      that, in the witnessing sequence for $\Bmf$, we have
      $d_{i}=d$, $s_{i}=s$, $d_j=d'$, $s_j=j$, and
      $w\in [w_i]_{d}\cap [w_j]_{d'}$.

  \end{itemize}

  Throughout the definition of the run, we use $s_{\Bmf f}$ to refer
  to the state $s_n$ in the witnessing sequence for each
  $\Bmf(t,t')\in \hat q_i$.

  The automaton starts in state $q_0$ and chooses to proceed in $\hat
  q_i$.  Define $Q_0,V_l,V_r$ by taking
  \begin{align*}
    Q_0 & = \{ \langle a,s_{0\Bmf},\Bmf,b,s_{\Bmf f}\rangle \mid
  \Bmf(a,b)\in \hat q_i \} \\
    V_l & = \{ (a,s_{0\Bmf })\to_\Bmf x \mid \Bmf(a,x)\in \hat q_i\}
    \\
    V_r & = \{ x\to_\Bmf (a,s_{\Bmf f}) \mid \Bmf(x,a)\in \hat q_i\}
  \end{align*}
  and extend the run according to this (possible) choice of
  $Q_0,V_l,V_r$. The invariants are obviously true after these first
  transitions. Assume now that $\Amf_\vp$ visits $w$ in state $\langle
  p,V_l,V_r\rangle$ and let $S=\mn{var}(p)\cup \{x\mid (d,s)\to_\Bmf
  x\in V_l\}\cup \{x\mid x\to_\Bmf (d,s)\in V_r\}$. First, define a partition
  $S_0,\ldots,S_k$ as follows: $S_0$ contains all $x\in S$ such that
  $\pi(x)=[w]_d$ for some $d\in \Delta^{\Imc_w}$; denote this witness
  $d$ with $d_x$. $S_i$ contains all
  $x\in S$ such that $\pi(x)=[v]_d\in \Delta^{\Imc_v}$ for some
  $v$ in the subtree rooted at $w\cdot i$.  Then, define a tuple
  $(Q^0,p^1,V_l^1,V_r^1\ldots,\ldots,V_l^k,V_r^k)$ as follows:
  \begin{itemize}

    \item[--] For $\Bmf(x,y)\in p$ with $\{x,y\}\subseteq S_0$, add
      $\langle d_x,s_{0\Bmf },\Bmf,d_y,s_{\Bmf f}\rangle$ to
      $Q^0$.

    \item[--] Let $\Bmf(x,y)\in p$ with $x\in S_0, y\in S_i$ for
      $i>0$. Read off
      from the witnessing sequence for $\Bmf(x,y)$ the maximal $\ell$ such that
      $w_j=w$ for all $j\leq \ell$, and add $\langle d_x,s_{\Bmf
      0},\Bmf,d_{\ell+1},s_{\ell+1}\rangle\in Q^0$ and
      $(d_{\ell+1},s_{\ell+1})\to_{\Bmf} y\in V_l^i$.

    \item[--] Let $\Bmf(x,y)\in p$ with $y\in S_0$, $x\in S_i$ for
      $i>0$. Read off
      from the witnessing sequence for $\Bmf(x,y)$ the minimal $\ell$ such that
      $w_j=w$ for all $j\geq \ell$, and add $\langle
      d_\ell,s_\ell,\Bmf,d_{y},s_{n}\rangle\in Q^0$ and $x\to_\Bmf
      (d_\ell,s_\ell)\in V_r^i$.

    \item[--] Let $\Bmf(x,y)\in p$ with $x,y\in S_i$ for $i>0$. Then
      add $\Bmf(x,y)\in p^i$.

    \item[--] Let $\Bmf(x,y)\in p$ with $x\in S_i,y\in S_j$ for
      $i,j>0$ and $i\neq j$. By definition of $S_i$ and $S_j$, there
      have to be indices $l<u\in[0,n]$ such that $w_m$ is below $w$,
      for all $m<l$ and all $m\geq l$.  Add $\langle
      d_{l},s_l,\Bmf,d_{u},s_{u}\rangle$ to $Q^0$,
      $(d_{d_{u},s_{u}})\to_\Bmf y$ to $V_l^j$, and
      $x\to_\Bmf(d_l,s_l)$ to $V_r^i$.

   \item[--] Let $(d,s)\to_\Bmf x\in V_l$. We distinguish two cases: 
      \begin{itemize}

	\item if $x\in S_0$, then add $\langle d,s,\Bmf,d_x,d_{\Bmf
	  f}\rangle$ to $Q^0$;

	\item if $x\in S_i$ for $i>0$, then by invariant (I2), there
	  is some $\ell$ such that $d_\ell=d$, $s_\ell=s$, and for all
	  $j> \ell$,
	  $w_j$ is below or equal $w$. By Condition~\textit{(c)},
	  there have to be $j>\ell$ and a successor $v$ of $w$ such that
	  $d_j\in\Delta^{\Imc_w}\cap\Delta^{\Imc_v}$, and for all
	  $j'>j$, we have that $w_j$ is below or equal $v$. Add
	  $\langle d,s,\Bmf,d_j,s_j\rangle$ to $Q^0$ and
	  $(d_j,s_j)\to_\Bmf x$ to $V_l^i$.

      \end{itemize}

    \item[--] Let $x\to_\Bmf (d,s)\in V_r$. We distinguish two cases: 
      \begin{itemize}

	\item if $x\in S_0$, then add $\langle d_x,d_{\Bmf
	  0},\Bmf,d,s\rangle$ to $Q^0$;

	\item if $x\in S_i$ for $i>0$, then by invariant (I3), there
	  is some $\ell$ such that $d_\ell=d$, $s_\ell=s$, and for all
	  $j<\ell$, $w_j$ is below or equal $w$. By
	  Condition~\textit{(c)}, there have to be $j<\ell$ and a
	  successor $v$ of $w$ such that
	  $d_j\in\Delta^{\Imc_w}\cap\Delta^{\Imc_v}$, and for all
	  $j'<j$, we have that $w_j$ is below or equal $v$. Add
	  $\langle d_j,s_j,\Bmf,d,s\rangle$ to $Q^0$ and
	  $(d_j,s_j)\to_\Bmf x$ to $V_r^i$.

      \end{itemize}
 
  \end{itemize}

  We then extend the run by the constructed tuple in the
  non-deterministic choice in the definition of $\delta(\langle
  p,V_l,V_r\rangle,(\Imc,x))$. It is routine to verify that the
  invariants remain true.

  It thus remains to show how to complete the run when the automaton visits
  a node $w$ in state $\langle d,s,\Bmf,d',s'\rangle$. By~(I4), we know
  that there are $i<j$ such that, in the witnessing sequence for
  $\Bmf$, we have $d_{i}=d$, $s_{i}=s$, $d_j=d'$, $s_j=j$,
  and $w\in [w_i]_{d_{i}}\cap [w_j]_{d_j}$.

  By Lemma~\ref{lem:correct-qa}, either $j=i+1$ or there is an $i<m<j$
  such that there is some $\hat w\in [w_m]_{d_m}\cap [w_i]_{d_i}\cap
  [w_j]_{d_j}$. In the first case, we extend the run such that the
  automaton visits $w_j$ in $\langle d,s,\Bmf,d',s'\rangle$ and
  accepts, because of~(c). Otherwise, we extend the run by navigating
  the automaton in state $\langle d,s,\Bmf,d',s'\rangle$ to node $\hat
  w$, and apply the intersection transition to $d''=d_m$ and
  $s''=s_m$. It should be clear that~(I4) remains preserved. 
  
  Since the witness sequences are finite, this process terminates
  after a finite number of steps and the constructed run thus
  satisfies the parity condition.  
\end{proof}

%

\end{document}